\newif\ifapp
\newcommand{\lsp}{\hspace{0.1em}}
\newcommand{\equad}{\,\hphantom{=}\,}
\newcommand{\bfx}{\mathbf{x}}
\newcommand{\bfy}{\mathbf{y}}
\newcommand{\bbD}{\mathbb{D}}
\newcommand{\bbE}{\mathbb{E}}
\newcommand{\bbN}{\mathbb{N}}
\newcommand{\bbR}{\mathbb{R}}
\newcommand{\bbS}{\mathbb{S}}
\newcommand{\bbW}{\mathbb{W}}
\newcommand{\caC}{\mathcal{C}}
\newcommand{\caE}{\mathcal{E}}
\newcommand{\caH}{\mathcal{H}}
\newcommand{\caI}{\mathcal{I}}
\newcommand{\caL}{\mathcal{L}}
\newcommand{\caM}{\mathcal{M}}
\newcommand{\caN}{\mathcal{N}}
\newcommand{\caO}{\mathcal{O}}
\newcommand{\caR}{\mathcal{R}}
\newcommand{\caS}{\mathcal{S}}
\newcommand{\caT}{\mathcal{T}}
\newcommand{\caU}{\mathcal{U}}
\newcommand{\caV}{\mathcal{V}}
\newcommand{\caW}{\mathcal{W}}
\newcommand{\rmA}{\mathrm{A}}
\newcommand{\rmB}{\mathrm{B}}
\newcommand{\rmC}{\mathrm{C}}
\newcommand{\rmH}{\mathrm{H}}
\newcommand{\rmU}{\mathrm{U}}
\newcommand{\rmW}{\mathrm{W}}
\newcommand{\rmCM}{\mathrm{CM}}
\newcommand{\rmRM}{\mathrm{RM}}
\newcommand{\SN}{\mathrm{SN}}
\newcommand{\blambda}{\boldsymbol{\lambda}}
\newcommand{\tblambda}{\tilde{\boldsymbol{\lambda}}}
\newcommand{\Tomega}{\widetilde{\Omega}}
\newcommand{\mb}[1]{\mathbb{#1}}
\newcommand{\te}[1]{\text{#1}}
\newcommand{\sn}{\mathrm{SN}}
\newcommand{\hr}{\hat{\rho}}
\newcommand{\ha}{\hat{a}}
\newcommand{\hatt}{\hat{t}}
\newcommand{\Haar}{\mathrm{Haar}}
\newcommand{\UF}{\mathrm{UF}}
\newcommand{\diag}{\operatorname{diag}}
\def\<{\langle}  
\def\>{\rangle}  
\newtheorem{lemma}{Lemma}
\newtheorem{proposition}{Proposition}
\newtheorem{corollary}{Corollary}
\newtheorem{remark}{Remark}
\newcommand{\lref}[1]{Lemma~\ref{#1}}
\newcommand{\lsref}[1]{Lemmas~\ref{#1}}
\newcommand{\Lref}[1]{Lemma~\ref{#1}}
\newcommand{\thref}[1]{Theorem~\ref{#1}}
\newcommand{\thsref}[1]{Theorems~\ref{#1}}
\newcommand{\Thref}[1]{Theorem~\ref{#1}}
\newcommand{\pref}[1]{Proposition~\ref{#1}}
\newcommand{\psref}[1]{Propositions~\ref{#1}}
\newcommand{\Psref}[1]{Propositions~\ref{#1}}
\newcommand{\coref}[1]{Corollary~\ref{#1}}
\newcommand{\eref}[1]{Eq.~\textup{(\ref{#1})}}
\newcommand{\eqsref}[2]{Eqs.~(\ref{#1}) and (\ref{#2})}
\newcommand{\Eref}[1]{Equation~\textup{(\ref{#1})}}
\newcommand{\tref}[1]{Table~\ref{#1}}
\newcommand{\Tref}[1]{Table~\ref{#1}}
\newcommand{\Tsref}[1]{Tables~\ref{#1}}
\newcommand{\sref}[1]{Sec.~\ref{#1}}
\newcommand{\fref}[1]{Fig.~\ref{#1}}
\newcommand{\Fref}[1]{Figure~\ref{#1}}
\newcommand{\Fsref}[1]{Figures~\ref{#1}}
\newcommand{\aref}[1]{Appendix~\ref{#1}}
\newcommand{\asref}[1]{Appendices~\ref{#1}}
\theoremstyle{plain} 
\theoremstyle{plain} \newtheorem{theo}{Theorem}
\theoremstyle{plain} 
\theoremstyle{plain}
\begin{document}
	\title{Certifying entanglement dimensionality by \texorpdfstring{$k$}{\empty}-reduction moments}
	\author{Changhao Yi} 
	\email{cyi@shu.edu.cn}
	\thanks{These authors contributed equally to this work.}
	\affiliation{Department of Physics, Shanghai University, Shanghai 200444, China}
	\affiliation{State Key Laboratory of Surface Physics, Department of Physics, and Center for Field Theory and Particle Physics, Fudan University, Shanghai 200433, China}
	\affiliation{Institute for Nanoelectronic Devices and Quantum Computing, Fudan University, Shanghai 200433, China}
	\affiliation{Shanghai Research Center for Quantum Sciences, Shanghai 201315, China}
	
	\author{Xiaodi Li} 
	\email{lixiaodi@fudan.edu.cn}
	\thanks{These authors contributed equally to this work.}
	\affiliation{iFLYTEK Research, Hefei 230088, China}
	\affiliation{State Key Laboratory of Surface Physics, Department of Physics, and Center for Field Theory and Particle Physics, Fudan University, Shanghai 200433, China}
	\affiliation{Institute for Nanoelectronic Devices and Quantum Computing, Fudan University, Shanghai 200433, China}
	\affiliation{Shanghai Research Center for Quantum Sciences, Shanghai 201315, China}
	\author{Huangjun Zhu}
	\affiliation{State Key Laboratory of Surface Physics, Department of Physics, and Center for Field Theory and Particle Physics, Fudan University, Shanghai 200433, China}
	\affiliation{Institute for Nanoelectronic Devices and Quantum Computing, Fudan University, Shanghai 200433, China}
	\affiliation{Shanghai Research Center for Quantum Sciences, Shanghai 201315, China}
	\date{\today}

	\begin{abstract}
		In this paper, we combine the $k$-reduction map, the moment method, and randomized measurements into a practical protocol for certifying  the entanglement dimensionality. Our approach is based on the observation that a state with entanglement dimensionality at most $k$ must stay positive under the action of the $k$-reduction map. The core of our protocol utilizes the moment method to determine whether the $k$-reduced operator, i.e., the operator obtained after applying the $k$-reduction map on a quantum state, contains a negative eigenvalue or not. Notably, we propose a systematic method for constructing reduction moment criteria that apply  to a broader range of states, including $k$-unfaithful states, compared with fidelity-based methods. The performance of our approach gets better and better with the moment order employed, which is corroborated by extensive numerical simulations.
		To apply our approach, it suffices to implement a unitary 3-design instead of a 4-design,  which is more feasible in practice than the correlation matrix method.  In the course of study, we show that the $k$-reduction negativity, the absolute sum of the negative eigenvalues of the  $k$-reduced operator,  is monotonic under local operations and classical communication for pure states.
	\end{abstract} 
	
	\maketitle
	
	\tableofcontents
	
	\section{Introduction}
	
	Entanglement is a characteristic feature of quantum mechanics and a vital resource in quantum information processing \cite{horodecki2009quantum}. Consequently, certifying entanglement is crucial for many practical applications and has been a cornerstone of quantum information science since its early days \cite{guhne2009entanglement,friis2019entanglement}.  
	Over the past three decades, numerous entanglement criteria have emerged. Among these, the \emph{positive partial transpose} (PPT) criterion stands out due to its simplicity and powerful detection capability \cite{Pere96,Horo01,aubrun2012phase}.  However, the PPT criterion demands complete knowledge of the density matrix, which is rarely available. To remedy this problem, more practical approaches based on partial transpose moments and \emph{randomized measurements} \cite{elben2023randomized,ciesidg2024analysing} have recently been developed \cite{elben2020mixed,neven2021symmetry,YuIG21,elben2023randomized}. 
	Similar approaches can also estimate R{\'e}nyi entanglement entropy \cite{brydges2019probing}, entanglement entropy \cite{huan2020}, and entanglement negativity \cite{zhou2020single}.
	
	For many quantum information processing tasks, including quantum computation and quantum simulation, the mere presence of entanglement is usually insufficient. Instead, specific forms of entanglement possessing desired properties, such as high-dimensional entanglement, are often required to fully benchmark the capabilities of a quantum platform. In this context, high-dimensional entanglement is typically quantified by the \emph{Schmidt number}~\cite{terhal2000schmidt}. With the rapid advancement of quantum technologies, preparing sophisticated entangled states has evolved from theoretical imagination to experimental reality. Consequently, the efficient certification of high-dimensional entanglement has become a pressing challenge \cite{huang2016high}, one that remains difficult even when the density matrix is fully known.
	
	\textit{Schmidt-number witnesses} \cite{terhal2000schmidt,bavaresco2018measurements}, analogous to entanglement witnesses, are popular tools for certifying the entanglement dimensionality. Many Schmidt-number witnesses are based on fidelities with certain target states, leading to \emph{fidelity-based criteria}. A state is called \textit{$k$-unfaithful} \cite{weilenmann2020entanglement} if it cannot be certified by such Schmidt-number-$k$ witnesses; otherwise, it is \textit{$k$-faithful}. So far most laboratory protocols rely on fidelity-based criteria \cite{huang2016high,bavaresco2018measurements,morelli2023resource,li2025high}, which necessitate significant prior information about the prepared states and require the target state itself be $k$-faithful. These drawbacks impose severe limitations on practical applications and call for more powerful alternatives.
	
	An ideal method for certifying high-dimensional entanglement should be invariant under local unitary transformations and require a measurement cost no greater than that of full state tomography.
	Recently, the \emph{correlation matrix} (CM) method was  proposed  to address this challenge \cite{imai2021bound,liu2023characterizing,wyderka2023probing,liu2024bounding}. However, estimating the moments of the CM typically requires randomized measurements based on 4-designs. Unfortunately, 4-designs cannot be realized by discrete groups except for one special case, and most 4-designs known in the literature are not amenable to experimental realization \cite{gross2007evenly,ZhuKGG16,bannai2020unitary}. Although a CM-based protocol has been experimentally realized \cite{lib2024experimental}, its application in higher dimensions remains challenging.
	
	In this work, we propose a simple yet powerful approach for certifying the entanglement dimensionality that is more amenable to experimental realization. Our approach leverages the observation that the \textit{$k$-reduction map} \cite{terhal2000schmidt,sanpera2001schmidt}, a typical positive map, preserves the positivity of any density operator with Schmidt number up to $k$. A violation of this positivity condition implies that the Schmidt number of the target state is at least $k+1$. By combining the $k$-reduction map and the \textit{moment method} \cite{schmudgen2017moment,schmudgen2020lectures}, we develop a practical approach for certifying  the entanglement dimensionality by virtue of the $k$-reduction moments, the moments of the  $k$-reduced operator. These moments can be  efficiently estimated using  randomized measurements based on 3-designs, such as the \textit{Clifford group} \cite{webb2015clifford,zhu2017multiqubit}, 
	which is much more practical than previous approaches relying on 4-designs. 
	Furthermore, our approach applies to both $k$-faithful and $k$-unfaithful states, offering a significantly broader scope of applications compared with fidelity-based methods.
	
	When the first $(2D-1)$  $k$-reduction moments  are available for quantum states on a $D$-dimensional Hilbert space, our criterion is equivalent to the $k$-reduction criterion and thus can certify the Schmidt numbers of all pure states. Moreover, the Schmidt numbers of \textit{isotropic states} \cite{terhal2000schmidt} can be certified by the first three $k$-reduction moments. The superior performance of our approach is supported by rigorous theoretical analysis and extensive numerical experiments. When collective measurements are available, the $k$-reduction moments featured in our approach can be estimated much more efficiently with constant sample complexity.
	
	The rest of this paper is organized as follows. In  \sref{sec:preliminaries}, we review the definition of the Schmidt number and give an overview of the CM method. In \sref{sec:k_spectrum}, we
	clarify the properties of the $k$-reduction map, $k$-reduction negativity, and $k$-reduction criterion in comparison with the CM criterion. In \sref{sec:moment}, we review the basic idea of the moment method and propose our certification criteria using $k$-reduction moments.  In \sref{sec:performance}, we demonstrate the performance of our certification criteria through numerical simulations. In \sref{sec:moment_estimation}, 
	we present two methods  for estimating reduction moments and discuss the sample complexities. Finally, we conclude in \sref{sec:conclusion}. A summary of the main results is presented in Table \ref{tab:main}. Technical proofs are relegated to the Appendices.
	
	\begin{table*}[!htbp]
		\centering
		\caption{\label{tab:main}Comparison between different Schmidt number certification protocols for quantum states with total dimension $D$. The parameter $\chi$ in full state tomography means the algorithm is only guaranteed to work when the state has rank no larger than $\chi$. This parameter is irrelevant to the Schmidt number of the state.}
		\renewcommand\arraystretch{2.5}
		\begin{tabular}{ c  c  c  c  c }
			\hline
			\hline
			Method & \makecell{Require knowledge \\ about the target state} & \makecell{Order of \\ unitary designs} & \makecell{Sample \\ complexity}  \\
			\hline
			Fidelity-based \cite{bavaresco2018measurements} & Yes & / & $\caO(1)$  \\ 
			Correlation matrix \cite{liu2023characterizing,wyderka2023probing} & No & $\ge 4$ on $\caH_\rmA,\caH_\rmB$ & Unknown  \\ 
			\makecell{Full state tomography \\ of rank-$\chi$ states \cite{kueng2017low}} & No & 4 on $\caH_{\rmA\rmB}$ & $\caO(D^2\chi)$ \\
			\makecell{$k$-reduction moments \\ (This work)} & No & $\ge 3$ on $\caH_{\rmA\rmB}$ & \makecell{$\caO(D)$ \\ \aref{app:haar_random}}  \\
			\hline
			\hline
		\end{tabular}
	\end{table*}
	
	\section{\label{sec:preliminaries}Background}
	Given a finite-dimensional Hilbert space $\caH$, denote by $\caL(\caH)$, 
	$\caL^\rmH(\caH)$, and  $\caL_0^\rmH(\caH)$ the set of linear operators, the set of Hermitian operators, and the set of traceless Hermitian operators on $\caH$, respectively. The spectrum (with  multiplicities taken into account) and the minimum eigenvalue of a Hermitian operator $O$ are denoted by $\sigma(O)$ and $\sigma_{\min}(O)$, respectively. 
	
	\subsection{Schmidt number and Schmidt-number witnesses}
	Consider a bipartite system with  Hilbert space $\caH_{\rmA\rmB} = \caH_\rmA\otimes\caH_\rmB$. Let $d_\rmA\equiv \dim(\caH_\rmA)$, $d_\rmB\equiv\dim(\caH_\rmB)$, $D\equiv d_\rmA d_\rmB$, and $d\equiv \min\{d_\rmA,d_\rmB\}$. Without loss of generality, we assume $d_\rmA\leq d_\rmB$ throughout this paper, which means $d=d_\rmA$. 
	The \emph{Schmidt number} (also known as Schmidt rank) of a pure state  $|\psi\rangle$ in $\caH_{\rmA\rmB}$ is defined as
	\begin{equation}
		\SN(|\psi\rangle) \equiv \mathrm{rank}(\rho_{\rmA}),
	\end{equation}
	where $\rho_{\rmA}=\Tr_{\rmB}(|\psi\rangle\langle\psi|)$. 
	Note that $\SN(|\psi\rangle)=1$ iff $|\psi\>$ is a product state. The \emph{Schmidt spectrum} of $|\psi\rangle$, denoted by $\sigma(\psi)$,  is the set of eigenvalues of $\rho_{\rmA}$ (with multiplicities taken into account)   \cite{nielsen2010quantum}, that is, $\sigma(\psi)\equiv\sigma(\rho_{\rmA})$. 
	
	We assume $\{|i\>_{\rmA}\},\{|i\>_{\rmB}\}$ are the computational bases on $\caH_\rmA,\caH_\rmB$ until otherwise specified. If  $|\psi\rangle$ has Schmidt decomposition
	\begin{equation}
		|\psi\rangle = \sum_{i=0}^{d-1}\sqrt{\lambda_i}|i\>_{\rmA}\otimes|i\>_{\rmB},
	\end{equation}
	then its Schmidt spectrum reads $\sigma(\psi)=\{\lambda_i\}_{i=0}^{d-1}$. In addition, the vector $\blambda=(\lambda_0,\lambda_1,\ldots,\lambda_{d-1})$ is called the Schmidt vector of $|\psi\>$. 
	This  vector  can be regarded as
	a point on the $(d-1)$-dimensional probability simplex
	\begin{equation}\label{eq:simplex}
		\!\! \Delta_{d} \equiv \left\{(x_0,x_1,\ldots,x_{d-1}) : \sum_{i=0}^{d-1} x_i = 1, x_i \ge 0\right\}.
	\end{equation}
	If $\lambda_0=\cdots =\lambda_{r-1}=1/r$ and $\lambda_r=\cdots =\lambda_{d-1}=0$, then we get the maximally entangled state with Schmidt number $r$,
	\begin{equation}\label{eq:rank_r_maximally_entangled}
		|+_r\rangle \equiv \frac{1}{\sqrt{r}}\sum_{i=0}^{r-1}|i\>_{\rmA}\otimes|i\>_{\rmB}.
	\end{equation}

	The concept of Schmidt number can be generalized to mixed states. Denote by $\caS_r$
	the set of quantum states on $\caH_{\rmA\rmB}$ that can be expressed as  convex combinations of pure states with Schmidt numbers no larger than $r$. By definition we have the following hierarchy relation:
	\begin{align}\label{eq:schmidt_hierarchy}
		\caS_1\subset \caS_2 \subset \cdots \subset \caS_{d} = \caS(\caH_{\rmA\rmB}),
	\end{align}
	where $\caS(\caH_{\rmA\rmB})$ denotes the set of all density operators on $\caH_{\rmA\rmB}$. The Schmidt number $\SN(\rho)$ of $\rho\in \caS(\caH_{\rmA\rmB})$ is defined as the smallest integer $r$ such that $\rho\in \caS_r$. Alternatively, $\SN(\rho)$ can be expressed as follows \cite{terhal2000schmidt}:
	\begin{equation}\label{eq:def_schmidt_number_2}
		\SN(\rho) \equiv \min_{\mb{D}(\rho)}\max_{|\phi_i\> \in \mb{D}(\rho)}\SN(\ket{\phi_i}),
	\end{equation}
	where $\bbD(\rho)$ denotes  a pure-state decomposition of $\rho$. A quantum state with Schmidt number $r$ can be prepared from $|+_r\rangle$ using local operations and classical communication (LOCC), but cannot be prepared from any state with a lower Schmidt rank. Hence, the Schmidt number of a mixed state $\rho$ quantifies its entanglement dimension.
	
	Similar to entanglement witnesses, an observable $O_{\rmW}$ is termed a \textit{Schmidt-number-$k$ witness} ($k\ge 2$) \cite{sanpera2001schmidt} if $\Tr(\rho O_{\rmW}) \ge 0$ for all $\rho\in \caS_{k-1}$, and there exists at least one $\rho\in \caS_{k}$ such that $\Tr(\rho O_{\rmW}) < 0$. The most common Schmidt-number-$k$ witness is
	\begin{equation} \label{eq:common_witness}
		\frac{k-1}{d}I - |+_d\>\<+_d|.
	\end{equation}
	For more general constructions, we have the following well-known result \cite{terhal2000schmidt,weilenmann2020entanglement}. Suppose $|\psi\>\in\caH_{\rmA\rmB}$. Then
	$cI - |\psi\>\<\psi|$
	is a Schmidt-number-$k$ witness iff
	\begin{equation}
		\sn(|\psi\>) \ge k,\quad \sum_{i=0}^{k-2}\lambda^{\downarrow}_i \le c < \sum_{i=0}^{k-1} \lambda^{\downarrow}_i,
	\end{equation}
	where $\lambda^{\downarrow}_i$ is the $i$-th largest element in $\sigma(\psi)$.
	
	\subsection{Schmidt number certification via the correlation matrix}
	
	Here we review the CM method for certifying the Schmidt number \cite{imai2021bound, liu2023characterizing, wyderka2023probing}, assuming that $d_\rmA=d_\rmB=d$. The CM criterion is a weaker but more concise version of the covariance matrix criterion \cite{liu2024bounding}. Let $\{P^{(\rmA)}_i\}_{i=1}^{d^2-1}$ and $\{P^{(\rmB)}_i\}_{i=1}^{d^2-1}$ be  orthogonal operator bases of $\caL_0^\rmH(\caH_\rmA)$ and $\caL_0^\rmH(\caH_\rmB)$, respectively, that satisfy the conditions
	\begin{equation}\label{eq:OperatorBasesCon}
		\begin{aligned}
			\Tr(P^{(\rmA)}_i) &=\Tr(P^{(\rmB)}_i)  =0,\\   
			\Tr(P^{(\rmA)}_i P^{(\rmA)}_{j}) &= \Tr(P^{(\rmB)}_i P^{(\rmB)}_{j}) = d \delta_{i,j}. 
		\end{aligned}
	\end{equation}
	The CM of the state with respect to these bases is the $(d^2-1)\times (d^2-1)$ matrix with entries
	\begin{equation}
		T_{ij} \equiv \frac{1}{d}\Tr\left[\rho \left(P^{(\rmA)}_i\otimes P^{(\rmB)}_j\right)\right].
	\end{equation}
	Suppose $T$ has singular values $\{v_i\}_{i=1}^{d^2-1}$. Then its  Schatten $p$-norm reads
	\begin{equation}
		\|T\|_p \equiv \left(\sum_{i=1}^{d^2-1}v_i^p\right)^{\frac{1}{p}} =\left\{\Tr\Bigl[(TT^\top)^{\frac{p}{2}}\Bigr]\right\}^{\frac{1}{p}}.
	\end{equation}
	The CM criterion states that \cite{liu2023characterizing, wyderka2023probing}
	\begin{equation}\label{eq:correlation_matrix}
		\text{ if } \SN(\rho) \le k,\text{ then } \|T\|_1 \le k- \frac{1}{d}.
	\end{equation}
	Note that this criterion is independent of the specific choices of local operator bases. 
	If the second inequality in \eref{eq:correlation_matrix} is violated, then we can conclude that $\SN(\rho)$ is larger than $k$.
	
	In practice, it is difficult to estimate the singular values of $T$ and the 1-norm $\|T\|_1$. Nevertheless, the second and fourth moments of the singular values, that is, $(\|T\|_2^2, \|T\|_4^4)$, 
	can be estimated from the second and fourth order Haar-randomized correlators $\caC^{(n)}$ constructed using a special observable $O$:
	\begin{gather}\label{eq:correlator_moment}
		\!\!\caC^{(n)} \equiv \int dU_{\rmA} \int dU_{\rmB} \left\<U_{\rmA}OU_{\rmA}^\dag\otimes U_{\rmB} O U_{\rmB}^\dag\right\>_\rho^n,
	\end{gather}
	where $\<O\>_\rho \equiv \Tr(\rho O)$. More precisely, we have the following relations:
	\cite{imai2021bound}:
	\begin{equation}
		\begin{aligned}
			\caC^{(2)} &= \frac{d^2}{(d-1)^2}\|T\|_2^2,\\
			\caC^{(4)} &= \frac{2d^4}{3(d-1)^4}\|T\|_4^4 + \frac{1}{3}\left(\caC^{(2)}\right)^2.
		\end{aligned}
	\end{equation}
	Define the region
	\begin{equation}
		\caT_k \equiv \left\{\left(\|T\|_2^2,\|T\|_4^4\right): \|T\|_1 \le k-d^{-1}\right\},
	\end{equation}
	whose  boundary can be determined by solving an optimization problem. 
	Then a moment-based CM criterion can be formulated as follows:
	\begin{equation}\label{eq:CMM4}
		\text{ if } \SN(\rho) \le k,\text{ then } \left(\|T\|_2^2,\|T\|_4^4\right)\in \caT_k.
	\end{equation}
	This criterion is referred to as the \textit{moment-based CM criterion} henceforth.

	In addition, by virtue of the H\"older inequality we can deduce that
	\begin{align}
		\!\!	\|T\|_2^2 &=\sum_{i=1}^{d^2-1}v_i^{\frac{2}{3}} v_i^{\frac{4}{3}}
		\leq \left(\sum_{i=1}^{d^2-1}v_i \right)^{\frac{2}{3}} \left(\sum_{i=1}^{d^2-1} v_i^4\right)^{\frac{1}{3}}\nonumber\\
		&=\|T\|_1^{\frac{2}{3}}\|T\|_4^{\frac{4}{3}},
	\end{align}
	which implies that
	\begin{align}
		\|T\|_1\geq \frac{\|T\|_2^3}{\|T\|_4^2}.
	\end{align}
	In conjunction with \eref{eq:correlation_matrix} we can deduce an alternative moment-based criterion
	\begin{equation}\label{eq:CMM4b}
		\text{ if } \SN(\rho) \le k,\text{ then } \frac{\|T\|_2^3}{\|T\|_4^2} \le k- \frac{1}{d},
	\end{equation}
	which is potentially weaker but simpler than the counterpart in \eref{eq:CMM4}.

	Currently, it is still quite challenging to apply the moment-based criteria in real experiments, although it is easier than the original CM criterion. To estimate the fourth order correlator $\caC^{(4)}$ experimentally, we need to implement unitary transformations in  a unitary $4$-design.
	Unfortunately, the Clifford group widely used in quantum information processing  fails to be a unitary $4$-design \cite{webb2015clifford,zhu2017multiqubit,ZhuKGG16}, and it is very challenging to sample from a unitary 4-design in experiments. Therefore, it is desirable to construct alternative certification protocols that do not rely on unitary 4-designs, which is a main motivation behind the current work. 
	
	\section{\label{sec:k_spectrum}Schmidt number certification via the \texorpdfstring{$k$}{empty}-reduction map}
	
	\subsection{\texorpdfstring{$k$}{empty}-positive maps}
	Given a finite-dimensional Hilbert space $\caH$, 
	an operator $O \in \caL(\caH)$ is positive (semidefinite),  denoted as $O\geq 0$, if $\langle\psi|O|\psi\rangle \ge 0$ for all $|\psi\rangle \in \caH$. A linear map $\caM:\caL(\caH) \to \caL(\caH)$ is \emph{positive} if 
	\begin{equation}
		\caM(O) \geq 0 \quad \forall\, O\geq 0.
	\end{equation}
	A linear map $\caM : \caL(\caH) \to \caL(\caH)$ is  \emph{$k$-positive} if  the map $\caI_k \otimes \caM$ is positive, where $\caI_k$ is the identity map on an auxiliary Hilbert space of dimension $k$ \cite{choi1972positive, takasaki1983geometry, tomiyama1985geometry, chruscinski2014entanglement}. 
	
	Next, we turn to a bipartite system with Hilbert space $\caH_{\rmA\rmB}=\caH_\rmA\otimes \caH_\rmB$ and explain the applications of $k$-positive  maps in certifying high-dimensional entanglement.  Let $\caI_\rmA$ be the identity map on $\caL(\caH_\rmA)$. 
	A linear map $\caM$ on $\caL(\caH_\rmB)$ is $k$-positive iff $\caI_\rmA\otimes\caM(|\psi\rangle\langle\psi|)$ is positive  for every pure state $|\psi\rangle \in \caH_{\rmA\rmB}$ with $\sn(|\psi\>)\leq k$ \cite{terhal2000schmidt, VidaW02}. Then we also have $\caI_\rmA\otimes\caM (\rho) \geq 0$ for every mixed state $\rho$ with $\SN(\rho)\leq k$, but the converse is not necessarily true. Therefore, if $\caI_\rmA\otimes\caM (\rho)$ is not positive, then we can conclude that $\SN(\rho)> k$. This fact is instrumental to constructing protocols for certifying high-dimensional entanglement.

	\subsection{\label{sec:kReductionC}\texorpdfstring{$k$}{empty}-reduction map and \texorpdfstring{$k$}{empty}-reduction criterion}
	
	One of the most important and common $k$-positive maps is the $k$-reduction map \cite{takasaki1983geometry, tomiyama1985geometry}, which is defined as follows:
	\begin{equation}
		\caR_k(O) \equiv k\Tr(O)I - O.
	\end{equation}
	Here we are mainly interested in the map
	$\caI_\rmA\otimes\caR_k$ acting on $\caL(\caH_\rmA\otimes \caH_\rmB)$. 
	To simplify the following discussion, the expression $\caI_\rmA\otimes\caR_k(\rho)$ 
	will be abbreviated as  $\caR_k(\rho)$ and called the \emph{$k$-reduced operator} associated with $\rho$. When $\rho=|\psi\>\<\psi|$ is a pure state, $\caR_k(\rho)$ can further be abbreviated as $\caR_k(\psi)$.  
	Our approach for certifying the Schmidt number is based on the following well-known result \cite{terhal2000schmidt}: 
	\begin{proposition}\label{prop:positive_k_reduction}
		If $\SN(\rho)\le k$, then 
		\begin{equation}\label{eq:k_reduction_criterion}
			\caR_k(\rho)=k\rho_{\rmA}\otimes I_{\rmB}-\rho \geq 0.
		\end{equation}
		When $\rho$ is a pure state, the condition $\SN(\rho)\le k$ is also necessary to guarantee \eref{eq:k_reduction_criterion}. 
	\end{proposition}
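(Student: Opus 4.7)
My plan is to reduce the first assertion to pure states by convexity and then handle both directions for pure states via a single spectral calculation based on the Schmidt decomposition.

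First, I would invoke the definition in \eqref{eq:def_schmidt_number_2}: whenever $\SN(\rho)\le k$, there exists a pure-state decomposition $\rho=\sum_i p_i\ket{\psi_i}\bra{\psi_i}$ with $\SN(\ket{\psi_i})\le k$ for every $i$. Since $\caR_k$ is linear and convex combinations of positive operators are positive, it suffices to verify $\caR_k(\ket{\psi}\bra{\psi})\ge 0$ for every pure state with Schmidt number at most $k$. So the real content of the proposition lies in the pure-state case.

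Next, fix a pure state with Schmidt decomposition $\ket{\psi}=\sum_{i=0}^{r-1}\sqrt{\lambda_i}\ket{i}_A\ket{i}_B$, where $r=\SN(\ket{\psi})$ and $\lambda_i>0$. I would split $\caH_A$ and $\caH_B$ each into $\caH_X^1:=\spa\{\ket{i}_X:i<r\}$ and its orthogonal complement $\caH_X^2$ ($X=A,B$), which induces a four-block decomposition of $\caH_{AB}$. A short check shows that both $\rho_A\otimes I_B$ and $\ket{\psi}\bra{\psi}$ are block diagonal in this decomposition, hence so is $\caR_k(\ket{\psi}\bra{\psi})$. On every block other than $\caH_A^1\otimes\caH_B^1$ the operator collapses to either $k\rho_A\otimes I_B$ or zero, both manifestly non-negative. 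Within $\caH_A^1\otimes\caH_B^1$, a further orthogonal split into the ``diagonal'' subspace $\caH_{\mathrm{diag}}:=\spa\{\ket{i}_A\ket{i}_B:i<r\}$ and its complement isolates the only place where $\ket{\psi}\bra{\psi}$ contributes; the complement again gives $k\rho_A\otimes I_B\ge 0$.

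The crux is the $r\times r$ restriction $M:=kD-\ket{\psi}\bra{\psi}$ of $\caR_k(\ket{\psi}\bra{\psi})$ to $\caH_{\mathrm{diag}}$, where $D=\diag(\lambda_0,\ldots,\lambda_{r-1})>0$. I would apply the congruence
\begin{equation*}
D^{-1/2}MD^{-1/2}=kI_r-\ket{\mathbf{1}}\bra{\mathbf{1}},\qquad \ket{\mathbf{1}}:=D^{-1/2}\ket{\psi}=\sum_{i<r}\ket{i}_A\ket{i}_B,
\end{equation*}
and observe that $\langle\mathbf{1}|\mathbf{1}\rangle=r$, so the right-hand side has eigenvalues $k-r$ (simple) and $k$ (with multiplicity $r-1$). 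By Sylvester's law of inertia, $M$ has the same signature: non-negative exactly when $r\le k$, and carrying one strictly negative eigenvalue whenever $r>k$. This simultaneously establishes $\caR_k(\ket{\psi}\bra{\psi})\ge 0$ for $\SN(\ket{\psi})\le k$ and the converse for pure states. The only step demanding genuine care is the block-diagonality bookkeeping that pins the possible negative eigenvalues to the $r$-dimensional diagonal block; once that is in place, the congruence step is a one-line identification of the spectrum.
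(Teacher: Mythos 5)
Your argument is correct and is essentially the route the paper takes: the convexity reduction to pure states is the standard $k$-positivity argument of Sec.~\ref{sec:k_spectrum}, and your congruence $D^{-1/2}MD^{-1/2}=kI_r-\ket{\mathbf{1}}\bra{\mathbf{1}}$ together with Sylvester's law of inertia is exactly the paper's factorization $\Omega_k(\blambda)=\Lambda^{1/2}\Omega'\Lambda^{1/2}$ with $\Omega'_{ij}=k\delta_{ij}-1$ used in Lemma~\ref{lem:thetaOmegalambda}, combined with the block decomposition $\caR_k(\psi)=R_1+R_2$ in the proof of Theorem~\ref{thm:pure_state}. The paper itself only cites Proposition~\ref{prop:positive_k_reduction} as known, but the spectral facts it proves for Theorem~\ref{thm:pure_state} coincide with your computation, so nothing further is needed.
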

	If the condition in \eref{eq:k_reduction_criterion}, known as  the $k$-\emph{reduction condition},  is violated, 
	then we can conclude that $\te{SN}(\rho)> k$. The basic idea of this criterion is analogous to the famous PPT criterion. By considering the expectation value of the $k$-reduced operator with respect to the maximally entangled state $|+_d\rangle$ we can deduce that 
	\begin{equation}
		\text{if } \sn(\rho) \le k, \quad \text{ then } \langle +_d|\rho|+_d\rangle \le \frac{k}{d},
		\label{eq:fidelitybased}
	\end{equation}
	which reproduces a fidelity-based criterion.

	By analogy with the entanglement negativity 
	\cite{VidaW02}, here we  define the \emph{$k$-reduction negativity} of $\rho$, denoted by $\caN_k(\rho)$, as the absolute value of the sum of the negative eigenvalues of $\caR_k(\rho)$. When $\rho=|\psi\>\<\psi|$ is a pure state, $\caN_k(\rho)$ is abbreviated as $\caN_k(\psi)$. Alternatively, $\caN_k(\rho)$ can be expressed as follows:
	\begin{align}
		\caN_k(\rho) & = \frac{1}{2}\left(\|\caR_k(\rho)\|_1 - \Tr[\caR_k(\rho)]\right) \nonumber\\
		& = \frac{1}{2}\left(\|\caR_k(\rho)\|_1 -kd_\rmB+1 \right). \label{eq:k_negativity}
	\end{align}
	It measures the degree to which the $k$-reduction criterion is violated. With this definition, the $k$-reduction criterion can also be formulated as follows:
	\begin{equation}
		\text{if } \SN(\rho) \le k, \quad \text{ then }\caN_k(\rho) = 0.
	\end{equation}
	In other words, if $\caN_k(\rho) > 0$, then we can deduce that $\SN(\rho) > k$.

	\subsection{\texorpdfstring{$k$}{empty}-reduction negativity of pure states}\label{sec:pure_state_negativity}
	The $k$-reduction negativity plays an important role in certifying high-dimensional entanglement as we have seen in \sref{sec:kReductionC}. Here we determine the $k$-reduction negativity of a general bipartite pure state $|\psi\>$ in $\caH_{\rmA\rmB}$ and clarify its basic properties. To this end, we need to introduce some additional notations and results. 
	
	Given two real vectors  $\bfx$ and $\bfy$ in $\bbR^d$, we can rearrange the entries of each vector in nonincreasing order to obtain $\bfx^{\downarrow} = (x'_{0},x'_{1},\ldots, x'_{d-1})$ and $\bfy^{\downarrow} = (y'_{0},y'_{1},\ldots, y'_{d-1})$. If
	\begin{equation}
		\sum_{i=0}^{\ell-1}x'_{i} \ge \sum_{i=0}^{\ell-1}y'_{i}\quad \forall \ell,
	\end{equation}
	and the inequality is saturated when $\ell = d$, then we say $\bfx$ \emph{majorizes} $\bfy$, denoted by $\bfx \succ \bfy$ or $\bfy\prec\bfx$. A function $F$  on a subset of $\bbR^d$ is \emph{Schur concave} if the condition $\bfx \succ \bfy$ implies that $F(\bfx) \le F(\bfy)$ \cite{roberts1993convex,nielsen2010quantum}.

	Given $\blambda \in \Delta_d$, let $\Omega_k(\blambda)$ be the $d\times d$ matrix with entries 
	\begin{align}\label{eq:Omegaklambda}
		\Omega_k(\blambda)_{ij}=k\lambda_i \delta_{ij}-\sqrt{\lambda_i\lambda_j}
	\end{align}
	and define
	\begin{align}\label{eq:thetaklambda}
		\theta_k(\blambda)\equiv\max\{0,-\sigma_{\min}(\Omega_k(\blambda))\}. 
	\end{align}
	This function is interesting because of its close relation with the $k$-reduction negativity as we shall see shortly. The basic properties of $\Omega_k(\blambda)$ and $\theta_k(\blambda)$ are summarized in \lsref{lem:thetaOmegalambda} and \ref{lem:Tomega} in \aref{app:kRNegativitySN}. 
	
	Now, we can  clarify the basic properties of the $k$-reduced operator $\caR_k(\psi)$ 
	as summarized in the following theorem and proved in \aref{app:kRNegativitySN}. Part of the results has been proved in previous works \cite{tomiyama1985geometry,terhal2000schmidt}.

	\begin{theo}\label{thm:pure_state}
		Suppose  $|\psi\>\in  \caH_{\rmA\rmB}$ is a pure state with Schmidt vector $\blambda$, Schmidt number $r$, and $\tilde{r}$ distinct nonzero Schmidt coefficients. Then the $k$-reduced operator $\caR_k(\psi)$ has spectrum
		\begin{align}\label{eq:RkrhoSpectrum}
			\sigma(\caR_k(\psi))&=\sigma(\Omega_k(\blambda)) \cup [k \sigma(\psi)]^{\times (d_\rmB-1)},   
		\end{align}
		where the superscript $d_\rmB-1$ denotes the multiplicity. In addition, $\caR_k(\psi)$ has $2\tilde{r}$  distinct nonzero eigenvalues when $k\neq r$ and $2\tilde{r}-1$  distinct nonzero eigenvalues when $k= r$.  If $k \ge r$, then $\Omega_k(\blambda)\geq 0$ and  $\caR_k(\psi)\geq 0$.  If instead $1 \le k<r$, then both $\Omega_k(\blambda)$ and  $\caR_k(\psi)$ have exactly one negative eigenvalue. 
	\end{theo}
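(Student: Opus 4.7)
My plan is to diagonalize $\caR_k(\psi)$ by exploiting the Schmidt decomposition together with a rank-one perturbation analysis. Writing $|\psi\rangle=\sum_{i=0}^{d-1}\sqrt{\lambda_i}|i\rangle_A|i\rangle_B$ in Schmidt bases and extending $\{|j\rangle_B\}_{j=0}^{d-1}$ to an orthonormal basis of $\caH_B$, set $V=\spa\{|i\rangle_A|i\rangle_B:0\le i\le d-1\}$. Since $k\rho_A\otimes I_B$ is diagonal in the product basis and $|\psi\rangle\in V$, the operator $\caR_k(\psi)$ preserves the orthogonal decomposition $\caH_{AB}=V\oplus V^\perp$. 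On $V^\perp$ the rank-one term $|\psi\rangle\langle\psi|$ annihilates every basis vector $|i\rangle_A|j\rangle_B$ with $j\ne i$, so $\caR_k(\psi)|_{V^\perp}=(k\rho_A\otimes I_B)|_{V^\perp}$ acts diagonally with eigenvalue $k\lambda_i$ on each of the $d_B-1$ vectors of the form $|i\rangle_A|j\rangle_B$ ($j\ne i$), yielding the $[k\sigma(\psi)]^{\times(d_B-1)}$ block in \eqref{eq:RkrhoSpectrum}. A direct calculation of $\langle i,i|\caR_k(\psi)|j,j\rangle=k\lambda_i\delta_{ij}-\sqrt{\lambda_i\lambda_j}$ then identifies $\caR_k(\psi)|_V$ with $\Omega_k(\blambda)$, so only the spectrum of $\Omega_k(\blambda)$ remains.

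For this, I would write $\Omega_k(\blambda)=k\Lambda-|v\rangle\langle v|$, where $\Lambda=\diag(\lambda_0,\ldots,\lambda_{d-1})$ and $|v\rangle=\sum_i\sqrt{\lambda_i}|i\rangle$. Because $|v\rangle$ lies in the support of $\Lambda$, the $(d-r)$-dimensional kernel of $\Lambda$ is untouched by the perturbation and contributes the eigenvalue $0$ with multiplicity $d-r$. On the $r$-dimensional support, split eigenvectors by whether $\langle v|x\rangle=0$. Vectors orthogonal to $|v\rangle$ inside the $\mu_j$-eigenspace of $\Lambda$ form an $(m_j-1)$-dimensional subspace (since $|v\rangle$ projects nontrivially onto each eigenspace on the support) and contribute the eigenvalue $k\mu_j$, accounting for $r-\tilde{r}$ eigenvalues in total. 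Eigenvectors with $\langle v|x\rangle\ne 0$ satisfy the secular equation
\begin{equation}\label{eq:planSecular}
f(\mu)\equiv\sum_{j=1}^{\tilde{r}}\frac{m_j\mu_j}{k\mu_j-\mu}-1=0.
\end{equation}
Because $f$ is strictly increasing on each connected piece of its domain, diverges to $+\infty$ as $\mu\uparrow k\mu_j$ and to $-\infty$ as $\mu\downarrow k\mu_j$, and tends to $-1$ at $\pm\infty$, a standard interlacing argument produces exactly $\tilde{r}$ real roots $\nu_1>\cdots>\nu_{\tilde{r}}$, with $\nu_j\in(k\mu_{j+1},k\mu_j)$ for $j<\tilde{r}$ and $\nu_{\tilde{r}}\in(-\infty,k\mu_{\tilde{r}})$, and these roots are strictly separated from every pole $k\mu_\ell$.

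The sign of the lowest root follows from evaluating $f(0)=r/k-1$: monotonicity on $(-\infty,k\mu_{\tilde{r}})$ gives $\nu_{\tilde{r}}>0$, $\nu_{\tilde{r}}=0$, or $\nu_{\tilde{r}}<0$ according as $k>r$, $k=r$, or $k<r$, while the remaining roots $\nu_1,\ldots,\nu_{\tilde{r}-1}$ are automatically positive. Combined with the nonnegative eigenvalues $k\mu_j$ and $0$ from the orthogonal class, this implies that $\Omega_k(\blambda)\ge 0$ precisely when $k\ge r$ and has exactly one negative eigenvalue when $1\le k<r$; the same dichotomy transfers to $\caR_k(\psi)$ since its $V^\perp$ block is nonnegative. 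Finally, to count distinct nonzero eigenvalues of $\caR_k(\psi)$, the $\tilde{r}$ roots $\nu_j$ and the $\tilde{r}$ values $k\mu_j$ are pairwise distinct by the strict interlacing, giving $2\tilde{r}$ distinct nonzero eigenvalues, reduced to $2\tilde{r}-1$ precisely when $\nu_{\tilde{r}}=0$, i.e.\ when $k=r$.

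The main obstacle I foresee is executing the interlacing and multiplicity bookkeeping cleanly: one must argue that each root of \eqref{eq:planSecular} is strictly separated from every pole, that the kernel of $\Lambda$ and the constraint $\langle v|x\rangle=0$ are handled on the correct subspaces, and that the combined multiplicities add up to $d\cdot d_B$ across the two blocks, especially in the presence of degenerate Schmidt spectra and when $r<d$.
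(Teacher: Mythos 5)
Your proposal is correct, and the first half of your argument --- the orthogonal splitting $\caH_{AB}=V\oplus V^\perp$ with $V=\spa\{|ii\rangle\}$, the identification of $\caR_k(\psi)|_{V^\perp}$ with $k\rho_A\otimes I_B$ (giving $[k\sigma(\psi)]^{\times(d_B-1)}$) and of $\caR_k(\psi)|_V$ with $\Omega_k(\blambda)$ --- is exactly the decomposition used in the paper. Where you diverge is in the analysis of $\Omega_k(\blambda)$ itself. The paper splits this into two lemmas: it determines the inertia (hence the count of negative eigenvalues and the rank) via the congruence $\Omega_k(\blambda)=\Lambda^{1/2}\Omega'\Lambda^{1/2}$ with $\Omega'_{ij}=k\delta_{ij}-1$ and Sylvester's law, and it obtains the distinctness and interlacing of the nonzero eigenvalues by compressing $\Omega_k(\blambda)$ to the $\tilde r\times\tilde r$ matrix $\Tomega_k(\blambda)$ and applying the Cauchy interlacing theorem. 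You instead treat $\Omega_k(\blambda)=k\Lambda-|v\rangle\langle v|$ as a rank-one perturbation of a diagonal matrix and read everything off one secular equation $\sum_j m_j\mu_j/(k\mu_j-\mu)=1$: the interlacing and strict separation from the poles, the $(m_j-1)$-fold eigenvalues $k\mu_j$ from the orthogonal class, and the sign of the bottom root from $f(0)=r/k-1$. The two routes are equivalent in content --- your secular equation is the paper's \eqref{eq:OmegaMinEig} with $\theta=-\mu$, extended to all roots rather than just the negative one --- but yours is more unified (one function $f$ delivers the inertia, the interlacing, and the distinctness simultaneously), at the cost of the multiplicity bookkeeping you flag, all of which checks out: $(d-r)+\sum_j(m_j-1)+\tilde r=d$ on the $V$ block and $d(d_B-1)$ on $V^\perp$, the roots cannot coincide with poles because $|v\rangle$ overlaps every $\Lambda$-eigenspace on the support, and $\nu_{\tilde r}=0$ exactly when $k=r$, yielding the $2\tilde r$ versus $2\tilde r-1$ dichotomy. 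The paper's route buys slightly less computation by outsourcing the interlacing to Cauchy's theorem, and its inertia argument generalizes instantly to any positive diagonal rescaling; your route has the advantage of producing the explicit characteristic equation for every eigenvalue in one stroke.
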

	
	\thref{thm:pure_state} implies that 
	\begin{align}
		\caN_k(\psi)=\max\{0,-\sigma_{\min}(\caR_k(\psi)) \} = \theta_k(\blambda).
	\end{align}
	In conjunction with \lref{lem:thetaOmegalambda} in \aref{app:kRNegativitySN}, we can further clarify the properties of  $\caN_k(\psi)$. The following theorem is also proved in 
	\aref{app:kRNegativitySN}. 
	\begin{theo}\label{thm:k_negativity}
		Suppose  $|\psi\>\in  \caH_{\rmA\rmB}$ is a pure state with Schmidt vector $\blambda$.  Then $\theta_k(\blambda)$ is Schur concave in $\blambda$ and nonincreasing in $k$. If $\sn(|\psi\>)=r$, then $\caN_k(\psi) \le 1 - k/r$, and the inequality is saturated when $|\psi\rangle$ is a maximally entangled state with $\sn(|\psi\>) = r$. 
	\end{theo}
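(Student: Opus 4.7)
The plan is to establish the three claims of \thref{thm:k_negativity} in turn. Monotonicity in $k$ is immediate: from \eqref{eq:Omegaklambda}, $\Omega_{k+1}(\blambda)-\Omega_k(\blambda)=\diag(\blambda)\geq 0$, so Weyl's monotonicity theorem gives $\sigma_{\min}(\Omega_{k+1}(\blambda))\geq\sigma_{\min}(\Omega_k(\blambda))$, hence $\theta_{k+1}(\blambda)\leq\theta_k(\blambda)$ via \eqref{eq:thetaklambda}.

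For Schur concavity, the key observation is that $\Omega_k(\blambda)=k\diag(\blambda)-\bfs\bfs^{\top}$ is a rank-one negative perturbation of the diagonal matrix $k\diag(\blambda)$, where $\bfs\equiv(\sqrt{\lambda_0},\ldots,\sqrt{\lambda_{d-1}})^{\top}$. By \thref{thm:pure_state} the claim is trivial when $k\geq\SN(\psi)=r$ (then $\theta_k\equiv 0$), so focus on $k<r$: the unique negative eigenvalue $-\theta_k(\blambda)$ is characterized by the secular equation $\sum_{i:\lambda_i>0}\lambda_i/(k\lambda_i+\theta)=1$, with uniqueness following because the left-hand side is strictly decreasing in $\theta$. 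Restricting first to the open simplex $\Delta_d^{\circ}$ of strictly positive distributions and applying the implicit function theorem yields
\begin{equation*}
\frac{\partial\theta_k}{\partial\lambda_i}=\frac{\theta_k}{(k\lambda_i+\theta_k)^2\, S},\qquad S\equiv\sum_l\frac{\lambda_l}{(k\lambda_l+\theta_k)^2}>0.
\end{equation*}
For $\lambda_i>\lambda_j$ the denominator $(k\lambda_i+\theta_k)^2$ exceeds $(k\lambda_j+\theta_k)^2$, so $\partial\theta_k/\partial\lambda_i<\partial\theta_k/\partial\lambda_j$ and the Schur-Ostrowski condition $(\lambda_i-\lambda_j)(\partial_i\theta_k-\partial_j\theta_k)\leq 0$ is satisfied. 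Together with the manifest permutation symmetry of $\theta_k$, this proves Schur concavity on $\Delta_d^{\circ}$, which extends to all of $\Delta_d$ by continuity of $\theta_k$ in $\blambda$ (inherited from continuity of the spectrum of $\Omega_k(\blambda)$) and by approximating any boundary majorization pair with pairs in $\Delta_d^{\circ}$, e.g., via convex combinations with the uniform distribution (which preserve majorization).

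Given Schur concavity, the final inequality follows quickly. When $\SN(\psi)=r$, the Schmidt vector $\blambda$ has exactly $r$ positive entries, so $\blambda\succ\blambda_0\equiv(1/r,\ldots,1/r,0,\ldots,0)$ by the standard majorization fact that the uniform distribution on a support of size $r$ is majorized by every distribution on that support. Hence $\caN_k(\psi)=\theta_k(\blambda)\leq\theta_k(\blambda_0)$. A direct computation shows the nonzero block of $\Omega_k(\blambda_0)$ equals $(k/r)I_r-(1/r)J_r$, where $J_r$ is the all-ones $r\times r$ matrix, with spectrum $\{k/r-1,\ k/r\ (\text{multiplicity }r-1)\}$; therefore $\theta_k(\blambda_0)=1-k/r$ for $k\leq r$, proving both the bound and its saturation by maximally entangled states with $\SN=r$. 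The main obstacle is the Schur concavity step, specifically the rigorous extension from $\Delta_d^{\circ}$ (where the implicit-function calculation is clean) to the full simplex $\Delta_d$, where entries can vanish and the effective support jumps; continuity of $\theta_k$ and a careful limiting argument on majorization pairs must be verified to complete this step.
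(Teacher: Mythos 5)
Your proof is correct, and its skeleton coincides with the paper's: monotonicity in $k$ from $\Omega_{k+1}(\blambda)-\Omega_k(\blambda)=\diag(\blambda)\geq 0$, Schur concavity extracted from the secular equation $\sum_i\lambda_i/(k\lambda_i+\theta)=1$ satisfied by the unique negative eigenvalue (Eq.~\eqref{eq:OmegaMinEig} of Lemma~\ref{lem:thetaOmegalambda}), and the final bound from $\blambda\succ(r^{-1},\ldots,r^{-1},0,\ldots,0)$ together with the explicit spectrum $\{k/r-1,\,(k/r)^{\times(r-1)}\}$ of the flat case. The one place where you genuinely diverge is the Schur concavity step. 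The paper works with the auxiliary function $G(y,k,\blambda)=\sum_i\lambda_i/(k\lambda_i+y)$, observes that it is a symmetric sum of concave single-variable functions of the $\lambda_i$ (hence Schur concave in $\blambda$) and nonincreasing in $y$, and concludes $\theta_k(\blambda')\geq\theta_k(\blambda)$ for $\blambda'\prec\blambda$ by a one-line comparison at $y=\theta_k(\blambda)$ — no differentiation of the implicit function is needed. You instead differentiate $\theta_k$ via the implicit function theorem and verify the Schur--Ostrowski condition, which is equally valid but forces you to (i) restrict to the open simplex where the derivative formula is clean, and (ii) patch the boundary with a limiting argument (your convex combination with the uniform distribution correctly preserves majorization, so the patch works). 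The paper's route buys a shorter argument that handles vanishing entries by simply deleting them and invoking continuity once; your route makes the mechanism more explicit — the monotone ordering of $\partial\theta_k/\partial\lambda_i=\theta_k/[(k\lambda_i+\theta_k)^2S]$ — at the cost of the extra limiting step, which you correctly identify and close. No gap.
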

	
	Suppose $|\psi\>$ and $|\psi'\>$ are two pure states in $\caH_{\rmA\rmB}$. 
	According to  Nielsen's majorization criterion  \cite{nielsen1999conditions},
	$|\psi\rangle$ can be transformed into $|\psi'\rangle$ under LOCC iff the Schmidt vector of $|\psi\rangle$ is majorized by that of $|\psi'\rangle$. In conjunction with  \thref{thm:k_negativity}, we can deduce that the $k$-reduction negativity is nonincreasing under LOCC and thus can serve as a high-dimensional entanglement monotone for pure states.
	
	By virtue of \thref{thm:k_negativity} we can further derive tight bounds on the $k$-reduction negativity $\caN_k(\rho)$ and eigenvalues of $\caR_k(\rho)$ for a general mixed state~$\rho$. 
	\begin{corollary}\label{coro:spectrum}
		Suppose $\rho \in \caS(\caH_{\rmA\rmB})$ with $d_\rmA = d_\rmB = d$. Then
		\begin{equation}
			\caN_k(\rho) \leq 1-\frac{k}{d},\quad \left(\frac{k}{d}-1\right) I\leq \caR_k(\rho) \leq  k I.
		\end{equation}
	\end{corollary}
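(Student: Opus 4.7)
The plan is to reduce the mixed-state statement to the pure-state bounds already established in \thref{thm:k_negativity}, using convexity for the negativity and elementary operator inequalities for the spectral bounds.

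First, I would observe that the $k$-reduction negativity is a convex function of $\rho$. This follows immediately from the formula $\caN_k(\rho) = \tfrac{1}{2}(\|\caR_k(\rho)\|_1 - \Tr[\caR_k(\rho)])$, since $\|\cdot\|_1$ is convex on Hermitian operators and both $\caR_k$ and the trace are linear. Next, I would combine this with \thref{thm:k_negativity}: any pure state $|\psi\>$ in $\caH_{AB}$ has $\SN(\psi)=r\leq d$, and hence $\caN_k(\psi)\leq 1-k/r\leq 1-k/d$. Taking a pure-state decomposition $\rho=\sum_i p_i|\psi_i\>\<\psi_i|$ and applying convexity,
\begin{equation}
\caN_k(\rho)\leq \sum_i p_i\caN_k(\psi_i)\leq 1-\frac{k}{d},
\end{equation}
which is the first claim.

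For the upper operator bound, I would note that $\rho\geq 0$ implies $\caR_k(\rho)=k\rho_A\otimes I_B-\rho\leq k\rho_A\otimes I_B$, and since $\rho_A$ is a density operator on $\caH_A$ we have $\rho_A\leq I_A$, so $\caR_k(\rho)\leq kI$. For the lower bound, the key observation is that the negativity controls the absolute value of the smallest eigenvalue: if $\sigma_{\min}(\caR_k(\rho))<0$, then $|\sigma_{\min}(\caR_k(\rho))|$ is one term in the sum defining $\caN_k(\rho)$, so $-\sigma_{\min}(\caR_k(\rho))\leq \caN_k(\rho)$; otherwise $\sigma_{\min}(\caR_k(\rho))\geq 0$ and the inequality is trivial. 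Combining with the bound just proved,
\begin{equation}
\sigma_{\min}(\caR_k(\rho))\geq -\caN_k(\rho)\geq \frac{k}{d}-1,
\end{equation}
which gives $\caR_k(\rho)\geq (k/d-1)I$.

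The only genuinely nontrivial input is \thref{thm:k_negativity} itself, which has already been established, so none of the steps here should present real obstacles. The mildest subtlety is verifying convexity of $\caN_k$ cleanly, and noting the monotonicity $1-k/r\leq 1-k/d$ for $r\leq d$; both are routine.
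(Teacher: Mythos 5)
Your proof is correct and follows essentially the route the paper intends: the corollary is stated as a direct consequence of \thref{thm:k_negativity} via a pure-state decomposition, and your convexity argument for $\caN_k$ (from $\caN_k(\rho)=\tfrac{1}{2}(\|\caR_k(\rho)\|_1-\Tr[\caR_k(\rho)])$ with the trace term constant) together with the elementary bounds $\rho\geq 0$, $\rho_A\leq I_A$, and $-\sigma_{\min}(\caR_k(\rho))\leq\caN_k(\rho)$ supplies exactly the missing steps. The only cosmetic caveat is that the intermediate inequality $\caN_k(\psi_i)\leq 1-k/r_i$ borrowed from \thref{thm:k_negativity} is reversed when $k>r_i$, but in that case $\caN_k(\psi_i)=0\leq 1-k/d$ directly, so the conclusion is unaffected.
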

	
	\subsection{\label{sec:depolarization}\texorpdfstring{$k$}{empty}-reduction negativity of pure states with depolarizing noise}
	As a generalization, in this section we determine the $k$-reduction negativity of an arbitrary  pure state with depolarizing noise. Suppose $|\psi\>\in \caH_{\rmA\rmB}$ is a pure state with $\sn(|\psi\>)=r$  and $\rho$ has the form
	\begin{equation}\label{eq:depolarizing}
		\rho = (1-\varepsilon)|\psi\rangle\langle \psi| + \varepsilon\frac{I}{d_\rmA d_\rmB},\quad 
		\varepsilon \in [0,1]. 
	\end{equation}
	Then 
	\begin{align}
		\caR_k(\rho) &= (1-\varepsilon)\caR_k(\psi) + \varepsilon\caR_k\left(\frac{I}{d_\rmA d_\rmB}\right)\nonumber \\
		&=(1-\varepsilon)\caR_k(\psi) + \frac{\varepsilon(kd_\rmB-1)}{d_\rmA d_\rmB}I, 
	\end{align}
	and there exists a simple linear relation between the spectrum of $\caR_k(\rho)$ and the spectrum of $\caR_k(\psi)$. Notably, the minimal eigenvalue of $\caR_k(\rho)$ equals
	\begin{align}
		(1-\varepsilon)\sigma_{\min}(\caR_k(\psi))+ \frac{\varepsilon(kd_\rmB-1)}{d_\rmA d_\rmB}.
	\end{align}
	In conjunction with \thsref{thm:pure_state} and \ref{thm:k_negativity}, it is now straightforward to deduce the following result.
	\begin{proposition}\label{prop:depolarizing}  The $k$-reduction negativity of $\rho$ in  \eref{eq:depolarizing} reads
		\begin{equation}
			\caN_k(\rho) = \begin{cases}
				0, & \varepsilon \ge \varepsilon^*,\\
				(1-\varepsilon)\caN_k(\psi) - \frac{\varepsilon(d_\rmB k-1)}{d_\rmA d_\rmB}, & \varepsilon < \varepsilon^*,
			\end{cases}
		\end{equation}
		where 
		\begin{gather}
			\varepsilon^* \equiv \frac{d_\rmA d_\rmB \caN_k(\psi) }{kd_\rmB-1+d_\rmA d_\rmB \caN_k(\psi)}.
		\end{gather}
	\end{proposition}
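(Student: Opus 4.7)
The plan is to compute $\caR_k(\rho)$ directly by linearity and reduce the problem to spectral data that has already been pinned down by \thref{thm:pure_state} and \thref{thm:k_negativity}.

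First I would use the linearity of $\caI_A\otimes\caR_k$ together with the elementary identity $\caI_A\otimes\caR_k(I/(d_Ad_B)) = (I_A/d_A)\otimes(kI_B-I_B/d_B) = [(kd_B-1)/(d_Ad_B)]\,I$ to obtain the displayed decomposition
$$\caR_k(\rho) = (1-\varepsilon)\caR_k(\psi) + \frac{\varepsilon(kd_B-1)}{d_Ad_B}\,I.$$
Because the noise term is a scalar multiple of the identity, it commutes with $\caR_k(\psi)$ and simply shifts every eigenvalue by the same amount. Hence the spectrum of $\caR_k(\rho)$ is obtained from $\sigma(\caR_k(\psi))$ by scaling by $(1-\varepsilon)$ and translating by $\varepsilon(kd_B-1)/(d_Ad_B)$; in particular $\sigma_{\min}(\caR_k(\rho)) = (1-\varepsilon)\sigma_{\min}(\caR_k(\psi)) + \varepsilon(kd_B-1)/(d_Ad_B)$, which is the identity already recorded in the excerpt.

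The substantive step is to convert $\sigma_{\min}$ into the $k$-reduction negativity, which is defined as the absolute sum of \emph{all} negative eigenvalues rather than just the smallest one. Here \thref{thm:pure_state} does the critical work: it asserts that $\caR_k(\psi)$ has at most one negative eigenvalue (exactly one when $k<r$, none when $k\ge r$). A uniform shift cannot create additional sign changes, so $\caR_k(\rho)$ likewise carries at most one negative eigenvalue, which gives
$$\caN_k(\rho) = \max\bigl\{0,\,-\sigma_{\min}(\caR_k(\rho))\bigr\}.$$
Substituting $\sigma_{\min}(\caR_k(\psi)) = -\caN_k(\psi)$ (trivial when $k\ge r$, otherwise from \thref{thm:k_negativity}) then yields
$$-\sigma_{\min}(\caR_k(\rho)) = (1-\varepsilon)\caN_k(\psi) - \frac{\varepsilon(kd_B-1)}{d_Ad_B},$$
which matches the claimed formula whenever the right-hand side is positive.

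To close, I would identify the crossover $\varepsilon^*$ by solving $(1-\varepsilon)\caN_k(\psi) = \varepsilon(kd_B-1)/(d_Ad_B)$, which rearranges to the stated expression; for $\varepsilon<\varepsilon^*$ the right-hand side above is positive and equals $\caN_k(\rho)$, while for $\varepsilon\ge\varepsilon^*$ it is nonpositive, forcing $\caN_k(\rho)=0$. The only nontrivial ingredient is the single-negative-eigenvalue clause of \thref{thm:pure_state}; once that is invoked, the remainder is routine linear algebra on a one-parameter family.
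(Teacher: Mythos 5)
Your proposal is correct and follows essentially the same route as the paper, which proves the proposition by exactly this computation: linearity gives $\caR_k(\rho)=(1-\varepsilon)\caR_k(\psi)+\varepsilon(kd_B-1)(d_Ad_B)^{-1}I$, the spectrum is an affine image of $\sigma(\caR_k(\psi))$, and the single-negative-eigenvalue clause of \thref{thm:pure_state} reduces the negativity to $\max\{0,-\sigma_{\min}\}$. Your write-up is in fact slightly more explicit than the paper's (which leaves the last step as "straightforward"), and the only nitpick is that $\sigma_{\min}(\caR_k(\psi))=-\caN_k(\psi)$ should really be $\caN_k(\psi)=\max\{0,-\sigma_{\min}(\caR_k(\psi))\}$ in the degenerate case $k\ge r$, where the claimed formula nevertheless holds trivially.
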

	
	\begin{figure}
		\includegraphics[width = 0.45\textwidth]{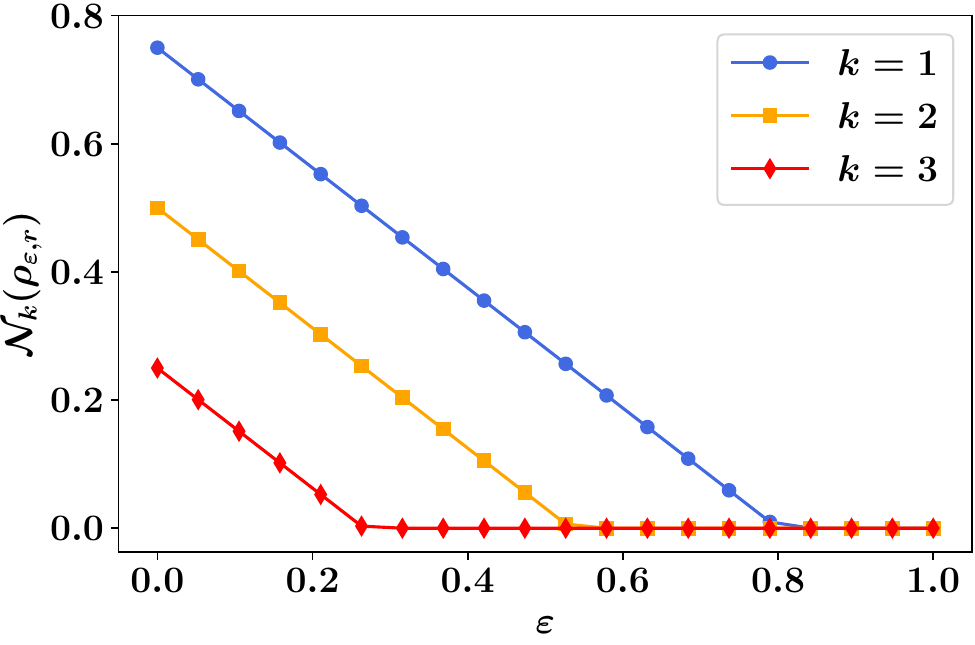}
		\caption{\label{fig:rank_r}The $k$-reduction negativity $\caN_k(\rho_{\varepsilon,r})$ as a function of $\varepsilon$ for $k = 1,2,3$. Here we set $r=d_\rmA = d_\rmB = 4$. The $k$-reduction negativity is nonincreasing in $k$ and $\varepsilon$ as predicted in \pref{prop:depolarizing}. }
	\end{figure}
	
	Next, we turn to  the special case in which
	$|\psi\rangle$ is a maximally entangled state of Schmidt rank $r$. Define
	\begin{equation}\label{eq:eps_r}
		\rho_{\varepsilon,r} \equiv (1-\varepsilon)|+_r\rangle\langle +_r| + \varepsilon\frac{I}{d_\rmA d_\rmB},\quad \varepsilon \in [0,1].
	\end{equation}
	The dependence of $\caN_k(\rho_{\varepsilon,r})$ on $k$ and $\varepsilon$ is illustrated in \fref{fig:rank_r}.
	Define $\varepsilon^{\rmRM}_c$ as the minimum value 
	of $\varepsilon\geq 0$ such that $\caN_{r-1}(\rho_{\varepsilon,r})= 0$. By virtue of \pref{prop:depolarizing} with $\caN_{r-1}(+_r)=1/r$ we can deduce that
	\begin{equation}\label{eq:epr2}
		\varepsilon^{\rmRM}_c = \left(1 + \frac{r^2 - r}{d_\rmA} - \frac{r}{d_\rmA d_\rmB}\right)^{-1}.
	\end{equation}
	As the local dimensions $d_\rmA, d_\rmB$ increase, $\varepsilon^{\rmRM}_c$ approaches 1, so the $k$-reduction criterion can certify the Schmidt number of
	$\rho_{\varepsilon,r}$ with arbitrary noise strength asymptotically.
	
	Using the $k$-reduction criterion, we can further derive the following informative lower and upper bounds for the Schmidt number of $\rho_{\varepsilon,r}$ as proved in \aref{app:proof_theo_5}.
	\begin{proposition}	\label{prop:rank_r_MEDP}
		The Schmidt number of $\rho_{\varepsilon,r}$ in \eref{eq:eps_r} satisfies
		\begin{equation}\label{eq:rank_r_MEDPSN}
			\left\lceil \frac{(1 + u)r}{1 + d ru}\right\rceil \le \SN(\rho_{\varepsilon,r}) \le \left\lceil  \frac{(1 + u)r}{1 + r^2u}\right\rceil,
		\end{equation}
		where $u=\varepsilon/[(1-\varepsilon)d_\rmA d_\rmB]$. 
		If $r \le \sqrt{d}$ and $\varepsilon < 1/2$, then $\SN(\rho_{\varepsilon,r}) = r$.
	\end{proposition}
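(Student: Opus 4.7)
The plan is to establish the lower and upper bounds separately, then verify they collapse to $r$ in the stated special case. For the lower bound, I would invoke the $k$-reduction criterion (\pref{prop:positive_k_reduction}) together with the depolarization formula of \pref{prop:depolarizing}. For $k<r$, \thref{thm:k_negativity} gives $\caN_k(|+_r\rangle)=1-k/r$ (the maximally entangled case saturates the bound). Substituting into \pref{prop:depolarizing} and expressing the condition $\varepsilon<\varepsilon^*$ in terms of $u=\varepsilon/[(1-\varepsilon)d_A d_B]$, one finds that $\caN_k(\rho_{\varepsilon,r})>0$ iff $k<r(1+u)/(1+d_B r u)$. By the $k$-reduction criterion this forces $\SN(\rho_{\varepsilon,r})>k$ for every such integer $k$, giving $\SN(\rho_{\varepsilon,r})\ge\lceil r(1+u)/(1+d_B r u)\rceil$.

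For the upper bound, I would construct an explicit convex decomposition $\rho_{\varepsilon,r}=p\,\sigma+(1-p)\,\tau$, where $\sigma$ is an isotropic state on the $r\times r$ subspace $\spa\{|ij\rangle : 0\le i,j<r\}$ with fidelity $F=\langle+_r|\sigma|+_r\rangle$, and $\tau=\sum_{i,j}c_{ij}|ij\rangle\langle ij|$ is diagonal (hence separable) in the product computational basis. Matching the off-diagonal entries $\langle ii|\rho_{\varepsilon,r}|jj\rangle$ for $i\ne j<r$ fixes $p(Fr^2-1)=(1-\varepsilon)(r^2-1)$, and the remaining coefficient-matching equations determine the $c_{ij}$. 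A short calculation, using this value of $p$ together with the algebraic identity $(Fr+1)(r-1)-(1-F)r=Fr^2-1$, shows that the two a priori distinct non-negativity conditions for the diagonal of $\tau$ inside the $r\times r$ block actually coincide, and both collapse to the single inequality $F\ge(1+u)/(1+r^2 u)$; non-negativity outside the block is immediate. Combined with the Terhal--Horodecki characterization \cite{terhal2000schmidt} that an isotropic state on $r\times r$ has Schmidt number at most $s$ iff $F\le s/r$, the constraints on $F$ are simultaneously satisfiable precisely when $s\ge r(1+u)/(1+r^2 u)$, yielding the stated upper bound.

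The tight case follows by direct substitution: $r\le\sqrt{d_A}$ gives $r(r-1)\le r^2\le d_A$, while $\varepsilon<1/2$ gives $u<1/(d_A d_B)$; together these imply $ru[(r-1)d_B-1]<1$, which forces both ceiling expressions to equal $r$ and hence $\SN(\rho_{\varepsilon,r})=r$. The main technical obstacle is identifying the right ansatz for the upper-bound decomposition and recognizing that the seemingly independent non-negativity constraints on $\tau$ collapse to a single inequality, precisely matching the denominator $1+r^2 u$ in the stated bound.
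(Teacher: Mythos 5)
Your proof is correct and follows essentially the same route as the paper's: the lower bound via \pref{prop:depolarizing} with $\caN_k(+_r)=1-k/r$, and the upper bound by splitting $\rho_{\varepsilon,r}$ into an isotropic state on the $r\times r$ block plus a separable diagonal remainder, then invoking the Terhal--Horodecki formula $\SN(\rho_F)=\lceil rF\rceil$. The only difference is presentational: you derive the decomposition by optimizing over an ansatz and showing the non-negativity constraints on $\tau$ collapse to $F\ge(1+u)/(1+r^2u)$, whereas the paper writes down the extremal decomposition directly --- your optimum, where $\tau$ carries no weight inside the $r\times r$ block, is exactly the paper's $\rho^{\mathrm{ef}}_{\varepsilon,r}$.
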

	
	\subsection{\label{sec:RMCCMCcom}Comparison with the CM criterion}
	
	In this  section we compare the $k$-reduction criterion in \eref{eq:k_reduction_criterion} with the CM criterion in \eref{eq:correlation_matrix} under various situations. From \pref{prop:positive_k_reduction}, we know that the $k$-reduction criterion can certify 
	the Schmidt number of any pure state. By contrast,  the CM criterion cannot. For example, consider the following state with Schmidt rank $r = 4$:
	\begin{equation}\label{eq:example}
		\sqrt{\frac{4}{5}}|00\rangle + \sqrt{\frac{1}{15}}|11\rangle + \sqrt{\frac{1}{15}}|22\rangle + \sqrt{\frac{1}{15}}|33\rangle,
	\end{equation}
	assuming that $d_\rmA = d_\rmB = d = 4$. The 1-norm of its CM is approximately 2.7231, which is smaller than the upper bound  $r-1-d^{-1} = 11/4$ in \eref{eq:correlation_matrix} with $k=r-1$. Thus, from the CM criterion, we can only conclude that its Schmidt number is at least 3, instead of the actual Schmidt number $4$.
	
	\begin{figure}[t]
		\centering
		\includegraphics[width = 0.45\textwidth]{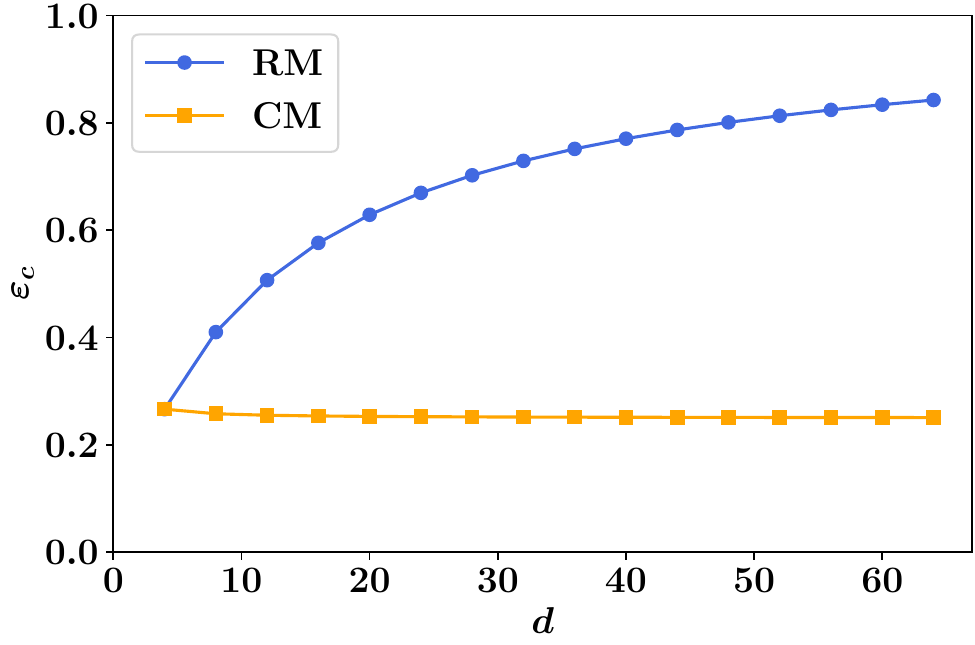}
		\caption{\label{fig:cmc_eps}Noise thresholds $\varepsilon^{\rmRM}_c$ and $\varepsilon^{\rmCM}_c$ associated with
			the $k$-reduction criterion	and CM criterion in certifying the Schmidt number of the state $\rho_{\varepsilon,r}$ in \eref{eq:eps_r} with $r=4$. Here $d_\rmA=d_\rmB=d$, and $\varepsilon^{\rmRM}_c$ is given in \eref{eq:epr2}.}	
	\end{figure}
	
	Next,  we consider the state $\rho_{\varepsilon,r}$ defined in \eref{eq:eps_r}. According to the discussion in \sref{sec:depolarization}, the $k$-reduction criterion can certify that the Schmidt number of $\rho_{\varepsilon,r}$ is $r$ when $\varepsilon$ is smaller than the threshold  $\varepsilon^{\rmRM}_c$ in \eref{eq:epr2}. As an analogy, we define $\varepsilon^{\rmCM}_c$ as the maximum value of $\varepsilon$ satisfying $\|T\|_1 \ge r-1-d^{-1}$. The relations between the two thresholds and the local dimension $d$ are illustrated in \fref{fig:cmc_eps}, which implies that the $k$-reduction criterion can tolerate much higher noise strength than the CM criterion.
	
	\begin{figure}[b]
		\centering
		\includegraphics[width = 0.45\textwidth]{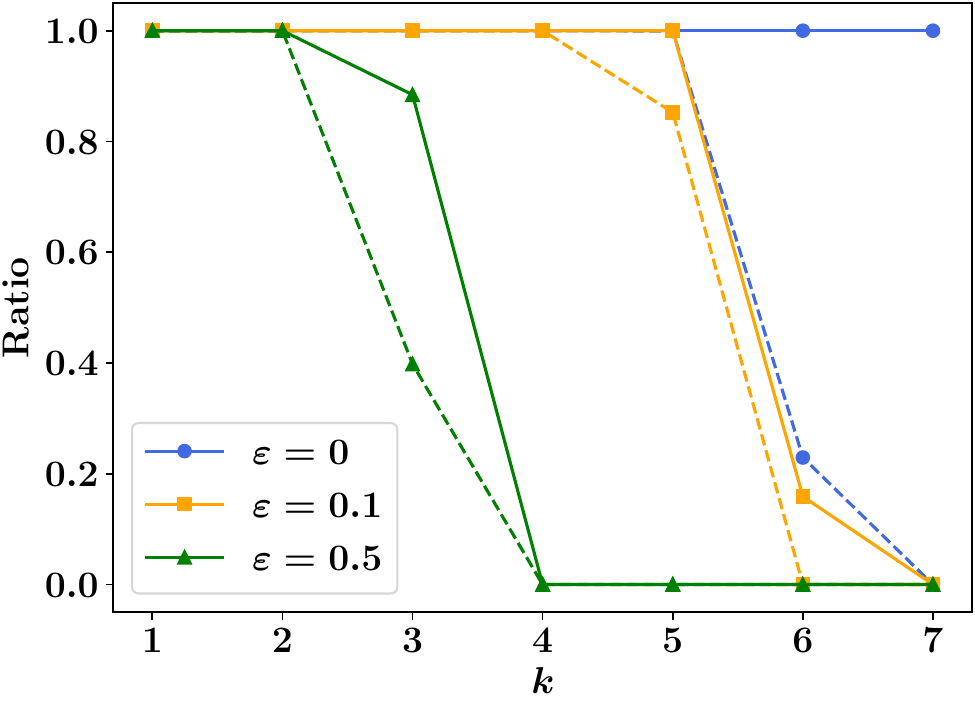}
		\caption{\label{fig:haar_random}Detection ratios of the $k$-reduction criterion and CM criterion for the ensemble of states in  \eref{eq:Haar_ensemble}. The solid lines correspond to the $k$-reduction criterion in \eref{eq:k_reduction_criterion}, while the dashed lines correspond to the CM criterion in \eref{eq:correlation_matrix}. 		
			Here $d_\rmA = d_\rmB = 8$ and three different noise strengths are indicated in the legend.  The performance of the $k$-reduction criterion is better than the CM criterion in all cases under consideration.}
	\end{figure}
	
	We then consider Haar-random pure states with depolarizing noise. Define the ensemble
	\begin{equation}\label{eq:Haar_ensemble}
		\caE_{\Haar,\varepsilon} \equiv \left\{(1-\varepsilon)|v\rangle\langle v| + \varepsilon \frac{I}{d^2} : |v\> \in \caH_{\rmA\rmB}\right\},
	\end{equation}
	where $d_\rmA = d_\rmB = d$ and $0\leq \varepsilon\leq 1$. When $\varepsilon$ is sufficiently small, a random sample $\rho$ from this ensemble almost surely has the maximum Schmidt number, that is, $\sn(\rho) = d$. To compare the performance of 
	the $k$-reduction criterion and CM criterion, \fref{fig:haar_random} illustrates their  detection ratios determined by numerical simulations as functions of $k$ for three different noise strengths $\varepsilon$. 
	Here the detection ratio of a criterion  denotes the percentage of quantum states in a given  ensemble that is certified to have $\sn(\rho) > k$ by this criterion. The $k$-reduction criterion has higher detection ratios than the CM criterion in all cases under consideration.  In other words, the $k$-reduction criterion is  more effective in certifying the Schmidt numbers of pure states with depolarizing noise.
	
	Finally, we consider random mixed states with  induced measures \cite{zyczkowski2001induced}. Let $\caH_\rmC$ be a $K$-dimensional Hilbert space. 
	Denote by $\caE_{D,K}$ the ensemble of mixed states on $\caH_{\rmA\rmB}$ induced by Haar-random pure states on $\caH_{\rmA\rmB}\otimes \caH_\rmC$.   \Tsref{tab:k_reduction_induced} and \ref{tab:correlation_matrix_induced} summarize the detection ratios of the  $k$-reduction criterion and CM criterion for various choices of the parameters $K$ and $k$ as determined by  numerical simulations with $d_\rmA=d_\rmB=16$. 
	According to the two tables, the CM criterion is stronger than the $k$-reduction criterion for large $K$, in which case the states in the ensemble $\caE_{D,K}$ tend to have low purities. In the special case $K=2$, the two criteria have comparable performance: both of them can certify that almost all states in the ensemble $\caE_{256,2}$ have Schmidt numbers at least $9$.
	
	\begin{center}
		\begin{table} 
			\caption{\label{tab:k_reduction_induced}Detection ratio of the $k$-reduction criterion in  \eref{eq:k_reduction_criterion}
				for the ensemble $\caE_{256,K}$. Here $d_\rmA=d_\rmB=16$. }
			\begin{tabular}{c|ccccc}
				\hline
				\hline
				\backslashbox{$k$}{\raisebox{-0.7ex}{$K$}} & 2 & 3 & 4 & 5 & 6 \\
				\hline
				\\[-0.8em]
				2 & 1 & 1 & 1 & 1 & 1\\
				\\[-0.8em]
				3 & 1 & 1 & 1 & 1 & 0.9977\\
				\\[-0.8em]
				4 & 1 & 1 & 1 & 0.0168 & 0\\
				\\[-0.8em]
				5 & 1 & 1 & 0.0009 & 0 & 0 \\
				\\[-0.8em]
				6 & 1 & 0.1606 & 0 & 0 &  0\\
				\\[-0.8em]
				7 & 1 & 0 & 0 & 0 &  0\\
				\\[-0.8em]
				8 & 1 & 0 & 0 & 0 &  0\\
				\\[-0.8em]
				9 & 0.0010 & 0 & 0 & 0 &  0\\
				\hline
				\hline
			\end{tabular}

			\caption{\label{tab:correlation_matrix_induced}Detection ratio of the CM criterion in \eref{eq:correlation_matrix}
				for the ensemble $\caE_{256,K}$. Here $d_\rmA=d_\rmB=16$.}			
			\begin{tabular}{c|ccccc}
				\hline
				\hline
				\backslashbox{$k$}{\raisebox{-0.7ex}{$K$}} & 2 & 3 & 4 & 5 & 6 \\
				\hline
				\\[-0.8em]
				5 & 1 & 1 & 1 & 1 & 1 \\
				\\[-0.8em]
				6 & 1 & 1 & 1 & 0 & 0 \\
				\\[-0.8em]
				7 & 1 & 1 & 0 & 0 & 0 \\
				\\[-0.8em]
				8 & 1 & 0 & 0 & 0 & 0 \\
				\\[-0.8em]
				9 & 0.0051 & 0 & 0 & 0 & 0 \\
				\hline
				\hline
			\end{tabular}
		\end{table}
	\end{center}

	\subsection{\texorpdfstring{$k$}{empty}-reduction Schmidt-number witnesses}
	\label{sec:kreductionwitness}
	
	Here we introduce a family of Schmidt-number-$(k+1)$ witnesses by virtue of the $k$-reduction map, which  will be useful
	for certifying the entanglement dimensionality of $(k+1)$-unfaithful states as shown in the next subsection. 
	
	\begin{lemma}\label{lem:SNwitness}
		Suppose $\rho,\varrho\in\caS(\caH_{\rmA\rmB})$; then we have $\Tr[\caR_k(\rho)\varrho]=	\Tr[\rho\caR_k(\varrho)]$. If $\Tr[\caR_k(\rho)\varrho]<0$, then  both $\rho$ and $\varrho$ have Schmidt numbers at least $k+1$; both $\caR_k(\rho)$ and  $\caR_k(\varrho)$ are Schmidt-number-$(k+1)$ witnesses, which can certify $\varrho$ and $\rho$, respectively. 
	\end{lemma}
	\Lref{lem:SNwitness} follows from the facts that the $k$-reduction map is self-adjoint and that $\caR_k(\rho)\geq 0$ whenever $\rho\in \caS_k$. 
	Now, suppose $\caR_k(\rho) \not \ge 0$, then any state supported in a subspace associated with negative eigenvalues of $\caR_k(\rho)$  has Schmidt number at least $k+1$. In particular, any eigenstate of  $\caR_k(\rho)$ with a negative eigenvalue has Schmidt number at least $k+1$. The following theorem is a simple corollary of this observation.
	\begin{theo}\label{thm:kreductionwitness}
		Suppose $\rho\in\caS(\caH_{\rmA\rmB})$, $\caR_k(\rho) \not \ge 0$, and $|\Psi\>$ is an eigenstate of $\caR_k(\rho)$ with a negative eigenvalue. Then $\caR_k(\rho)$ and  $\caR_k(\Psi)$ are Schmidt-number-$(k+1)$ witnesses, which can certify $|\Psi\>$ and $\rho$, respectively. 
	\end{theo}
	
	A Schmidt-number-$(k+1)$ witness of the form $\caR_k(\rho)$ with $\caR_k(\rho) \not \ge 0$ is henceforth called a \emph{$k$-reduction Schmidt-number witness}. Thanks to \lref{lem:SNwitness} and \thref{thm:kreductionwitness}, any state $\rho$ certifiable by the $k$-reduction map can also be certified by a $k$-reduction Schmidt-number witness. Note that the witness based on the maximally entangled state $|+_d\>$ in \eref{eq:common_witness} can also  be regarded as a $(k-1)$-reduction Schmidt-number witness. In addition, by applying the witness  $\caR_k(\rho)$ to the state $\rho$ itself we can obtain a purity-based criterion:
	\begin{equation}
		\textrm{if } \sn(\rho) \le k,\quad \textrm{then } \Tr\left(\rho_{\rmA}^2\right) \ge \frac{1}{k}\Tr(\rho^2)
	\end{equation}
	with $\rho_{\rmA} = \Tr_{\rmB}(\rho)$.
	
	\subsection{Certification of entanglement dimensionality of \texorpdfstring{$k$}{empty}-unfaithful states}
	\label{sec:k_unfaithful}
	
	A quantum state $\rho\in\caS(\caH_{\rmA\rmB})$ is  \textit{$k$-unfaithful} if $\Tr(\rho O_{\rmW}) \ge 0$ for all Schmidt-number-$k$ witnesses of the form $O_{\rmW} = cI - |\psi\>\<\psi|$ \cite{weilenmann2020entanglement}. By definition, any state with Schmidt number at most $k-1$ is $k$-unfaithful. By contrast, a state $\rho$ is  $k$-\textit{faithful} if there exists a Schmidt-number-$k$ witness of the form $O_{\rmW} = cI - |\psi\>\<\psi|$  that satisfies $\Tr(\rho O_{\rmW}) < 0$; in other words, the state $\rho$ is $k$-faithful iff there exists $|\psi\>\in \caH_{\rmA\rmB}$ with $\sn(|\psi\>) \ge k$ such that 
	\begin{equation}\label{eq:kunfaithfulcondition}
		\<\psi|\rho|\psi\> > \sum_{i=0}^{k-2} \sigma(\psi)_i^{\downarrow}.
	\end{equation}
	By convention, 2-unfaithful (faithful) is also referred to as unfaithful (faithful). It is known that the state $\rho$ is faithful iff there exist local unitary operators $U_{\rmA}, U_{\rmB}$ such that \cite{guhne2021geometry} 
	\begin{equation}\label{eq:criterion_unfaithful}
		\<+_d|\left(U_{\rmA}^\dag\otimes U_{\rmB}^\dag\right)\rho \left(U_{\rmA} \otimes U_{\rmB}\right)|+_d\> > \frac{1}{d}.
	\end{equation}
	However, a similar result does not hold when $k \ge 3$. The following proposition highlights the power of the $1$-reduction map and $1$-reduction Schmidt-number witnesses in certifying faithful states.
	\begin{proposition}
		Any faithful state in $\caS(\caH_{\rmA\rmB})$ can be certified by the $1$-reduction map and a $1$-reduction Schmidt-number witness.
	\end{proposition}

	\begin{proof}
		Suppose $\rho\in\caS(\caH_{\rmA\rmB})$ is faithful. Then, by  \eref{eq:criterion_unfaithful},  there exists a state $|\Psi\> = \left(U_{\rmA}\otimes U_{\rmB}\right)|+_d\>$ in $\caH_{\rmA\rmB}$ such that
		\begin{equation}
			\<\Psi|\caR_1(\rho)|\Psi\>  = \frac{1}{d} - \<\Psi|\rho|\Psi\> < 0,
		\end{equation}
		which implies that  $\caR_1(\rho)\not\ge 0$. So $\rho$ can be certified by the $1$-reduction map and a $1$-reduction Schmidt-number witness by \thref{thm:kreductionwitness}.
	\end{proof}
	
	Next, we  show that the 1-reduction map and $1$-reduction Schmidt-number witnesses can certify unfaithful states. Suppose $\caH_{\rmA\rmB}$ is a Hilbert space of local dimensions $d_\rmA=d_\rmB=d$,
	$|\Phi^+\> = (|00\> + |11\>)/\sqrt{2}$ is a Bell state in $\caH_{\rmA\rmB}$, and 
	\begin{equation}
		\rho = x|\Phi^+\>\<\Phi^+| + (1-x)\frac{I}{d^2},\quad x\in [0,1].
	\end{equation}
	Then the fidelity between $\rho$ and any 
	maximally entangled state in $\caH_{\rmA\rmB}$ is upper bounded by 
	\begin{equation}
		\frac{2x}{d} + \frac{1-x}{d^2}.
	\end{equation}
	Therefore, $\rho$ is unfaithful
	when $x \le (d-1)/(2d+1)$ according to \eref{eq:criterion_unfaithful}. In addition, the minimum eigenvalue of $\caR_1(\rho)$ reads
	\begin{equation}
		-\frac{1}{2}x + \frac{(1-x)(d-1)}{d^2}.
	\end{equation}
	If $x > 2/(d+2)$, then this eigenvalue is negative, which means $\caR_1(\rho) \not\ge 0$. Now, suppose $d > 4$ and
	\begin{equation}
		x \in \bigg(\frac{2}{d+2}, \frac{d-1}{2d+1}\bigg];
	\end{equation}
	then the state $\rho$ is unfaithful but certifiable by the 1-reduction map and some $1$-reduction Schmidt-number witnesses.
	
	In general,  $k$-unfaithfulness can be certified via semi-definite programming \cite{weilenmann2020entanglement}. To be specific, a state $\rho\in \caS(\caH_{\rmA\rmB})$ is $k$-unfaithful if there exist positive operators $E_{\rmA}\in\caL(\caH_\rmA), E_{\rmB}\in\caL(\caH_\rmB)$ and non-negative numbers $\mu_{\rmA},\mu_{\rmB}$ fulfilling the following conditions:
	\begin{gather}
		E_{\rmA} \otimes I_{\rmB} + I_{\rmA} \otimes E_{\rmB} \ge \rho,\nonumber\\
		\mu_{\rmA} + \mu_{\rmB} = 1,\nonumber\\ \label{eq:sdp}
		\Tr(E_{\rmA}) = \mu_{\rmA}(k-1),\\ 
		\Tr(E_{\rmB}) = \mu_{\rmB}(k-1),\nonumber\\
		E_{\rmA} \le \mu_{\rmA} I_{\rmA},\quad E_{\rmB} \le \mu_{\rmB} I_{\rmB}.\nonumber
	\end{gather}
	For instance, the following state is 3-unfaithful:
	\begin{equation}\label{eq:unfaithful}
		\begin{aligned}
			\rho_{\UF} &\equiv \frac{1}{2}|+_3\>\<+_3| + \frac{1}{2}|\Phi_2\>\<\Phi_2|,\\
			|\Phi_2\> &\equiv \frac{1}{\sqrt{2}}(|23\> + |32\>).
		\end{aligned}
	\end{equation}
	As a short proof, the conditions in \eref{eq:sdp} can be fulfilled by the following choices:
	\begin{equation}
		\begin{aligned}
			\mu_{\rmA} = \mu_{\rmB} = \frac{1}{2},\quad
			E_{\rmA} = E_{\rmB} = \frac{1}{4}\sum_{i=0}^3|i\>\<i|.
		\end{aligned}
	\end{equation}
	Moreover, the state $\rho_{\UF}$ can be certified by a $2$-reduction Schmidt-number witness introduced in \sref{sec:kreductionwitness}. Direct calculation shows that $\caR_2(\rho_{\UF})$ has a unique negative eigenvalue, and the corresponding eigenstate reads
	\begin{equation}
		\begin{aligned}
			|\Psi\> &= \alpha|00\> + \alpha|11\> + \sqrt{1-2\alpha^2}\lsp|22\>,\\
			\alpha &= \frac{\sqrt{3} + \sqrt{2}}{2\sqrt{3 + \sqrt{6}}} \approx 0.67.
		\end{aligned}
	\end{equation}
	In addition, $\Tr[\rho_{\UF}\caR_2(\Psi)] =\Tr[\caR_2(\rho)|\Psi\>\<\Psi|]< 0$, so the $2$-reduction Schmidt-number witness $\caR_2(\Psi)$ can certify that the state $\rho$ has Schmidt number at least 3.
	In experiments, the expectation value of the witness $\caR_2(\Psi)$ can be determined by measuring  $\<\Psi|\rho_\UF|\Psi\>$ and $\Tr[\Tr_{\rmB}(\Psi)\Tr_{\rmB}(\rho_\UF)]$.
	
	\section{\label{sec:moment}Schmidt number certification via \texorpdfstring{$k$}{empty}-reduction moments} 
	
	In this section we systematically develop \emph{the moment method}, which is designed to characterize the subset of positive operators within  a given set of Hermitian operators, and apply this method to certifying
	the Schmidt number. 
	
	\subsection{\label{sec:classical_shadow}PT moments and \texorpdfstring{$k$}{empty}-reduction moments}
	
	Here we give an overview of how the moment method is utilized in entanglement certification \cite{elben2020mixed}, which has connection but is different from the method in \cite{bohnet2012}. 
	
	The well-known PPT criterion states that if the partial transpose $\rho^{\top}$ (with respect to either party) of a bipartite state $\rho$ is not a positive operator, then $\rho$ is necessarily entangled.
	This criterion is proved to be useful for a large class of quantum states. However, non-complete positive maps, including the partial transpose and $k$-reduction map, are non-physical. We cannot  implement them directly in the laboratory. One solution to this problem is quantum tomography. After reconstructing the entire density matrix from the outcomes of experimental measurements, we can directly test the PPT criterion or $k$-reduction  criterion using a classical computer. Unfortunately, full quantum state tomography requires a huge measurement budget \cite{gross2010quantum,haah2016sample,o2016efficient}, which is too prohibitive for practical applications.  
	
	Recently, researchers have found more resource-efficient variants of the PPT criterion \cite{aaronson2018shadow,aaronson2019gentle,huan2020}. To apply the PPT criterion, we need to determine whether the spectrum of $\rho^{\top}$ contains a negative eigenvalue, which  can be certified by the partial transpose (PT) moments:
	\begin{equation}
		p^{\top}_n\equiv\Tr[(\rho^{\top})^n],\quad  n = 1,2,\ldots.
	\end{equation} 
	The criterion that employs the first $n$ PT moments is referred to as the $p^\top_n$-PPT criterion \cite{elben2020mixed, neven2021symmetry,YuIG21}. For example, the $p^\top_3$-PPT criterion states that
	\begin{equation}
		\mathrm{if}\ \rho \ \mathrm{is \ separable},\quad \mathrm{then}\ p^{\top}_3 \le \left(p^{\top}_2\right)^2.
	\end{equation}
	In practice,  $p_n^{\top}$ can be estimated efficiently  using  randomized measurements \cite{elben2023randomized}.

	Likewise, we can devise protocols for certifying the  Schmidt number using the following sequence of \emph{k-reduction moments}:
	\begin{align}\label{eq:moment_sequence_rk}
		Q \equiv (q_n)_{n\in\bbN_0},\quad
		q_n \equiv \Tr[\caR_k(\rho)^n].
	\end{align}
	According to \thref{thm:pure_state}, if $\rho$ is a pure state with Schmidt vector $\blambda = (\lambda_0,\lambda_1,\ldots,\lambda_{d-1})$, then
	\begin{equation}\label{eq:qn_analytic}
		q_n = \Tr[\Omega_k(\blambda)^n] + (d_\rmB - 1) k^n \sum_{i=0}^{d-1}\lambda_i^n,
	\end{equation}
	where $\Omega_k(\blambda)$ is defined in \eref{eq:Omegaklambda}. 
	We can certify that  $\sn(\rho) > k$ if this moment sequence satisfies certain conditions as detailed below.
	
	\subsection{The moment method}

	Here  we recapitulate several standard results on the moment method \cite{schmudgen2017moment}.
	Denote by $\bbN$ the set of natural numbers (positive integers) and by $\bbN_0$  the set of nonnegative integers.

	\emph{The moment problem} concerns the following question: given a real sequence $S=(s_n)_{n\in \bbN_0}$ and a closed subset $\bbS\subset\bbR$, when does there exist a Radon measure $\mu$ such that  
	\begin{align}
		s_n=\int_{\bbS}x^n d\mu(x)\quad \forall n\in\bbN_0. 
	\end{align}
	Here the integration can be replaced by a summation when the measure $\mu$ is supported on discrete points.  If such a measure exists, then $\mu$ is called a \textit{representing measure} of $S$. The moment method refers to a systematic approach for solving the moment problem. The sequence $S$ is called an \emph{$[a,b]$-moment sequence} if $\bbS = [a,b]$.  Accordingly, the moment problem is called an $[a,b]$-moment problem. When the sequence is finite, the corresponding moment problem is called a \emph{truncated $[a,b]$-moment problem}. Given $N,N_1,N_2\in \bbN_0$ with $N_2 \ge N_1$, we shall use the following notation to represent a finite sequence truncated from  an infinite sequence $S$:
	\begin{align}
		S_{N_1, N_2} \equiv (s_{N_1}, s_{N_1 + 1}, \ldots, s_{N_2}),\quad 
		S_N \equiv S_{0,N}.
	\end{align}
	The arithmetic operations between sequences are defined by the corresponding operations on each entry:
	\begin{equation}
		(S \pm S')_i \equiv S_i \pm S'_i.
	\end{equation}

	An important tool for solving the moment problem is the \emph{Hankel matrix}. Given a nonnegative integer $n$ and a finite sequence $S_{2n}$, the Hankel matrix $H(S_{2n})$ is defined as follows:
	\begin{equation}
		\left[H(S_{2n})\right]_{ij} \equiv s_{i+j}\quad i,j = 0,1,\ldots,n.
	\end{equation}
	By definition, $H(S_{2n})$ is an $(n+1)\times(n+1)$ real symmetric matrix. 
	The positivity of a Hankel matrix is determined by the corresponding moment sequence. The following lemma is a standard result on the moment problem. 
	
	\begin{lemma}[Theorems 10.1 and 10.2 in \cite{schmudgen2017moment}]\label{lem:std}
		Suppose $N\in\bbN$ is even. Then $S_{N}$ is a truncated $[a,b]$-moment sequence iff
		\begin{equation}\label{eq:even_case}
			H(S_{N}) \geq 0,\quad  H(\overline{S}_{N-2}) \geq 0,
		\end{equation}
		where
		\begin{equation}
			\overline{S}_N \equiv (a+b)S_{1,N+2} - S_{2,N+3} - ab S_{N+1}.
		\end{equation}
		Suppose $N\in \bbN$ is odd. Then $S_{N}$ is a truncated $[a,b]$-moment sequence iff 
		\begin{equation}\label{eq:odd_case}
			H(S_{1,N} - aS_{N-1}) \geq 0,\;\; 
			H(b S_{N-1} - S_{1,N}) \geq 0.
		\end{equation}
	\end{lemma}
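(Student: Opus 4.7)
The plan is to establish both directions of \lref{lem:std} by recasting the truncated moment problem as a question about nonnegative linear functionals on polynomial spaces of bounded degree.

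For necessity, assume $S_N$ admits a representing measure $\mu$ supported on $[a,b]$. For any real polynomial $p(x) = \sum_i c_i x^i$ of degree at most $n$, expanding $\int_{[a,b]} p(x)^2 \, d\mu(x) \geq 0$ yields $\mathbf{c}^\top H(S_{2n}) \mathbf{c} \geq 0$, which proves $H(S_N) \geq 0$ in the even case. To obtain the second condition in \eqref{eq:even_case}, exploit the fact that $(x-a)(b-x) \geq 0$ on $[a,b]$: expanding $\int_{[a,b]} (x-a)(b-x) p(x)^2 \, d\mu(x) \geq 0$ and reading off coefficients via the definition of $\overline{S}_{N-2}$ reproduces $H(\overline{S}_{N-2}) \geq 0$. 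The odd case \eqref{eq:odd_case} is analogous, applied separately to the two nonnegative weights $(x-a)$ and $(b-x)$.

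For sufficiency, I would introduce the Riesz functional $L$ on polynomials of degree at most $N$ defined by $L(x^n) = s_n$ and argue that the Hankel positivity conditions force $L(p) \geq 0$ for every polynomial $p$ of degree at most $N$ that is nonnegative on $[a,b]$. The critical ingredient is the classical Markov--Lukasz decomposition: every such $p$ admits a representation as a sum of terms $w(x) q(x)^2$, where $w$ ranges over $\{1,\, (x-a)(b-x)\}$ in the even case and over $\{(x-a),\, (b-x)\}$ in the odd case, with $\deg(w q^2) \leq N$. The conditions \eqref{eq:even_case} and \eqref{eq:odd_case} are exactly the statements that $L(q^2) \geq 0$ and $L(w q^2) \geq 0$ for each such weight, so $L$ is nonnegative on the cone of polynomials nonnegative on $[a,b]$. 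A Riesz--Haviland-type argument then produces a representing measure; in the truncated setting one can make this explicit via a Curto--Fialkow flat extension, which recovers an atomic measure whose support lies at the common zeros of a suitable orthogonal polynomial and whose weights are determined by a Vandermonde linear system.

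The main obstacle is this last step in the sufficiency direction: the passage from nonnegativity of $L$ on the Markov--Lukasz cone to the actual existence of a representing measure on $[a,b]$. The bookkeeping that matches the parity of $N$ with the correct weight set and keeps all relevant degrees within $N$ is delicate, and guaranteeing that a flat, rank-preserving extension of the Hankel matrix exists in this parity-dependent regime is the crux. Once it is handled, confinement of the support to $[a,b]$ follows from the weights used in the decomposition, and matching the moments back to $S_N$ completes the argument.
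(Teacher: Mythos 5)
The paper does not prove this lemma at all: it is quoted verbatim as Theorems 10.1 and 10.2 of Schm\"udgen's monograph, so there is no in-paper argument to compare against. Your reconstruction is nonetheless essentially the standard proof of those theorems. The necessity direction is complete: the Hankel conditions are exactly the statements $L(q^2)\ge 0$, $L\bigl((x-a)(b-x)q^2\bigr)\ge 0$ (even $N$) and $L\bigl((x-a)q^2\bigr)\ge 0$, $L\bigl((b-x)q^2\bigr)\ge 0$ (odd $N$) for the Riesz functional $L(x^n)=s_n$, and your degree bookkeeping is consistent with the index shifts in $\overline{S}_{N-2}$ and $S_{1,N}-aS_{N-1}$. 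For sufficiency, the Markov--Luk\'acs decomposition is indeed the right key lemma, and the parity-dependent weight sets you list are the correct ones with all degrees staying $\le N$.

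The one step you flag as the "crux" --- passing from nonnegativity of $L$ on the cone of polynomials of degree $\le N$ that are nonnegative on $[a,b]$ to the existence of a representing measure --- is actually the easy part here, precisely because $[a,b]$ is compact, and you do not need a Curto--Fialkow flat extension for it. View $\bbR[x]_{\le N}$ restricted to $[a,b]$ as a finite-dimensional subspace $E$ of $C([a,b])$; it contains the constant function $1$, which is an interior point of the cone of nonnegative continuous functions. The M.~Riesz extension theorem then extends $L$ from $E$ to a positive linear functional on all of $C([a,b])$, and the Riesz--Markov representation theorem produces a positive Radon measure $\mu$ on $[a,b]$ with $\int x^n\,d\mu=s_n$ for $n\le N$. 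This is exactly the route taken in the cited reference (the general existence principle for truncated moment problems on compact sets, combined with Markov--Luk\'acs). The flat-extension machinery would be needed if you wanted an explicit finitely atomic representing measure or if the underlying set were noncompact; for the statement as given it is superfluous. With that substitution your outline closes into a complete proof.
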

	
	\subsection{\label{sec:main_criterion}  \texorpdfstring{$k$}{empty}-reduction moment criteria}
	
	To formulate the certification problem as a moment problem, we need to delve into the moment sequence $Q=(q_n)_{n\in \bbN_0}$ generated by the moments of the $k$-reduced operator $\caR_k(\rho)$ as defined in \eref{eq:moment_sequence_rk}. Thanks to  \coref{coro:spectrum}, this sequence can  be regarded as a $[-1,k]$-moment sequence, so $Q_N$ is a truncated $[-1,k]$-moment sequence. If in addition $\caR_k(\rho) \ge 0$, then $Q_N$ is a truncated $[0,k]$-moment sequence. Now, it remains to address the technical problem: given a truncated $[-1,k]$-moment sequence, when can we conclude that it is not a truncated $[0,k]$-moment sequence?  
	
	Define 
	\begin{equation}\label{eq:BN}
		B_N[\rho,k] \equiv \begin{cases}
			H(Q_{1,N}), & N \text{ odd},\\
			H(kQ_{1,N-1} - Q_{2,N}), & N \text{ even},
		\end{cases} 
	\end{equation}
	which can be abbreviated as $B_N$ if there is no danger of confusion. The following theorem proved in \aref{app:proofmain} is the basis of our approach for certifying the entanglement dimensionality. 
	\begin{theo}\label{thm:main}
		Suppose $\rho \in \caS(\caH_{\rmA\rmB})$ and $\SN(\rho) \le k$. Then $ B_N[\rho,k] \geq 0$ for all $N$.
	\end{theo}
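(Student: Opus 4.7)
The plan is to interpret the $k$-reduction moment sequence $Q = (q_n)_{n\in\bbN_0}$ as an honest $[0,k]$-moment sequence and then invoke the necessary direction of \lref{lem:std} with $[a,b] = [0,k]$. The content of the theorem then reduces to picking off the right Hankel positivity condition for each parity of $N$.

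The first step is to localize the spectrum of $\caR_k(\rho)$ inside $[0,k]$. The lower bound $\caR_k(\rho) \geq 0$ is exactly \pref{prop:positive_k_reduction} under the hypothesis $\SN(\rho) \leq k$. For the upper bound, which in fact holds for \emph{any} bipartite density operator, I would combine $\rho \geq 0$ with $\rho_A \leq I_A$ to write $\caR_k(\rho) = k\rho_A\otimes I_B - \rho \leq k\rho_A\otimes I_B \leq kI$. Denoting the multiset of eigenvalues of $\caR_k(\rho)$ by $\mu_0,\dots,\mu_{D-1}$, all of them lie in $[0,k]$, so $q_n = \sum_{i}\mu_i^n = \int_0^k x^n\,d\mu(x)$ where $\mu = \sum_i \delta_{\mu_i}$ is an atomic representing measure supported on $[0,k]$. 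Consequently, every truncation $Q_N$ is a $[0,k]$-moment sequence.

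The second step is mechanical: apply the necessity direction of \lref{lem:std} with $a=0$, $b=k$. For odd $N$, \eref{eq:odd_case} gives $H(Q_{1,N} - 0\cdot Q_{N-1}) = H(Q_{1,N}) \geq 0$, which is exactly $B_N \geq 0$ by the definition \eqref{eq:BN}. For even $N$, \eref{eq:even_case} gives $H(\overline{Q}_{N-2}) \geq 0$; expanding $[\overline{Q}_{N-2}]_i = (a+b)q_{i+1} - q_{i+2} - ab\, q_i$ with $a=0$, $b=k$ yields $\overline{Q}_{N-2} = kQ_{1,N-1} - Q_{2,N}$, hence $B_N = H(kQ_{1,N-1} - Q_{2,N}) \geq 0$. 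The principal obstacle is purely bookkeeping: I must align the index-shift conventions of $\overline{S}$ in \lref{lem:std} with the definition of $B_N$ in \eref{eq:BN}, and observe that the cross-term $ab\, S_{N-2}$ vanishes harmlessly because the lower endpoint is zero. Once the spectral inclusion $\sigma(\caR_k(\rho))\subset [0,k]$ is established, \thref{thm:main} is essentially an immediate corollary of the truncated Hausdorff moment theorem.
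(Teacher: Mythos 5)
Your proposal is correct and takes essentially the same route as the paper: both arguments reduce to the spectral inclusion $\sigma(\caR_k(\rho))\subseteq[0,k]$ (positivity from \pref{prop:positive_k_reduction}, the upper bound $\caR_k(\rho)\le kI$ from $\rho\ge 0$ and $\rho_A\le I_A$) and then to the positive semidefiniteness of the Hankel matrices attached to a $[0,k]$-moment sequence. The only difference is presentational: instead of citing the necessity direction of \lref{lem:std}, the paper inlines its one-line proof by writing $B_N=\sum_i m_i\, p_{N,k}(x_i)\, b_N(x_i)b_N(x_i)^\top$ as a nonnegative combination of rank-one matrices — exactly as the Remark following \thref{thm:main} anticipates — and your bookkeeping with $a=0$, $b=k$ correctly recovers the two cases of \eref{eq:BN}.
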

	
	\begin{remark}
		Note that the condition $B_N \ge 0$ comes directly from the standard result \lref{lem:std}, but only contains the second condition in \eref{eq:even_case} and the first one in \eref{eq:odd_case}. The condition $H(Q_N) \ge 0$ holds for any moment sequence of the form in \eref{eq:moment_sequence_rk}, and the condition $H(kQ_{N-1} - Q_{1,N}) \ge 0$ holds automatically because $\caR_k(\rho) \le kI$ by \coref{coro:spectrum}.
	\end{remark}
	
	If $B_N \not\geq 0$ for some positive integer $N$, then we can conclude that $Q_N$ is not a truncated $[0,k]$-moment sequence, which means $\caR_k(\rho) \not\geq 0$ and $\SN(\rho) > k$. In this way, we can  construct a sequence of criteria for certifying the Schmidt number. The \emph{$N$-th order moment-based $k$-reduction criterion} can be formulated as follows:
	\begin{equation}\label{eq:k_moment_criteria}
		\text{ if } \SN(\rho) \le k, \quad \text{then } B_N[\rho,k] \geq 0.
	\end{equation}
	Note that $k$-reduction moments are invariant under local unitary transformations, so our criteria are also invariant under local unitary transformations, in sharp contrast with fidelity-based methods.
	
	\begin{figure}
		\begin{algorithm}[H]
			{\small
				\hspace{-146pt}\textbf{Input:} $\rho$, $N^*$, and $k$.\\
				\hspace{-76pt} \textbf{Output:} if return yes, then $\sn(\rho) \ge k$.
				\begin{algorithmic}[1]			\caption{\label{alg:detection_protocol}Certification of Schmidt number}
					\State Let $q_1 = (k-1)d_\rmB - 1$.
					\State Estimate the second moment $q_2$ of $\caR_{k - 1}(\rho)$.
					\For{$N=3,\ldots,N^*$}
					\State Estimate the $N$-th moment $q_N$
					of $\caR_{k - 1}(\rho)$.
					\State Construct the Hankel matrix $B_N[\rho,k-1]$ with $\{q_n\}_{n=1}^N$. 
					\State If $B_N[\rho,k-1] \not\geq 0$, return yes.
					\EndFor
					\State Return no.
					
				\end{algorithmic}
			}
		\end{algorithm}
		
		\begin{algorithm}[H]
			{\small
				\hspace{-146pt}\textbf{Input:} $\rho$, $N^*$, and $r$.\\
				\hspace{-95pt} \textbf{Output:} a lower bound for $\SN(\rho)$.
				\begin{algorithmic}[1]
					\caption{\label{alg:full_protocol}Optimal certification of Schmidt number by the $N^*$-th order moment}
					\State $s = 1$.
					\For{$k=2,\ldots,r$}
					\State Input $(\rho,N^*,k)$ to Algorithm \ref{alg:detection_protocol}.
					\If{output is no,}
					\State Return $s$.
					\Else{\ $s = k$.}
					\EndIf
					\EndFor
					\State Return $s$.
					
				\end{algorithmic}
			}
		\end{algorithm}
		
	\end{figure} 
	
	The conditions $B_1 \geq 0, B_2 \geq 0$ are trivial because $B_1 = q_1$ and $B_2 = kq_1 - q_2$  contain only one entry each, so we usually start with $N = 3$. The simplest odd-order condition is thus
	\begin{equation}
		B_{3}=\left(\begin{array}{cc}
			q_1 & q_2\\
			q_2 & q_3
		\end{array}\right) \geq 0,
	\end{equation}
	and the simplest even-order condition is
	\begin{equation}
		B_4= \begin{pmatrix}
			kq_1 - q_2 & kq_2 - q_3 \\
			kq_2 - q_3 & kq_3 - q_4
		\end{pmatrix} \geq 0.
	\end{equation} 
	Note that $B_N$ is a principal submatrix of $B_{N+2}$, which means $B_N\geq 0$ whenever $B_{N+2}\geq 0$, so the $(N+2)$-th order moment-based $k$-reduction criterion is usually stronger than the $N$-th order moment-based criterion.
	
	By virtue of \thref{thm:main}, we can devise  a simple algorithm (Algorithm~\ref{alg:detection_protocol}) for certifying whether $\SN(\rho)\geq k$. Here we set a truncation value for $N$ because 
	it is in general more difficult to determine higher-order moments accurately. On this basis, we can further devise an algorithm (Algorithm~\ref{alg:full_protocol}) for constructing the best lower bound for the Schmidt number $\SN(\rho)$ based on $k$-reduction moments. 
	
	To illustrate the detection capabilities of moment-based $k$-reduction criteria, it is instructive to consider the following family of two-qutrit pure states:
	\begin{align}\label{eq:x1x2}
		\sqrt{x_1}|00\rangle + \sqrt{x_2}|11\rangle + \sqrt{1 - x_1 - x_2}|22\rangle,
	\end{align} 
	where $x_1,x_2 > 0$ and $x_1 + x_2 < 1$; all these states have Schmidt numbers equal to 3.  \Fref{fig:triangle} illustrates the  detectable region of the $N$-th order moment-based $k$-reduction criterion with $k=2$ and $N=3,4,5,6,7$. Note that the detection capability gets stronger and stronger as $N$ increases. When $N = 7$, the moment-based criterion can certify the Schmidt numbers of almost all two-qutrit pure states.
	
	Next, we show that the $N$-th order moment-based $k$-reduction criterion is equivalent to the original $k$-reduction criterion when $N$ is sufficiently large. 
	The following theorem is proved in \aref{app:proofmain}.
	\begin{theo}\label{thm:maximal_order}
		Suppose $\rho \in \caS(\caH_{\rmA\rmB})$, $\caR_k(\rho)$ has $\chi$ distinct nonzero eigenvalues, and $N \ge 2\chi-1$. Then $B_N[\rho,k] \geq 0$ iff $\caR_k(\rho) \geq 0$.
	\end{theo}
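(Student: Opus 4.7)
The plan is to establish the two directions separately. The forward direction ``$\caR_k(\rho)\ge 0\Rightarrow B_N[\rho,k]\ge 0$'' is essentially a rerun of the proof of \Thref{thm:main}: once $\caR_k(\rho)\ge 0$, combining this with $\caR_k(\rho)\le kI$ from \Coref{coro:spectrum} shows that the moment sequence $Q$ is a $[0,k]$-moment sequence, and \Lref{lem:std} (applied with $a=0$, $b=k$) immediately yields $B_N[\rho,k]\ge 0$ for every $N$, exactly as in the \Remark~after \Thref{thm:main}. The new content is thus the converse.

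For the converse I would pass to the spectral decomposition of $\caR_k(\rho)$. Let $\mu_1,\dots,\mu_{\chi}$ be its distinct nonzero eigenvalues with multiplicities $m_1,\dots,m_{\chi}$, so that
\begin{equation}
q_n=\sum_{\ell=1}^{\chi} m_\ell\,\mu_\ell^{\,n}\qquad \forall\, n\ge 1.
\end{equation}
A direct expansion from the definition of $B_N[\rho,k]$ in \Eref{eq:BN} then gives, for every $v=(v_0,\dots,v_{n-1})^{\top}\in\bbR^n$ and the associated polynomial $p(x)=\sum_{i=0}^{n-1}v_i x^i$ of degree at most $n-1$,
\begin{align}
v^{\top} B_{2n-1}[\rho,k]\,v &= \sum_{\ell=1}^{\chi} m_\ell\,\mu_\ell\, p(\mu_\ell)^2,\\
v^{\top} B_{2n}[\rho,k]\,v   &= \sum_{\ell=1}^{\chi} m_\ell\,\mu_\ell(k-\mu_\ell)\, p(\mu_\ell)^2.
\end{align}
By \Coref{coro:spectrum} the weight $k-\mu_\ell$ in the even-order identity is nonnegative.

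Now suppose $B_N[\rho,k]\ge 0$ with $N\ge 2\chi-1$. Since $B_{N'}[\rho,k]$ appears as a principal submatrix of $B_N[\rho,k]$ whenever $N'\le N$ have the same parity, I may reduce to $N=2\chi-1$ in the odd case and $N=2\chi$ in the even case, so that $n=\chi$ and $p$ ranges over all polynomials of degree at most $\chi-1$. Assume for contradiction that some $\mu_{\ell_0}<0$. Because the $\chi$ values $\mu_\ell$ are distinct, Lagrange interpolation delivers a polynomial $p$ of degree $\chi-1$ that vanishes at every $\mu_\ell$ with $\ell\ne\ell_0$ but satisfies $p(\mu_{\ell_0})\ne 0$. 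Substituting into the relevant quadratic form yields $m_{\ell_0}\mu_{\ell_0}p(\mu_{\ell_0})^2<0$ in the odd case and $m_{\ell_0}\mu_{\ell_0}(k-\mu_{\ell_0})p(\mu_{\ell_0})^2<0$ in the even case (using $k-\mu_{\ell_0}>0$ since $\mu_{\ell_0}<0\le k$), contradicting $B_N[\rho,k]\ge 0$. Therefore every $\mu_\ell\ge 0$, which together with $\caR_k(\rho)\le kI$ gives $\caR_k(\rho)\ge 0$.

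The only step that requires genuine care is the spectral expansion of $v^{\top}B_N[\rho,k]\,v$: the indexing in \Eref{eq:BN} differs between the odd and even cases, and one must track the offsets so as to factor out the correct weight ($\mu_\ell$ or $\mu_\ell(k-\mu_\ell)$) times $p(\mu_\ell)^2$. Once this is done, the interpolation argument is a one-liner, so I do not foresee any substantive obstacle.
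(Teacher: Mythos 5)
Your proposal is correct and follows essentially the same route as the paper: the quadratic-form identities you derive are exactly the paper's decomposition $B_N[\rho,k]=\sum_i m_i\,p_{N,k}(x_i)\,b_N(x_i)b_N(x_i)^\top$, and your Lagrange-interpolation step is the dual formulation of the paper's appeal to linear independence of the Vandermonde vectors $\{b_N(x_i)\}$ (\lref{lem:vandermonde}). The forward direction via the $[0,k]$-moment characterization also matches the Remark following \thref{thm:main}, so there is nothing substantive to add.
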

	\Thref{thm:maximal_order} applies to arbitrary quantum states in $\caS(\caH_{\rmA\rmB})$.
	See \sref{sec:PureBipartite} for additional results on  pure states. 
	
	\begin{figure}
		\centering
		\includegraphics[width = 0.45\textwidth]{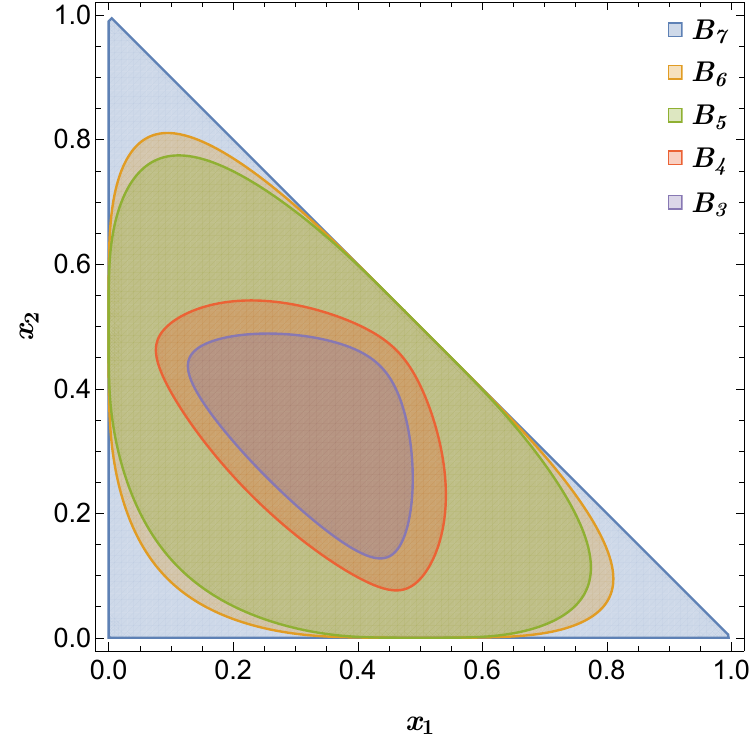}
		\caption{\label{fig:triangle}Detectable regions of $k$-reduction moment criteria with $k=2$. Each point represents a state with Schmidt number $3$ in the form of \eref{eq:x1x2}. The region of states whose Schmidt numbers can be detected by the $N$-th order moment-based criterion (that is, the set of states satisfying $B_N[\psi,2] \not\geq 0$) is illustrated in the plot. The higher the order is, the larger the detectable region becomes. When $N=7$, almost all states can be detected.}
	\end{figure}
	
	\subsection{\label{sec:protocol}The third order moment-based criterion}
	
	In this subsection, we provide more details on the third order moment-based $k$-reduction  criterion, that is, $N=3$. Note that $q_1 = kd_\rmB-1 > 0$, so $B_3\geq 0$ iff $\det(B_3[\rho,k]) \ge 0$. Accordingly, the criterion in \eref{eq:k_moment_criteria} can be simplified as follows:
	\begin{equation}\label{eq:B3}
		\textrm{ if } \SN(\rho) \le k, \quad \textrm{then }  \det(B_3[\rho,k]) \ge 0.
	\end{equation}
	Define
	\begin{equation}\label{eq:pnant2}
		\begin{gathered}
			p_n \equiv \Tr(\rho^n),\quad  a_n \equiv \Tr(\rho_{\rmA}^n),\\
			t_2 \equiv \Tr[\rho_{\rmA}\Tr_{\rmB}(\rho^2)].
		\end{gathered}
	\end{equation}
	Then  $\det(B_3[\rho,k])$ can be expressed as a polynomial of $k$ as follows:
	\begin{equation}\label{eq:frk}
		\det(B_3[\rho,k])= \beta_4 k^4 + \beta_3 k^3 + \beta_2 k^2 + \beta_1 k + \beta_0,
	\end{equation}
	where 
	\begin{equation}
		\begin{aligned}
			\beta_4 &= d_\rmB^2(a_3 - a_2^2),\\
			\beta_3 &= -4d_\rmB(a_3 - a_2^2),\\
			\beta_2 &= d_\rmB(3t_2 - 2a_2 p_2) - 4 a_2^2 + 3a_3,\\
			\beta_1 &= 4a_2 p_2 - d_\rmB p_3 - 3 t_2,\\
			\beta_0 &= p_3 - p_2^2.
		\end{aligned}
	\end{equation}
	Hence, to apply the criterion in \eref{eq:B3}, it remains to estimate the moments $p_2, p_3, a_2, a_3, t_2$. 
	
	\subsection{\label{sec:PureBipartite} \texorpdfstring{$k$}{empty}-reduction moment criteria for pure states}
	
	Recall that the $k$-reduction criterion can certify the Schmidt number of any pure state. Here we clarify the performance of moment-based $k$-reduction criteria.  The following theorem proved in  \aref{app:proof_of_theorem5} shows that the $N$-th order moment-based $k$-reduction criterion can certify the Schmidt number of any pure state in $\caH_{\rmA\rmB}$ when $N\geq 4d-1$, where $d=\min\{d_\rmA,d_\rmB\}$. In addition, it is necessary to employ the $N$-th order moment-based $k$-reduction criterion with $N\geq 2d$ for certain pure states in $\caH_{\rmA\rmB}$.
	
	\begin{theo}\label{thm:pure_state_detect}
		Suppose $|\psi\rangle\in\caH_{\rmA\rmB}$ has Schmidt rank $r$ and $\tilde{r}$ distinct nonzero Schmidt coefficients denoted by $\{\ell_j\}_{j=0}^{\tilde{r}-1}$. 
		If $N \ge 4\tilde{r}-1$, then the $N$-th order moment criterion can  detect the  state $|\psi\>$ as
		\begin{equation}\label{eq: theorem6statement1}
			B_N[\psi,k] \not\geq 0\quad \forall k=1,2,\ldots, r-1.
		\end{equation}
		Let $L = (l_n)_{n\in \bbN_0}$, $l_n = \sum_{j=0}^{\tilde{r}-1}\ell_j^n$, and
		\begin{equation}\label{eq:A3}
			A_0 \equiv \begin{cases}
				H(L_{1,N}), & N \textrm{ odd},\\
				H[(r-1)L_{1,N-1} - L_{2,N}], & N \textrm{ even}.
			\end{cases}
		\end{equation}
		If instead $N \le 2\tilde{r}$ and
		\begin{equation}\label{eq:dBLBcondition}
			d_\rmB > 1 + \frac{r\tilde{r}\caN_{r-1}(\psi)}{\sigma_{\min}(A_0)},
		\end{equation}
		then the $N$-th order moment criterion  cannot detect the  state $|\psi\>$ as
		\begin{equation}\label{eq: theorem6statement2}
			B_N[\psi,r-1] \geq 0.
		\end{equation}
	\end{theo}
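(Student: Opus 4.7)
The plan is to prove the two halves of \thref{thm:pure_state_detect} separately.

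For the first half ($N\geq 4\tilde{r}-1$), the argument is an immediate corollary of the previously established structural theorems. For each $k$ with $1\leq k\leq r-1$ we have $k<r$, so by \thref{thm:pure_state} the operator $\caR_k(\psi)$ has $\chi=2\tilde{r}$ distinct nonzero eigenvalues and a unique negative eigenvalue of magnitude $\caN_k(\psi)>0$; in particular $\caR_k(\psi)\not\geq 0$. Since $N\geq 4\tilde{r}-1=2\chi-1$, \thref{thm:maximal_order} supplies the equivalence $B_N[\psi,k]\geq 0 \Leftrightarrow \caR_k(\psi)\geq 0$, which must therefore fail, establishing \eqref{eq: theorem6statement1}.

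For the second half, the strategy is to exploit the spectral decomposition \eqref{eq:RkrhoSpectrum}. Starting from \eqref{eq:qn_analytic} and unpacking the definition \eqref{eq:BN}, one decomposes additively
\begin{equation*}
B_N[\psi,r-1] \;=\; B_N^{(\Omega)} \;+\; (d_B-1)\,B_N^{(a)},
\end{equation*}
where $B_N^{(\Omega)}$ collects the contributions of $\tr[\Omega_{r-1}(\blambda)^n]$ and $B_N^{(a)}$ collects those of $(r-1)^n a_n$ with $a_n=\tr[\rho_A^n]$. Any Hankel matrix arising from a discrete measure admits the outer-product representation $\sum_\nu w_\nu \bv_\nu \bv_\nu^{\top}$ with Vandermonde vectors $\bv_\nu=(1,\nu,\nu^2,\ldots)^\top$. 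Because \thref{thm:pure_state} guarantees that $\Omega_{r-1}(\blambda)$ has a unique negative eigenvalue equal to $-\caN_{r-1}(\psi)$, only a single rank-one term contributes to the negative part of $B_N^{(\Omega)}$, so
\begin{equation*}
B_N^{(\Omega)} \;\geq\; -\caN_{r-1}(\psi)\,\bv_- \bv_-^{\top},
\end{equation*}
while $B_N^{(a)}$ is PSD and, after absorbing the diagonal similarity $D=\diag(1,r-1,\ldots,(r-1)^{\lceil N/2\rceil-1})$ that converts Vandermonde vectors for $(r-1)\ell_j$ into those for $\ell_j$, can be compared with the matrix $A_0$ from \eqref{eq:A3} by using the multiplicity bound $m_j\geq 1$.

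The heart of the proof is a pseudoinverse estimate. Using the standard identity $\bv\bv^{\top}\leq \alpha M \Leftrightarrow \alpha\geq \bv^{\top}M^+\bv$, the target inequality $B_N[\psi,r-1]\geq 0$ reduces to
\begin{equation*}
(d_B-1)\;\geq\; \caN_{r-1}(\psi)\,\bv_-^{\top}(B_N^{(a)})^{+}\bv_-,
\end{equation*}
provided $\bv_-$ lies in the range of $B_N^{(a)}$. This is where the hypothesis $N\leq 2\tilde{r}$ enters: it guarantees that the ambient dimension $\lceil N/2\rceil\leq \tilde{r}$ is spanned by the $\tilde{r}$ Vandermonde vectors associated with the distinct values $(r-1)\ell_j$. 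Bounding the pseudoinverse via the diagonal similarity to $A_0$, and controlling $\|D^{-1}\bv_-\|^2$ by $\tilde{r}$ through the estimate $\caN_{r-1}(\psi)\leq 1/r\leq r-1$ from \thref{thm:k_negativity}, then yields the bound with prefactor $r\tilde{r}/\sigma_{\min}(A_0)$, matching precisely the stated hypothesis on $d_B$.

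The main obstacle I anticipate is this pseudoinverse estimate: one has to reconcile the two Vandermonde bases (raw $\ell_j$ in $A_0$ versus scaled $(r-1)\ell_j$ in $B_N^{(a)}$) and the unweighted versus multiplicity-weighted atomic measures they induce, while keeping the constants under control. The even and odd $N$ cases give slightly different Hankel structures (cf.\ \eqref{eq:BN}), and in the even case the weights $(r-1-\ell_j)\ell_j$ in $A_0$ and $(1-\ell_j)\ell_j$ in $B_N^{(a)}$ appear to go the wrong way; combining them with the $(r-1)^2$ overall factor from the diagonal conjugation turns the argument around, but requires careful bookkeeping. Controlling constants tightly enough to match the stated bound, rather than a weaker variant, is the most delicate step.
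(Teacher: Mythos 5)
Your first half is exactly the paper's argument: for each $1\le k\le r-1$, \thref{thm:pure_state} gives $\caR_k(\psi)\not\geq 0$ with $\chi=2\tilde r$ distinct nonzero eigenvalues, and $N\ge 4\tilde r-1=2\chi-1$ lets \thref{thm:maximal_order} convert this into $B_N[\psi,k]\not\geq0$. Your second half is also structurally the paper's proof: the paper splits $B_N[\psi,r-1]=A_1+A_2$, where $A_1=\sum_j p_{N,r-1}(\omega_j)b_N(\omega_j)b_N(\omega_j)^\top$ carries the single negative atom $\omega_{\min}=-\caN_{r-1}(\psi)$ and $A_2=\sum_j(d_Bm_j-1)p_{N,r-1}[(r-1)\ell_j]\lsp b_N[(r-1)\ell_j]b_N[(r-1)\ell_j]^\top$ is PSD, and then uses Weyl's inequality $\sigma_{\min}(A_1+A_2)\ge\sigma_{\min}(A_1)+\sigma_{\min}(A_2)$ with $\sigma_{\min}(A_1)\ge p_{N,r-1}(\omega_{\min})\lVert b_N(\omega_{\min})\rVert_2^2\ge -r\tilde r\caN_{r-1}(\psi)$ and $\sigma_{\min}(A_2)\ge(d_B-1)\sigma_{\min}(A_0)$. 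Your pseudoinverse criterion is a cosmetic variant of the Weyl step and, after the same crude estimates ($\lVert b_N(\omega_{\min})\rVert_2^2\le\tilde r$ because $|\omega_{\min}|<1$ and the ambient dimension is $\lceil N/2\rceil\le\tilde r$; $v^\top M^+v\le\lVert v\rVert^2/\sigma_{\min}(M)$), it lands on the same sufficient condition; the Vandermonde lemma enters identically to guarantee full rank.

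The one step you flag as delicate is the genuine weak point, and your proposed resolution does not work as described. For even $N$ you want to dominate $A_0$ (atoms at $\ell_j$, weights $\ell_j(r-1-\ell_j)$) by your $B_N^{(a)}$ (atoms at $(r-1)\ell_j$, weights proportional to $(r-1)^2\ell_j(1-\ell_j)$ after the conjugation by $\diag(1,r-1,\ldots)$), and you claim the $(r-1)^2$ factor ``turns the argument around.'' It does not: the termwise inequality $(r-1)^2\ell_j(1-\ell_j)\ge\ell_j(r-1-\ell_j)$ fails whenever $\ell_j>(r-1)/r$ (e.g., $r=3$ with Schmidt vector $(0.9,0.05,0.05)$), and since the Vandermonde atoms are linearly independent here, a failed termwise comparison cannot be repaired by a matrix inequality between the two sums of rank-one terms. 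The clean fix --- and what the paper's proof actually does --- is to avoid changing atoms altogether: take $A_0=\sum_jp_{N,r-1}[(r-1)\ell_j]\lsp b_N[(r-1)\ell_j]b_N[(r-1)\ell_j]^\top$, i.e., build the Hankel matrix from the sequence $\sum_j[(r-1)\ell_j]^n$ rather than $\sum_j\ell_j^n$; then $A_2\ge(d_B-1)A_0$ is immediate from $d_Bm_j-1\ge d_B-1$ and the nonnegativity of the weights, with no rescaling or weight comparison needed. (For odd $N$ the unscaled $A_0$ of the statement is still dominated by the scaled one because $\diag(1,r-1,\ldots)\ge I$, so your route closes there; for even $N$ you should work directly with the scaled atoms.)
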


	When $|\psi\>=|+_r\rangle$ is the  maximally entangled state of Schmidt rank $r$, we have $\tilde{r}=1$. In this case, 
	it suffices to apply the third order moment-based $k$-reduction criterion because \thref{thm:pure_state_detect} implies the following proposition.
	
	\begin{proposition} \label{prop:mes}
		Given $|+_r\rangle$ defined in \eref{eq:rank_r_maximally_entangled} and $1 \le k < r$, we have
		$B_N[+_r,k] \not\geq 0$ for all $N \ge 3$.
	\end{proposition}

	\subsection{Certification of entanglement dimensionality of isotropic states}
	
	Given a Hilbert space $\caH_{\rmA\rmB}$ with $d_\rmA = d_\rmB = d$, an isotropic state on $\caH_{\rmA\rmB}$ is a state of the form
	\begin{equation}\label{eq:isotropic_state}
		\rho_F \equiv \frac{1-F}{d^2-1}I + \frac{d^2 F-1}{d^2-1}|+_d\rangle\langle+_d|, \quad F \in [0,1].
	\end{equation}
	It has been proved that $\SN(\rho_F) = \lceil dF \rceil$, and the $k$-reduction map with $k = \lceil dF\rceil -1$ can certify the isotropic state \cite{terhal2000schmidt}. In the next proposition, we prove that $\rho_F$ can be certified by fidelity-based method as well.
	
	\begin{proposition}
		The isotropic state $\rho_F$ is $\lceil dF\rceil$-faithful.
	\end{proposition}
	\begin{proof}
		Note that
		\begin{gather}
			\sn(|+_d\>) = d \ge \lceil dF\rceil,\\
			\<+_d|\rho_F|+_d\> = F > \frac{1}{d}(\lceil dF\rceil-1), \\ \sigma_i^\downarrow(+_d) = \frac{1}{d}\quad \forall i.
		\end{gather}
		In conjunction with the sufficient condition of $k$-faithfulness in \eref{eq:kunfaithfulcondition}, the proposition is proved.
	\end{proof}
	
	Here we show that the third  order moment-based criterion is enough to certify all isotropic states as well. Meanwhile, the CM criterion and its moment-based variants in \eqsref{eq:CMM4}{eq:CMM4b} are equally effective. \Psref{prop:isotropic} and \ref{prop:cmc_isotropic} below are  proved in \asref{app:k_reduction_detect} and \ref{app:cmc_detect}, respectively.
	\begin{proposition}\label{prop:isotropic}
		Suppose $\rho_F$ is the isotropic state defined in \eref{eq:isotropic_state} and $ 1 \le k \le \lceil dF\rceil - 1$. Then $B_3[\rho_F,k] \not\geq 0$.
	\end{proposition}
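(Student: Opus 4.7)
The plan is to exploit the two-point spectrum of $\caR_k(\rho_F)$ to obtain a closed-form factorization of $\det(B_3[\rho_F,k])$ and then read off its sign. Because $q_1=kd-1>0$ under the hypotheses, the condition $B_3\not\geq 0$ is equivalent to $\det(B_3)<0$, so the proof reduces to a sign analysis of a single scalar.

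First I would compute the spectrum of $\caR_k(\rho_F)$ explicitly. Since $\rho_A=I_A/d$ for any isotropic state, one has $\caR_k(\rho_F)=(k/d)I-\rho_F$. The spectrum of $\rho_F$ consists of the eigenvalue $F$ with multiplicity $1$ and the eigenvalue $(1-F)/(d^2-1)$ with multiplicity $d^2-1$, so $\caR_k(\rho_F)$ has precisely two distinct eigenvalues
\[
\mu_1=\frac{k}{d}-F,\qquad \mu_2=\frac{k}{d}-\frac{1-F}{d^2-1},
\]
with multiplicities $1$ and $d^2-1$, respectively. Consequently, $q_n=\mu_1^n+(d^2-1)\mu_2^n$ for every $n$.

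Next I would substitute these expressions into $\det(B_3[\rho_F,k])=q_1q_3-q_2^2$ and expand. The terms $\mu_1^4$ and $(d^2-1)^2\mu_2^4$ cancel, leaving the compact identity
\[
\det(B_3[\rho_F,k])=(d^2-1)\,\mu_1\mu_2(\mu_1-\mu_2)^2.
\]
This factorization is the core of the argument; it uses nothing beyond the binomial expansion and the identity $a^2+b^2-2ab=(a-b)^2$, and the remaining work is elementary.

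Finally I would determine the sign from the hypothesis $1\le k\le\lceil dF\rceil-1$. A nonempty range forces $dF>1$, and hence $F>1/d$. The upper bound on $k$ yields $k<dF$ whether or not $dF$ is an integer, so $\mu_1<0$. The lower bound $k\ge1$, combined with the uniform estimate $d(1-F)/(d^2-1)\le d/(d^2-1)<1$ valid for $d\ge2$, gives $\mu_2>0$. The equality $\mu_1=\mu_2$ holds only at $F=1/d^2<1/d$, which is excluded. Therefore $\mu_1\mu_2<0$ and $(\mu_1-\mu_2)^2>0$, so $\det(B_3[\rho_F,k])<0$ and $B_3\not\geq 0$, as claimed. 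The main point requiring care is the uniform handling of integer versus non-integer values of $dF$ when deducing $k<dF$ from $k\le\lceil dF\rceil-1$, but this is a one-line observation rather than a genuine obstacle.
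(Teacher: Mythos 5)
Your proof is correct, but it takes a genuinely different route from the paper's. The paper simply observes that $\caR_k(\rho_F)$ has at most two distinct nonzero eigenvalues and invokes \thref{thm:maximal_order} with $N=3\ge 2\chi-1$, so that $B_3[\rho_F,k]\ge 0$ iff $\caR_k(\rho_F)\ge 0$; the negativity of the eigenvalue $k/d-F$ under the hypothesis $k\le\lceil dF\rceil-1$ then finishes the argument in one line. You instead bypass \thref{thm:maximal_order} (and hence the Vandermonde lemma behind it) entirely and compute the determinant in closed form; your factorization
\begin{equation*}
q_1q_3-q_2^2=(d^2-1)\,\mu_1\mu_2(\mu_1-\mu_2)^2
\end{equation*}
checks out, as do the sign claims: $k\le\lceil dF\rceil-1$ implies $k<dF$ whether or not $dF$ is an integer, so $\mu_1<0$; the bound $d(1-F)/(d^2-1)\le d/(d^2-1)<1\le k$ for $d\ge 2$ gives $\mu_2>0$; and $\mu_1=\mu_2$ only at $F=1/d^2$, which is excluded since the nonempty range forces $F>1/d$. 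What your approach buys is an explicit, quantitative expression for the violation $\det(B_3)$, which could feed into a robustness analysis against statistical error, and it is fully self-contained. What it costs is the extra sign bookkeeping ($\mu_2>0$ and $\mu_1\neq\mu_2$) that the paper's appeal to \thref{thm:maximal_order} renders unnecessary — in particular, the degenerate case $\mu_1=\mu_2$ is handled automatically there because it only lowers $\chi$. One small remark: your reduction of $B_3\not\geq 0$ to $\det(B_3)<0$ via $q_1=kd-1>0$ is exactly the observation the paper makes at the start of \sref{sec:protocol}, so that part is shared.
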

	
	\begin{proposition}\label{prop:cmc_isotropic}
		Suppose $\rho_F$ is the isotropic state defined in \eref{eq:isotropic_state}, and $T$ is its CM with respect to given local operator bases that satisfy the conditions in \eref{eq:OperatorBasesCon}. Then all $d^2-1$ singular values of $T$ are equal to $(d^2 F - 1)/[d(d^2-1)]$ and 
		\begin{equation}
			\begin{aligned}
				\|T\|_1 &= dF - d^{-1}, \quad \|T\|_2^2 = \frac{(d^2 F - 1)^2}{d^2(d^2-1)},\\
				\|T\|_4^4 &= \frac{(d^2 F - 1)^4}{d^4(d^2-1)^3}.
			\end{aligned}
		\end{equation}
	\end{proposition}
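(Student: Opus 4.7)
The plan is to verify the proposition by a direct calculation of $TT^{\top}$, exploiting the explicit form of $\rho_F$ and the completeness of the local operator bases. Once $TT^{\top}$ is shown to be a scalar multiple of the identity on $\mathbb{R}^{d^2-1}$, all $d^2-1$ singular values of $T$ can be read off at once, and the three Schatten norms then reduce to elementary summations.

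First I would split $\rho_F=\alpha I+\beta|+_d\rangle\langle +_d|$ with $\alpha=(1-F)/(d^2-1)$ and $\beta=(d^2F-1)/(d^2-1)$. Since each $P_i^{(A)}$ and $P_j^{(B)}$ is traceless by \eref{eq:OperatorBasesCon}, the identity part of $\rho_F$ contributes nothing to $T$; expanding $|+_d\rangle=d^{-1/2}\sum_m|m\rangle_A|m\rangle_B$ then yields
\[
T_{ij}=\frac{\beta}{d^2}\sum_{m,n}(P_i^{(A)})_{nm}(P_j^{(B)})_{nm}.
\]

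Next I would compute $(TT^{\top})_{ik}=\sum_j T_{ij}T_{kj}$ and apply the completeness relation
\[
\sum_{j=1}^{d^2-1}(P_j^{(B)})_{nm}(P_j^{(B)})_{qp}=d\,\delta_{np}\delta_{mq}-\delta_{nm}\delta_{qp},
\]
which follows from the fact that $\{I/\sqrt d\}\cup\{P_j^{(B)}/\sqrt d\}$ forms a Hilbert--Schmidt orthonormal basis of $\caL(\caH_B)$ under the normalization in \eref{eq:OperatorBasesCon}. The first term contracts to $d\,\Tr(P_i^{(A)} P_k^{(A)})=d^2\delta_{ik}$ by the orthogonality condition, while the second term vanishes by the tracelessness of $P_i^{(A)}$. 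Consequently $TT^{\top}=(\beta/d)^2 I_{d^2-1}$, so every singular value of $T$ equals $|\beta|/d=|d^2F-1|/[d(d^2-1)]$.

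With the singular values in hand, the three norm identities in the statement follow immediately from $\|T\|_p^p=(d^2-1)(|\beta|/d)^p$ evaluated at $p\in\{1,2,4\}$. There is no genuine obstacle in this argument: the completeness relation is the only nontrivial ingredient and is a standard consequence of the normalization conditions on the local bases. An even shorter route would invoke the $U\otimes\bar U$-covariance of $\rho_F$ together with Schur's lemma to conclude \emph{a priori} that $TT^{\top}$ must be proportional to the identity, but the direct verification above is self-contained and requires no additional representation-theoretic machinery.
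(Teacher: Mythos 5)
Your proof is correct, and it takes a genuinely (if mildly) different route from the paper's. The paper fixes a convenient pairing of bases, taking $P^{(B)}_j=(P^{(A)}_j)^*$, so that after expanding $|+_d\rangle$ the entry $T_{jk}$ collapses to $\frac{d^2F-1}{d^2(d^2-1)}\Tr\bigl(P_jP_k^\dagger\bigr)=\frac{d^2F-1}{d(d^2-1)}\delta_{jk}$ by the orthogonality relation alone; $T$ is then literally a multiple of the identity, and the statement for arbitrary admissible bases is inherited from the unitary (orthogonal) invariance of singular values, which the paper leaves implicit. You instead keep both local bases arbitrary and compute $TT^\top$, using the completeness relation $\sum_j(P^{(B)}_j)_{nm}(P^{(B)}_j)_{qp}=d\,\delta_{np}\delta_{mq}-\delta_{nm}\delta_{pq}$ for the $B$ basis together with the tracelessness and orthogonality of the $A$ basis to get $TT^\top=(\beta/d)^2 I_{d^2-1}$. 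Your version is a bit longer but proves the proposition exactly as phrased, for any bases satisfying \eref{eq:OperatorBasesCon}, without invoking basis-independence; the paper's is shorter but relies on that (standard) fact. Both computations are sound. One cosmetic point, which applies equally to the paper's own statement and proof: the singular values are really $|d^2F-1|/[d(d^2-1)]$, so $\|T\|_1=dF-d^{-1}$ only for $F\ge d^{-2}$ — harmless in the regime $F>k/d$ where the criterion is actually used, but your $|\beta|/d$ is the more careful formulation.
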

	In conjunction with  the fact $\SN(\rho_F) = \lceil dF \rceil$ and \eref{eq:correlation_matrix}
	we can conclude that the CM criterion can certify the Schmidt number of any isotropic state as well. Even if only the second and fourth moments are available, we can still deduce that $\|T\|_1\geq \|T\|_2^3/\|T\|_4^2= dF - d^{-1}$. 
	Therefore, the moment-based criterion in \eref{eq:CMM4b} is equally effective, and so is the alternative in \eref{eq:CMM4}.

	
	\subsection{Certification of entanglement dimensionality of \texorpdfstring{$k$}{empty}-unfaithful states}

	In this subsection, we show that the $k$-reduction moment method
	can  certify entanglement dimensionality of $(k+1)$-unfaithful states.
	
	First,  we construct a family of $k$-unfaithful states on $\caH_{\rmA\rmB}$, assuming that $d_\rmA=d_\rmB=d\geq k$. Let $\chi$ be a positive integer that satisfies $1\leq \chi\leq \min\{d,2k-2\}$ and let $\pi$ and $\tau$ be two permutations on $\{0,\ldots,d-1\}$. 
	Consider the following state on $\caH_{\rmA\rmB}$:
	\begin{equation}\label{eq:special_unfaithful}
		\begin{aligned}
			\rho &=\frac{1}{2}|+_{k}\>\<+_{k}| + \frac{1}{2}|\Phi\>\<\Phi|,\\
			|\Phi\> &= \sum_{j=0}^{\chi-1}\sqrt{\phi_j}\lsp|\pi(j)\>\otimes|\tau(j)\>,
		\end{aligned}
	\end{equation}
	where $\sum_{j=0}^{\chi-1}\phi_j=1$. Note that the state $\rho$ can be regarded as a generalization of $\rho_\UF$ in \eref{eq:unfaithful}. The following proposition proved  in \aref{app:proof_of_prop_unfaithful} is helpful for constructing $k$-unfaithful states of the form \eref{eq:special_unfaithful}
	that have Schmidt number at least $k$.
	
	\begin{proposition}\label{prop:unfaithful}
		Suppose the state $\rho\in \caS(\caH_{\rmA\rmB})$ has
		the form  \eref{eq:special_unfaithful}, where $1\leq \chi\leq \min\{d,2k-2\}$,  $\pi(j) \neq \tau(j)$ for each $j = 0,\ldots,k-1$, and the following index sets
		\begin{equation}
			\begin{aligned}
				\caI_L &\equiv \{j\}_{j=0}^{k-1}\cup\{\pi(j)\}_{j=0}^{\chi-1},\\
				\caI_R &\equiv \{j\}_{j=0}^{k-1}\cup\{\tau(j)\}_{j=0}^{\chi-1}
			\end{aligned}
		\end{equation} 
		satisfy $|\caI_L|,|\caI_R| \le 2k-2$. Then $\rho$ is $k$-unfaithful.
		If in addition
		\begin{equation} \label{eq: extra_condition}
			\sum_{i=0}^{k-1}\frac{1}{1 + k\phi_{\pi^{-1}(i)}} > k-1,
		\end{equation}
		then $\caR_{k-1}(\rho)\not\ge 0$.
	\end{proposition} 
	
	Now, suppose $k\ge 3$,  $d>2k-2$, $\chi=2$, $\phi_j = 1/2$ for $j = 0,1$, and
	\begin{equation}
		\pi(j) = k-1+j,\quad \tau(j) = k-j\quad  \forall \lsp j,
	\end{equation}
	where the arithmetic is modulo $d$. Then the state $\rho$ in \eref{eq:special_unfaithful} 
	simplifies to 
	\begin{equation}
		\begin{aligned}
			\rho_{\UF,k} &\equiv \frac{1}{2}|+_k\>\<+_k| + \frac{1}{2}|\Phi^{+}_k\>\<\Phi^{+}_k|,\\
			|\Phi^{+}_k\> &\equiv \frac{1}{\sqrt{2}}(|k-1,k\> + |k,k-1\>),
		\end{aligned}
	\end{equation}
	and it satisfies all the conditions in \pref{prop:unfaithful} given that 
	\begin{equation}
		\sum_{j=0}^{k-1}\frac{1}{1 + k\phi_{j+1-k}} = k-1 + \frac{1}{1 + k/2} > k-1.
	\end{equation}
	So $\rho_{\UF,k}$  has Schmidt number  $k$ but is $k$-unfaithful.
	
	Next, we use Algorithm~\ref{alg:full_protocol} to certify the Schmidt number of the state $\rho_{\UF,k}$ with  $d_\rmA = d_\rmB = 16$. In \tref{tab:unfaithful} we summarize the $(k-1)$-reduction negativity of $\rho_{\UF,k}$ and the minimum moment order $N$ such that  $B_N[\rho_{\UF,k}, k-1] \not\ge 0$ for $k=3,4,\ldots, 10$. 
	These results highlight the versatility of  $k$-reduction moments in certifying the entanglement dimensionality of 
	$k$-unfaithful states in addition to $k$-faithful states.
	
	\begin{center}
		\begin{table}
			\caption{\label{tab:unfaithful} Certification of the entanglement dimensionality of the $k$-unfaithful state $\rho_{\UF,k}$ using $k$-reduction moments. The second column shows the $(k-1)$-reduction negativity of $\rho_{\UF,k}$, and the third column shows the minimum order $N$ such that $B_N[\rho_{\UF,k}, k-1] \not\ge 0$, which can be determined by virtue of  Algorithm~\ref{alg:full_protocol}.}
			
			\begin{tabular}{c|cc}
				\hline
				\hline
				\rule{0pt}{1em}
				$k\ $ & $\caN_{k-1}\left(\rho_{\UF,k}\right)$ & minimum order \\[0.2em]
				\hline
				\\[-0.8em]
				3 & $0.0749 $ & 7 \\
				\\[-0.8em]
				4 & $0.0449 $ & 7 \\
				\\[-0.8em]
				5 & $0.0301$ & 7 \\
				\\[-0.8em]
				6 & $0.0216$ & 7 \\
				\\[-0.8em]
				7 & $0.0163$ & 9 \\
				\\[-0.8em]
				8 & $0.0128$ & 9 \\
				\\[-0.8em]
				9 & $0.0103$ & 9 \\
				\\[-0.8em]
				10 & $0.0085$ & 9 \\
				\hline
				\hline
			\end{tabular}
			
		\end{table}
	\end{center}

	\section{\label{sec:performance}Numerical results on detection ratios}
	To complement the theoretical analysis in \sref{sec:moment}, here we provide extensive numerical results on the  detection ratios of  the $N$-th order moment-based $k$-reduction criterion in \eref{eq:k_moment_criteria} and moment-based CM criterion in \eref{eq:CMM4}. Given an ensemble $\caE$ of quantum states,  the \emph{detection ratio}  of the $N$-th order moment-based $k$-reduction criterion is defined as
	\begin{equation}\label{eq:detect_ratio}
		\text{Pr}\left\{ B_N[\rho, k] \not\geq 0 \ : \  \rho\in\caE\right\}.
	\end{equation}
	Similarly, the detection ratio of the moment-based CM criterion is defined as
	\begin{equation}
		\text{Pr}\left\{ \left(\|T\|_2^2,\|T\|_4^4\right)\not\in \caT_k : \  \rho\in\caE\right\}.
	\end{equation}
	Not surprisingly, the detection ratios may strongly depend on the ensemble $\caE$ under consideration. Nevertheless, for any given ensemble, a higher detection ratio is an indication of a higher performance. In the rest of this section we shall consider several concrete ensembles of quantum states, including the ensemble of pure states with a fixed Schmidt number,  the ensemble of Haar-random pure states with depolarizing noise, and the ensemble of mixed states with an induced measure.
	
	\subsection{\label{sec:pure_state_ensemble}Ensemble of pure states with a fixed Schmidt number}
	
	Define the ensemble 
	\begin{gather}\label{eq:pure_ensemble}
		\caE_r \equiv \left\{|\psi\rangle = \sum_{i=0}^{r-1} \sqrt{\lambda_i}|i\>_{\rmA}\otimes|i\>_{\rmB}\right\},
	\end{gather}
	where $\{|i\>_{\rmA}\},\{|i\>_{\rmB}\}$ are elements of two orthonormal  bases of $\caH_\rmA$ and $\caH_\rmB$, respectively, and $\{\lambda_i\}_{i=0}^{r-1}$ is uniformly distributed in the interior of a $(r-1)$-dimensional probability simplex [cf. \eref{eq:simplex}],
	which is also known as the Dirichlet distribution. By definition all states in $\caE_r$ have Schmidt numbers equal to $r$. Here it is not necessary to consider local unitary transformations  because the moment-based  $k$-reduction  criteria and  CM criterion are both invariant under local unitary transformations.  In numerical simulations we choose $d=d_\rmA=d_\rmB$.
	
	\begin{figure}
		\centering
		\includegraphics[width=0.4\textwidth]{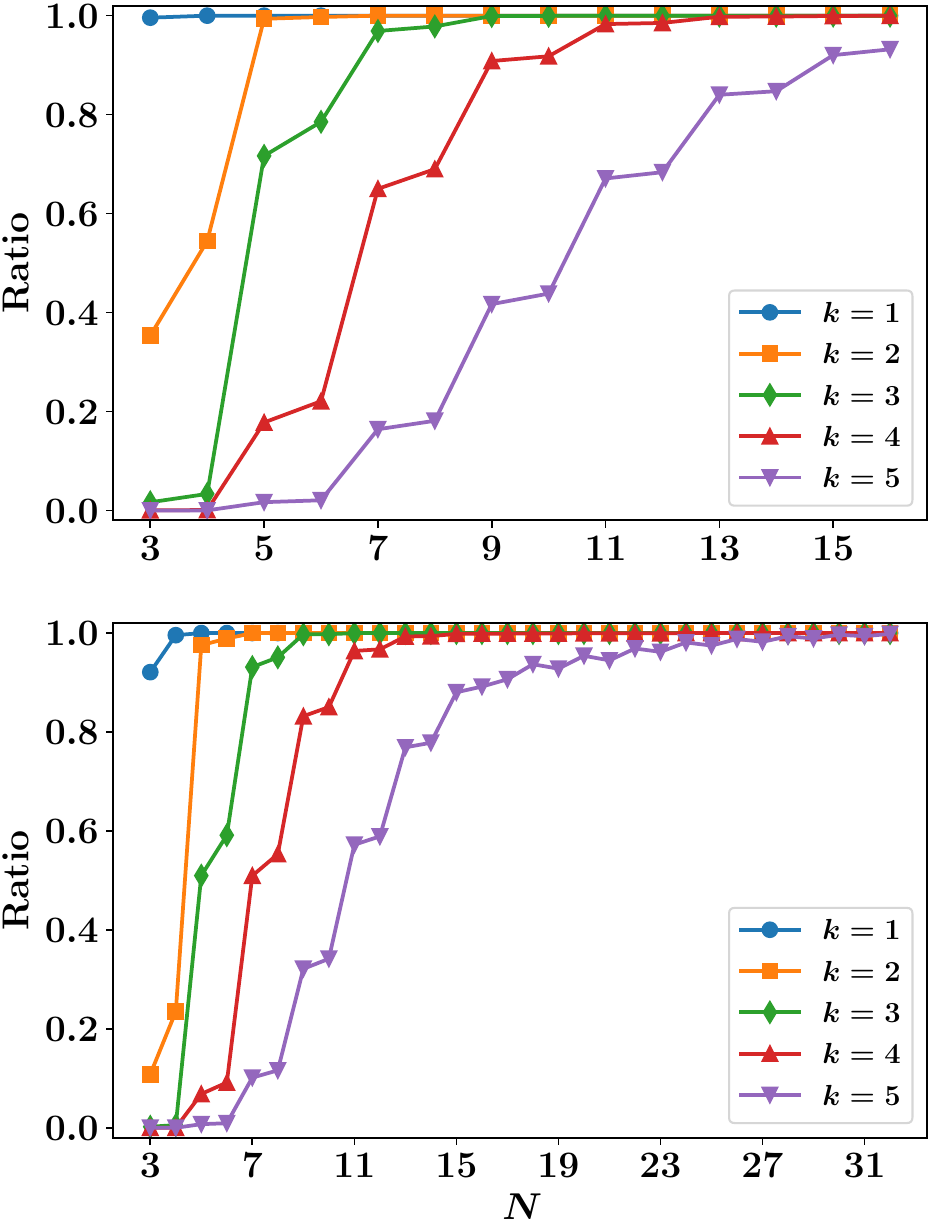}\\
		\caption{\label{fig:bounded_k}Detection ratio of the $N$-th order moment-based $k$-reduction criterion in \eref{eq:k_moment_criteria} for the pure state ensemble $\caE_6$ in \eref{eq:pure_ensemble} as a function of $N$.
			Here $d=8$ (upper plot) and $d=16$ (lower plot).  The detection ratio increases monotonically with $N$, but decreases monotonically with $k$. }
	\end{figure}
	
	\begin{figure}
		\centering
		\includegraphics[width=0.4\textwidth]{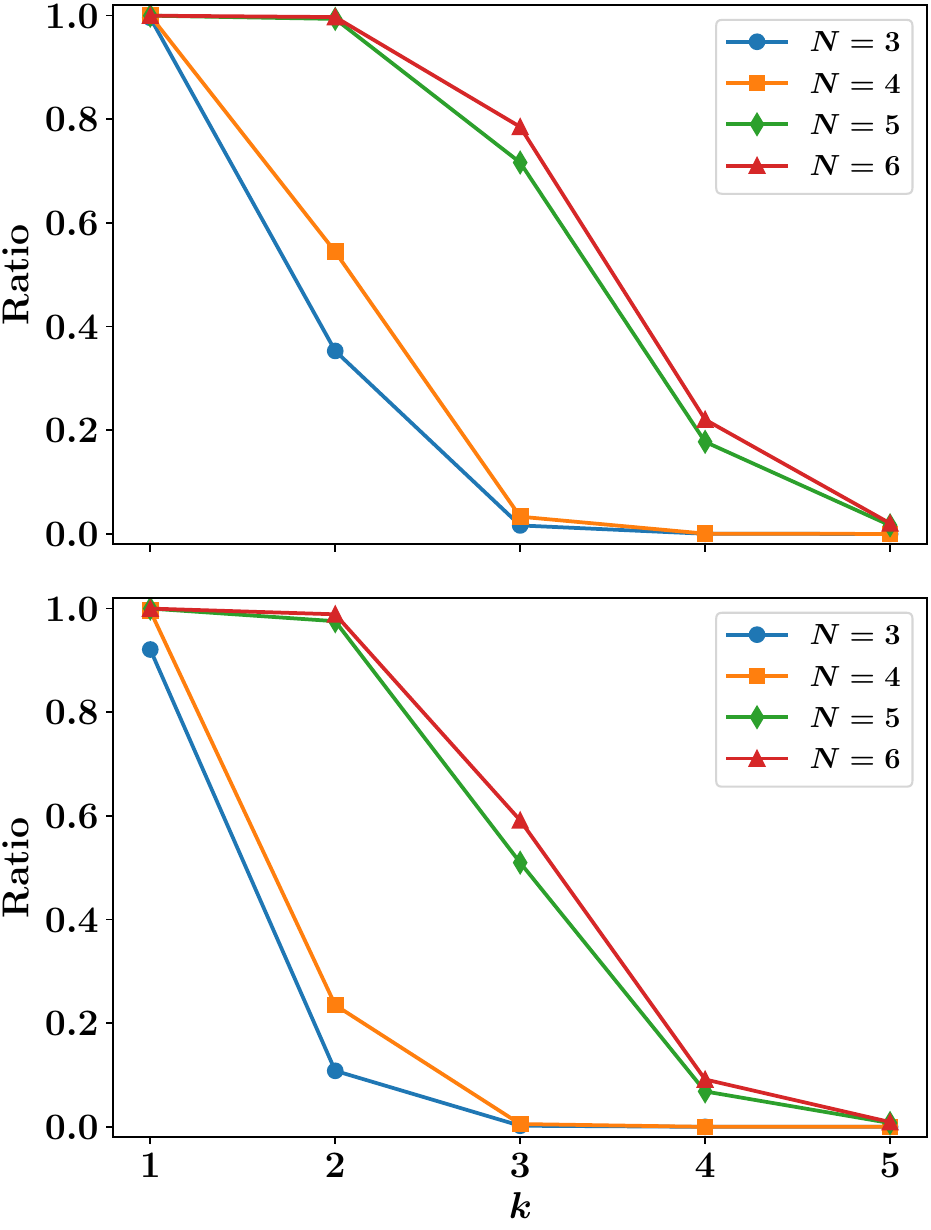}\\
		\caption{\label{fig:bounded_k2}Detection ratio of the $N$-th order moment-based $k$-reduction criterion  for the pure state ensemble $\caE_6$  as a function of $k$. 
			Here $d=8$ (upper plot) and $d=16$ (lower plot) as in \fref{fig:bounded_k}.  }
	\end{figure}
	
	\begin{figure}
		\centering  
		\includegraphics[width=0.45\textwidth]{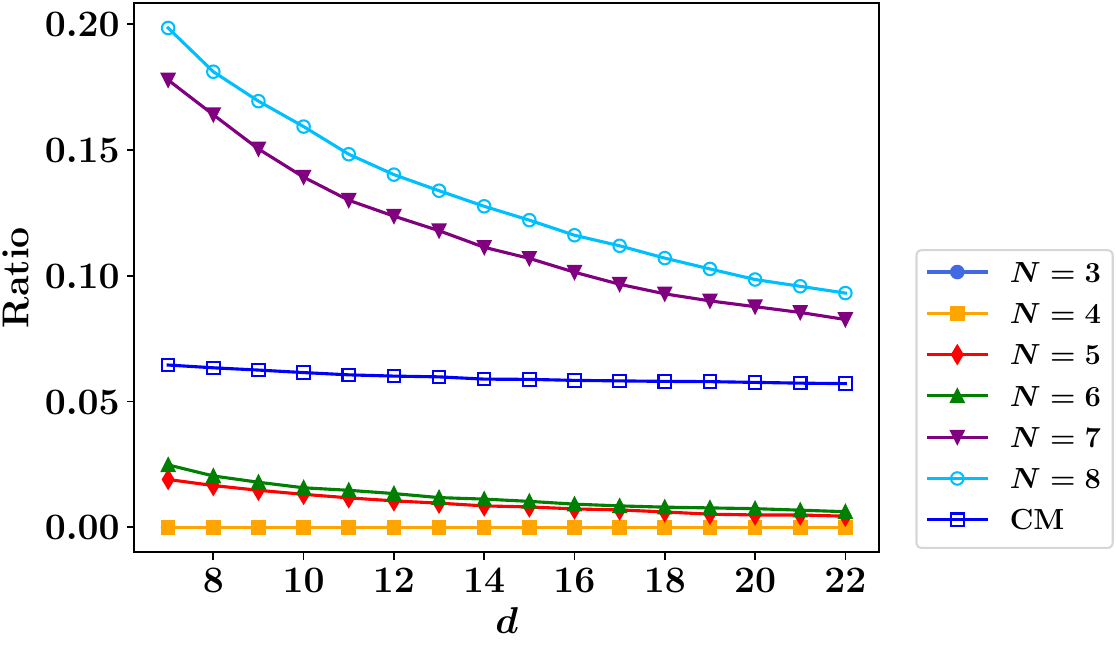}
		\caption{\label{fig:decay}Detection ratios of the $N$-th order moment-based $k$-reduction criterion in \eref{eq:k_moment_criteria} and moment-based CM criterion in \eref{eq:CMM4} with $k=5$ for the ensemble $\caE_6$ in  \eref{eq:pure_ensemble} with $7\leq d\leq 22$. The $N$-th order moment-based $k$-reduction criterion has a larger detection ratio than the moment-based CM criterion when $N\geq 7$. 
		}
	\end{figure}
	
	Figures \ref{fig:bounded_k} and \ref{fig:bounded_k2} illustrate the detection ratio  of the $N$-th order moment-based $k$-reduction criterion in \eref{eq:k_moment_criteria} as a function of $N$ and $k$ for the ensemble $\caE_6$ with two different local dimensions, namely, $d=8$ and $d=16$.  As expected the detection ratio  decreases monotonically with $k$, but increases monotonically with the order $N$ and eventually approaches~1 when $N$ is sufficiently large.
	
	Now, we compare the detection ratios of the $N$-th order moment-based $k$-reduction criterion in \eref{eq:k_moment_criteria} and moment-based CM criterion in
	\eref{eq:CMM4}. \Fref{fig:decay} illustrates the results on the ensemble $\caE_6$ as functions of the order $N$ and local dimension $d$, where $k=5$.
	According to this figure,  the detection ratio of the  moment-based CM criterion is almost independent of the local dimension, 
	but the  detection ratios of moment-based $k$-reduction criteria tend to decrease with the local dimension. Nevertheless, the $N$-th order moment-based $k$-reduction criterion has a larger detection ratio when $N$ is sufficiently large.

	\subsection{Ensemble of Haar-random pure states with depolarizing noise}
	
	\begin{figure}
		\centering
		\includegraphics[width = 0.4\textwidth]{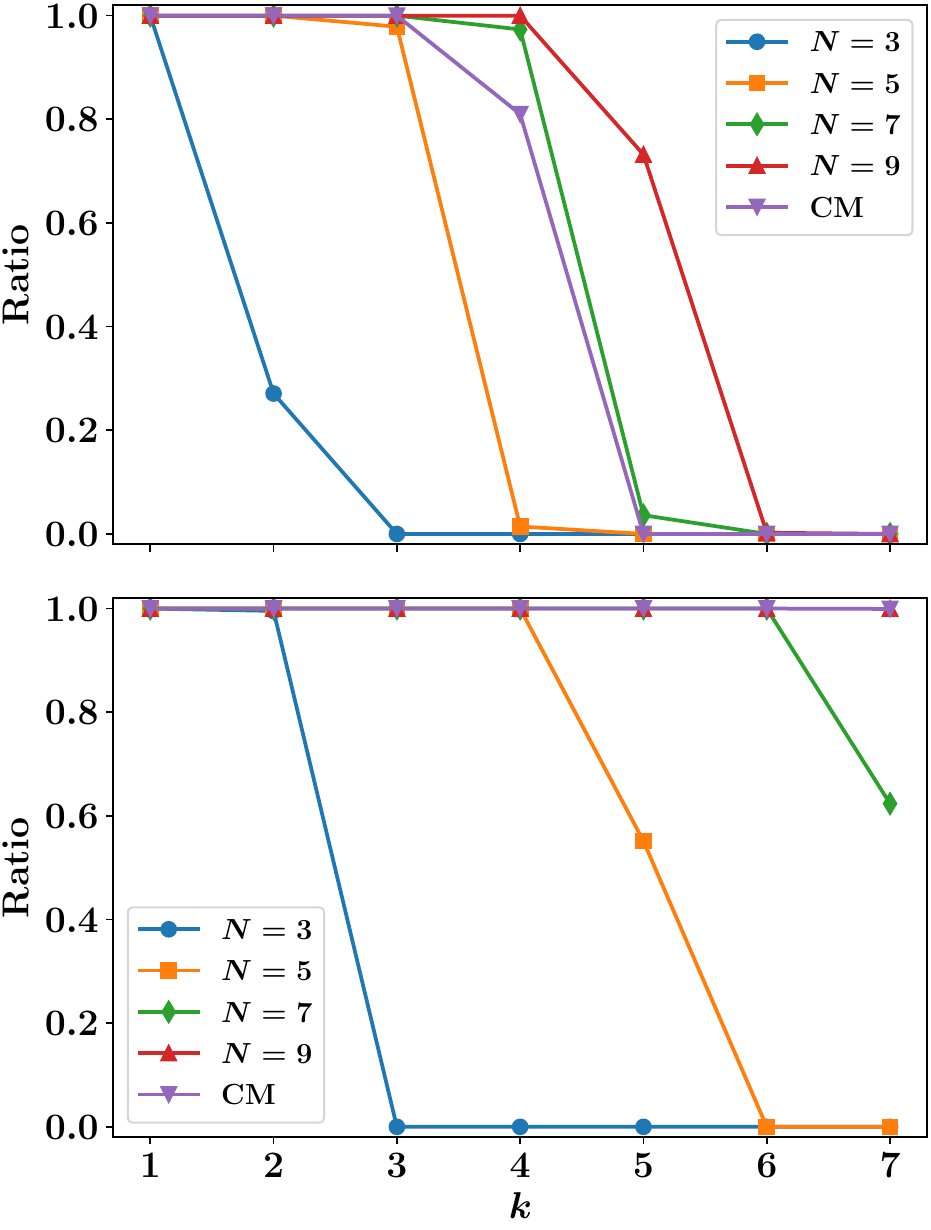}
		\caption{\label{fig:depolarized_e0}Detection ratios of different $k$-reduction moment criteria in \eref{eq:k_moment_criteria} and moment-based CM criterion in \eref{eq:CMM4} for the Haar-random ensemble  $\caE_{\Haar,0}$ in \eref{eq:Haar_ensemble}. Here $d=8$ (upper plot) and $d=16$ (lower plot) as in \fref{fig:bounded_k}. }
	\end{figure}
	
	\begin{figure}
		\centering
		\includegraphics[width = 0.4\textwidth]{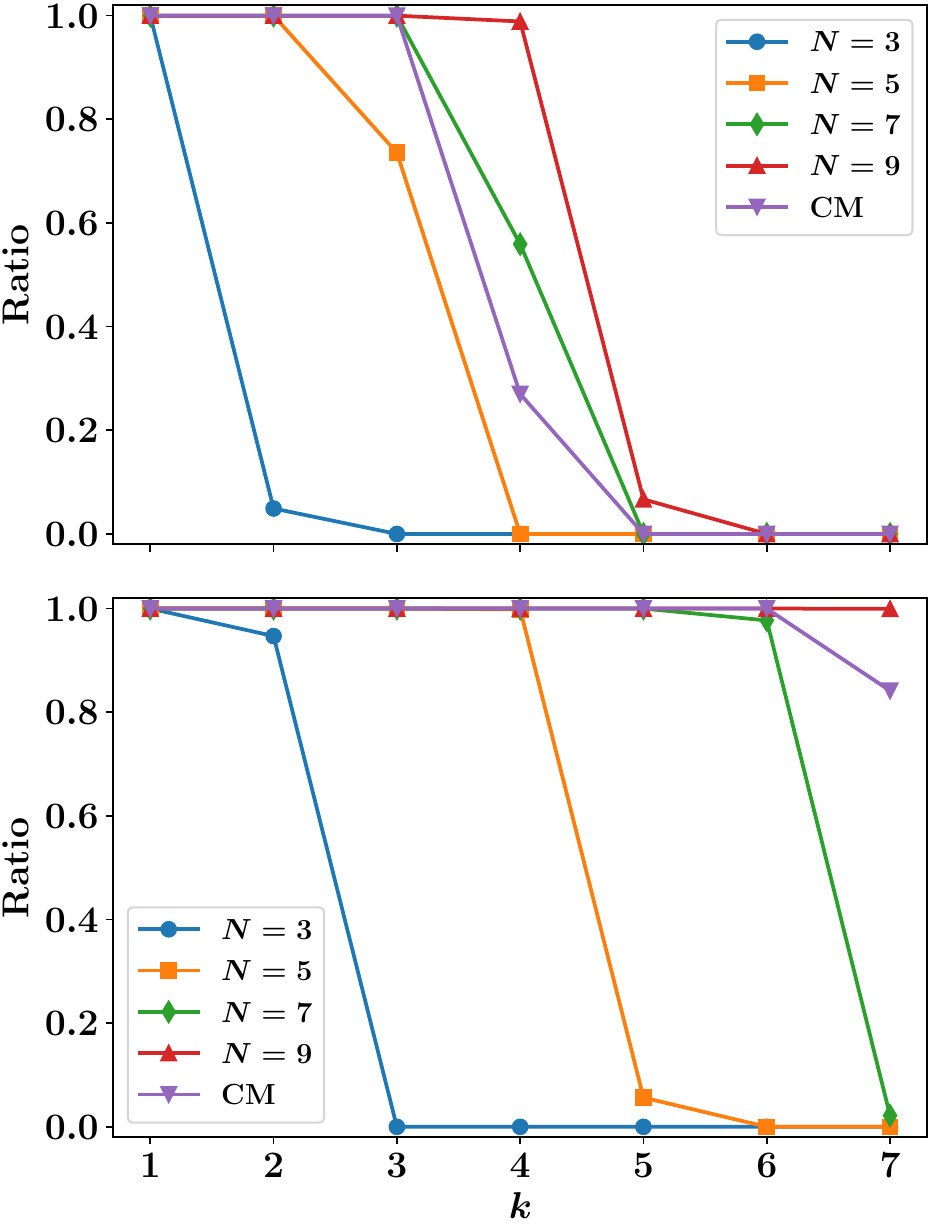}\caption{\label{fig:depolarized_e0.1}Detection ratios of different $k$-reduction moment criteria in \eref{eq:k_moment_criteria} and moment-based CM criterion in \eref{eq:CMM4} for the Haar-random ensemble $\caE_{\Haar,0.1}$ in \eref{eq:Haar_ensemble}. Here $d=8$ (upper plot) and $d=16$ (lower plot) as in \fref{fig:bounded_k}. }
	\end{figure}
	
	\begin{figure}
		\centering
		\includegraphics[width = 0.4\textwidth]{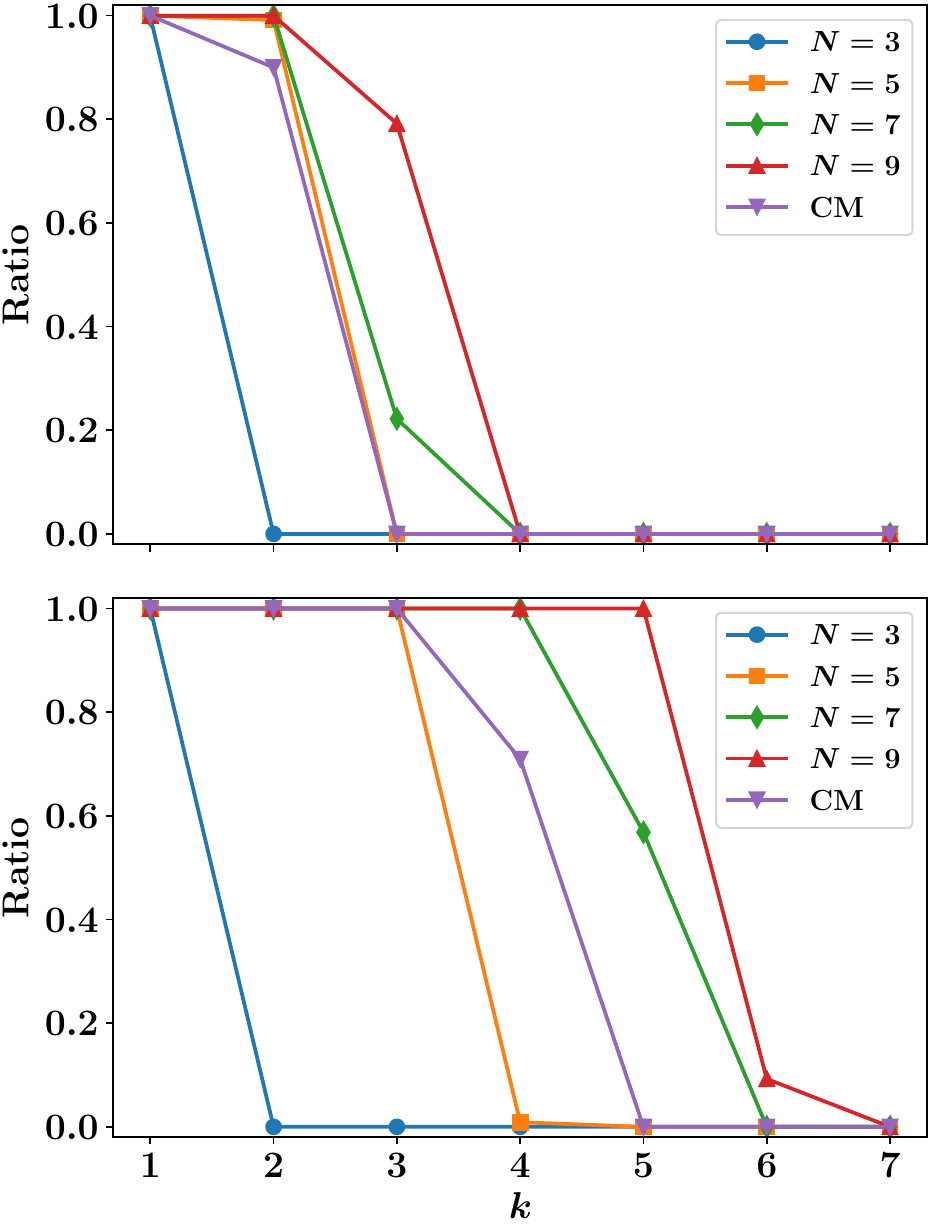}\caption{\label{fig:depolarized_e0.5}Detection ratios of different $k$-reduction moment criteria in \eref{eq:k_moment_criteria} and moment-based CM criterion in \eref{eq:CMM4} for the Haar-random ensemble $\caE_{\Haar,0.5}$ in \eref{eq:Haar_ensemble}. Here $d=8$ (upper plot) and $d=16$ (lower plot) as in \fref{fig:bounded_k}. }
	\end{figure}
	
	Next, we turn to the ensemble $\caE_{\Haar, \varepsilon}$ of Haar-random states with depolarizing noise as defined in  \eref{eq:Haar_ensemble}, where the parameter $\varepsilon$ characterizes the noise strength. \Fsref{fig:depolarized_e0}-\ref{fig:depolarized_e0.5}
	illustrate the 
	detection ratios of the $N$-th order moment-based $k$-reduction criterion in \eref{eq:k_moment_criteria} and moment-based CM criterion in
	\eref{eq:CMM4} for  the ensemble $\caE_{\Haar, \varepsilon}$ with two different local dimensions and three different noise strengths. Not surprisingly,  the detection ratios increase monotonically with $N$ but decrease monotonically with $k$ and $\varepsilon$. When $d=8$ and $\varepsilon=0, 0.1$ for example, the $7$-th order moment-based $k$-reduction criterion has a larger detection ratio than the moment-based CM criterion, but the situation is different when $d=16$.

	\subsection{Ensemble of mixed states with an induced measure}
	
	Finally, we consider the ensemble $\caE_{D,K}$ of mixed states with an induced measure, which is defined in \sref{sec:RMCCMCcom}.
	In numerical simulations we choose  $D=d^2=16^2$ and $K=2,3,4,5,6$. \Tsref{tab:k_induce_1} and \ref{tab:k_induce_2} show the detection ratios of states with 
	various lower bounds for the Schmidt numbers. The detection ratios in  \Tref{tab:k_induce_1} are determined by moment-based $k$-reduction criteria described by Algorithm \ref{alg:full_protocol} with maximal order $N^*=16$ and $r=16$. The detection ratios in \Tref{tab:k_induce_2} are determined by moment-based CM criterion in \eref{eq:CMM4}.
	From the numerical results, we can see that the $k$-reduction moment criteria
	are weaker than moment-based CM criterion for $K=4,5,6$, but becomes stronger for $K=3$. 
	
	\begin{center}
		\begin{table}
			
			\caption{\label{tab:k_induce_1}
				Detection ratio of the $N$-th order moment-based $k$-reduction criterion in \eref{eq:k_moment_criteria} with $N=16$ (corresponding to Algorithm \ref{alg:full_protocol} with $N^*=16$) for the ensemble $\caE_{D,K}$ with  $d_\rmA = d_\rmB = 16$ and $D = d_\rmA d_\rmB = 256$.
			}
			\begin{tabular}{c|ccccc}
				\hline
				\hline
				\backslashbox{$k$}{\raisebox{-0.7ex}{$K$}} & 2\quad & 3\quad & 4\quad & 5\quad & 6\quad \\
				\hline
				\\[-0.8em]
				2 & 1 & 1 & 1 & 1 & 1\\
				\\[-0.8em]
				3 & 1 & 1 & 1 & 1 & 0.9559\\
				\\[-0.8em]
				4 & 1 & 1 & 1 & 0.0049 & 0\\
				\\[-0.8em]
				5 & 1 & 1 & 0.0001 & 0 & 0 \\
				\\[-0.8em]
				6 & 1 & 0.0146 & 0 & 0 &  0\\
				\\[-0.8em]
				7 & 0.9999 & 0 & 0 & 0 &  0\\
				\\[-0.8em]
				8 & 0.0441 & 0 & 0 & 0 &  0\\
				\hline
				\hline
			\end{tabular}
			\caption{\label{tab:k_induce_2}
				Detection ratio of the moment-based CM criterion in \eref{eq:CMM4} for the ensemble $\caE_{D,K}$. Here $d_\rmA = d_\rmB = 16$ and $D = d_\rmA d_\rmB = 256$ as in \tref{tab:k_induce_1}. 
			}
			\begin{tabular}{c|ccccc}
				\hline
				\hline
				\backslashbox{$k$}{\raisebox{-0.7ex}{$K$}} & 2 & 3 & 4 & 5\quad & 6 \quad\\
				\hline
				\\[-0.8em]
				4 & 1 & 1 & 1 & 1 & 1 \\
				\\[-0.8em]
				5 & 1 & 1 & 0.9337 & 0 & 0 \\
				\\[-0.8em]
				6 & 0.9996 & 0.0016 & 0 & 0 &  0\\
				\\[-0.8em]
				7 & 0.0059 & 0 & 0 & 0 &  0\\
				\hline
				\hline
			\end{tabular}
			
		\end{table}
	\end{center}

	\section{\label{sec:moment_estimation} Estimation of \texorpdfstring{$k$}{empty}-reduction moments}
	
	To certify the  Schmidt number using moment-based $k$-reduction criteria, we need to estimate the $k$-reduction moments $q_n = \caR_k(\rho)^n$ that appear in the entries of the Hankel matrices. Simple analysis shows that these moments can be expressed as follows:
	\begin{equation}
		\Tr\left[\text{poly}(\rho,\rho_{\rmA}\otimes I_{\rmB})\right].
	\end{equation}
	In this section, we introduce two approaches for estimating these moments.

	\subsection{\label{sec:moment_estimation_cs}Moment estimation by randomized measurements}
	Here we discuss moment estimation based on 
	the classical shadow \cite{huan2020}, which is a prototypical method based on randomized measurements. This method consists of two steps, namely, data acquisition and prediction. 
	
	Suppose $\rho\in\caS(\caH_{\rmA\rmB})$, and $\dim(\caH_{\rmA\rmB}) = D$. In the phase of data acquisition, we first randomly sample a unitary operator  $U$ from a given unitary ensemble $\caU$ on the Hilbert space $\caH_{\rmA\rmB}$ and apply it to the state $\rho$ under consideration. Then we perform a measurement in the computational basis and record the measurement outcome $b$. Based on $U$ and $b$, we can create a classical shadow of the unknown state as follows:
	\begin{equation}
		\hr = \caM^{-1}_{\mathrm{cs}}\left(U^\dag|b\rangle\langle b|U\right).
	\end{equation}
	Here $\caM_{\mathrm{cs}}$ is the measurement channel, which depends on the ensemble $\caU$, and $\caM^{-1}_{\mathrm{cs}}$ is the inverse of $\caM_{\mathrm{cs}}$ and is known as the reconstruction map. If $\caU$ corresponds to the global Clifford group, then the above equation reduces to
	\begin{align}
		\hr=(D+1)U^\dag |b\>\<b|U -I.
	\end{align}
	
	In the prediction phase, we need to estimate  the expectation value of a given observable $O$ (generalization to more observables is straightforward). Given $M$ samples of the  classical shadow we can  construct an estimator based on the empirical mean as follows:
	\begin{equation}
		\hat{o} = \frac{1}{M}\sum_{m=1}^{M} \Tr(\hr_m O).
	\end{equation}
	The variance of this estimator is controlled by the \emph{shadow norm} introduced in \cite{huan2020}. The worst-case variance is minimized when the ensemble $\caU$ forms a unitary 3-design. 
	Alternatively, we can employ the median-of-means to suppress the probability of large deviation. 
	
	In addition, 
	the  classical shadow method can be applied to estimating quadratic functions of $\rho$ that have the form  $o=\Tr[O(\rho\otimes \rho)]$, where $O$ is now an operator on $\caH^{\otimes 2}_{\rmA\rmB}$. Here is a specific estimator \cite{huan2020}:
	\begin{equation}
		\hat{o}=\frac{1}{M(M-1)}\sum_{m=1}^{M}\sum_{n=1,n\neq m}^M\Tr[O (\hr_{m}\otimes \hr_{n})].
	\end{equation} 
	For example, the purity $\Tr(\rho^2)$ can be estimated by
	\begin{align}
		&\quad \frac{1}{M(M-1)}\sum_{m=1}^{M}\sum_{n=1,n\neq m}^M\Tr[\bbW_{(12)}(\hr_m\otimes\hr_n)]\nonumber\\
		&= \frac{1}{M(M-1)}\sum_{m=1}^{M}\sum_{n=1,n\neq m}^M \Tr(\hr_m \hr_n),
	\end{align}
	where $\bbW_{(12)}$ is the swap operator on $\caH^{\otimes 2}_{\rmA\rmB}$. Other nonlinear functions can be estimated in a similar way.

	Next, we turn to the estimation of reduction moments that are required for certifying the Schmidt number. The following theorem clarifies the sample complexities of estimating the moments involved in the third order moment-based $k$-reduction criterion; see \aref{app:haar_random} for a proof. 
	\begin{theo}\label{thm:sample_complexity}
		Suppose $\rho\in \caS(\caH_{\rmA\rmB})$, $d_\rmA = d_\rmB = \sqrt{D}$, and we are given access to a unitary 3-design. Then the sample complexity of estimating the moments $p_2, p_3, a_2, a_3, 
		t_2$ defined in \eref{eq:pnant2}
		up to accuracy $\caO(1)$ using the classical shadow method is $\caO(D)$.
	\end{theo}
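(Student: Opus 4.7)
The plan is to construct U-statistic estimators built on $M$ i.i.d.\ classical shadows $\hat{\rho}_1,\dots,\hat{\rho}_M$ drawn from the given 3-design and to bound their variance via a Hoeffding-style orthogonal decomposition. The first step is to rewrite each of the five moments as a linear functional $\Tr[O\,\rho^{\otimes n}]$ with $n\in\{2,3\}$. Standard swap/cycle identities give $p_n=\Tr[C_{[n]}\,\rho^{\otimes n}]$ and $a_n=\Tr[(C_{[n]}^{A}\otimes I_{B}^{\otimes n})\,\rho^{\otimes n}]$ for $n=2,3$, where $C_{[n]}$ is the cyclic permutation on $n$ copies of $\caH_{AB}$ and $C_{[n]}^{A}$ its restriction to the $A$-registers. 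A direct index calculation further shows $t_2=\Tr[(C_{[3]}^{A}\otimes I_{B_1}\otimes S_{B_2 B_3})\,\rho^{\otimes 3}]$, with $S_{B_2 B_3}$ the swap between the $B$-registers of copies $2$ and $3$. In every case $O$ is a partial permutation with $\|O\|_\infty=1$ acting on registers of local dimension $\sqrt{D}$.

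With these expressions, I would form the symmetrized unbiased U-statistic $\widehat{O}=\binom{M}{n}^{-1}\sum_{i_1<\cdots<i_n}\tfrac{1}{n!}\sum_{\sigma}\Tr[O\,\hat{\rho}_{i_{\sigma(1)}}\otimes\cdots\otimes\hat{\rho}_{i_{\sigma(n)}}]$ and Hoeffding-decompose its variance into $\mathrm{Var}[\widehat{O}]\le\sum_{j=1}^{n}\caO(\xi_j/M^{j})$. Here $\xi_j$ is the second moment of the level-$j$ projection, which after contracting $O$ against $n-j$ copies of $\rho$ and exploiting the independence of distinct shadows reduces to the quadratic form $\Tr\bigl[(O^{(j)}\otimes O^{(j)})\,(\bbE[\hat{\rho}^{\otimes 2}])^{\otimes j}\bigr]$, with the two copies of $O^{(j)}$ contracted across corresponding slots. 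The operator $\bbE[\hat{\rho}^{\otimes 2}]$ is given explicitly by the 2-design property contained in the 3-design assumption (cf.\ \cite{huan2020}) as a linear combination of $\rho\otimes\rho$, the swap, $\rho\otimes I+I\otimes\rho$, and $I\otimes I$ with $D$-dependent coefficients.

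Performing the resulting cycle counts is the third step. Because each $O^{(j)}$ remains a partial permutation on registers of dimension $\sqrt{D}$, the Frobenius-type trace $\Tr[(O^{(j)})^2]$ factors into $d_A^{\#(A\text{-cycles})}\,d_B^{\#(B\text{-cycles})}\le D^{n}$; after expanding $\bbE[\hat{\rho}^{\otimes 2}]$ into its four constituent terms and tracking the cancellations, the dominant contribution is $\xi_j=\caO(D^{j-1})$ for every one of the five observables. The Hoeffding bound thus becomes $\mathrm{Var}[\widehat{O}]=\caO(1/M)+\caO(D/M^{2})+\caO(D^{2}/M^{3})$, which is $\caO(1)$ as soon as $M=\caO(D)$. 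Chebyshev's inequality and the standard median-of-means amplification \cite{huan2020} then deliver $\caO(1)$-accuracy with any constant success probability, all within $M=\caO(D)$ samples.

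The main obstacle is the level-$n$ bound $\xi_n=\caO(D^{n-1})$ for the cubic case, and in particular for $t_2$: the operator $C_{[3]}^{A}\otimes I_{B_1}\otimes S_{B_2 B_3}$ is asymmetric between the $B$-registers, so the cycle count is not a plain $d_A^{a}\,d_B^{b}$ read off directly from $O$. One must track how the $I_{B_1}$-trace interacts with the $S_{B_2 B_3}$-swap after combining with the four pieces of $\bbE[\hat{\rho}^{\otimes 2}]^{\otimes 3}$, and verify that no pairing closes enough extra loops to exceed $\caO(D^{2})$. The same cycle analysis, now applied to the partial contractions $O^{(1)}, O^{(2)}$, yields the claimed $\caO(D^{j-1})$ at lower levels $j<n$ and completes the sample complexity bound.
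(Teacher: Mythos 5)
Your overall strategy mirrors the paper's: build U-statistic estimators from classical shadows, evaluate the same-sample second moment $\bbE[\hr\otimes\hr]$ via the 3-design (the paper's Lemma~\ref{lem:classicalshadow}), and bound each level of the variance decomposition by trace/cycle counting. Two structural differences are benign: you treat $t_2$ as a cubic functional of $\rho$ alone via $\Tr[(C^{A}_{[3]}\otimes I_{B_1}\otimes S_{B_2B_3})\rho^{\otimes 3}]$, whereas the paper uses a separate pool of shadows for $\rho_A$ and a degree-$(2,1)$ kernel; and note that evaluating $\bbE[\hr\otimes\hr]$ genuinely requires the third moment of $U$ (the Born weight $\<b|U^\dagger\rho U|b\>$ supplies an extra copy of $U$), not merely the ``2-design property contained in'' the 3-design.

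The genuine gap is your claimed scaling $\xi_j=\caO(D^{j-1})$, which fails at the top level $j=n$ and is contradicted by the paper's explicit computations. For $n=2$ one has $\xi_2\approx\Tr\bigl(\bbE[\hr\otimes\hr]^2\bigr)=(D+1)^2+\caO(D)=\Theta(D^2)$, not $\caO(D)$: the leading piece of $\bbE[\hr\otimes\hr]$ is $\approx\bbW$ with an $\caO(1)$ coefficient, and $\Tr(\bbW^2)=D^2$. For $n=3$ the symmetrization over orderings that you build into the kernel is precisely what produces the worst term: the paper finds $\bbE[\Tr(\hr_1\hr_2\hr_3)\Tr(\hr_2\hr_1\hr_3)]=\caO(D^3)$ (while the ``aligned'' pairing $\bbE[\Tr(\hr_1\hr_2\hr_3)^2]$ is only $\caO(D)$), so $\xi_3=\Theta(D^3)$, not $\caO(D^2)$. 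Consequently your stated variance bound $\caO(1/M)+\caO(D/M^2)+\caO(D^2/M^3)$ --- which would even imply $M=\caO(D^{2/3})$ for the cubic moments --- is not correct; the cycle count you defer as ``the main obstacle'' is exactly where the extra loops close. The theorem itself survives because the corrected top-level bound $\xi_n=\caO(D^n)$ (and $\caO(D^{5/2})$ in the paper's mixed scheme for $t_2$) still yields $\xi_n/M^n=\caO(1)$ at $M=\caO(D)$, but to complete the proof you must either carry out the pairing-by-pairing expansion of $(\bbE[\hr\otimes\hr])^{\otimes n}$ against $O\otimes O$ for all $n!$ cross orderings, as the paper does term by term, or weaken your claim to $\xi_j=\caO(D^{j})$ and verify it for the asymmetric $t_2$ observable.
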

	
	Note that the randomized measurements required to apply our certification protocol can be realized using a unitary 3-design, say the  global Clifford group \cite{webb2015clifford,zhu2017multiqubit}, which is quite appealing to practical applications. By contrast, to apply the CM protocol \cite{liu2023characterizing,wyderka2023probing}, it is necessary to use unitary 4-designs on $\caH_\rmA$ and $\caH_\rmB$; moreover, a unitary 8-design is required to 
	bound the variance.
	
	\subsection{\label{sec:swap_test}Moment estimation by permutation tests}
	
	Next, we show that the sample complexity of estimating the $k$-reduction moments can be reduced to $\caO(1)$ if permutation tests based on collective measurements are accessible. Suppose $\rho_1, \rho_2, \ldots, \rho_N\in \caS(\caH_{\rmA\rmB})$; then $\bigotimes_{n=1}^N \rho_n\in \caS(\caH^{\otimes N}_{\rmA\rmB})$. Let $\bbW_\pi$ be the unitary operator tied to the standard cyclic permutation.  Add an ancillary qubit prepared in the state $|+\>$ and apply the controlled cyclic shift gate $|0\rangle\langle0|\otimes I + |1\rangle\langle 1|\otimes\bbW_\pi$ on the entire system then yields
	\begin{equation}
		\begin{aligned}
			&\quad \frac{|0\rangle\langle 0|}{2} \otimes \bigotimes_{n=1}^N \rho_n + \frac{|1\rangle\langle 1|}{2}\otimes \bbW_\pi \bigotimes_{n=1}^N \rho_n \bbW^{-1}_\pi\\
			& + \frac{|0\rangle\langle 1|}{2} \otimes \bigotimes_{n=1}^N \rho_n \bbW^{-1}_\pi + \frac{|1\rangle\langle 0|}{2} \otimes \bbW_\pi\bigotimes_{n=1}^N \rho_n.
		\end{aligned}
	\end{equation}

	The reduced state on the ancilla qubit reads
	\begin{equation}
		\frac{I}{2} + \frac{1}{2}\Tr(\rho_1 \rho_2\cdots  \rho_N) X.
	\end{equation}
	Measure the ancilla qubit in the Pauli $X$ basis, then the probabilities of obtaining outcomes $\pm$ are given by 
	\begin{equation}
		P_\pm = \frac{1}{2} \pm  \frac{1}{2}\Tr(\rho_1 \rho_2\cdots  \rho_N),
	\end{equation}
	which means 
	\begin{equation}
		P_+ - P_- = \Tr(\rho_1 \rho_2\cdots  \rho_N).
	\end{equation}
	The full protocol is termed as the \textit{permutation test}, which leads to a simple recipe for estimating the $N$-th moment $\Tr(\rho^N)$ of a general quantum state $\rho\in \caS(\caH_{\rmA\rmB})$, and the sample complexity is independent of the system size. This method can readily be applied to estimating all the relevant parameters featuring in the moment-based $k$-reduction criteria. 
	
	In practice, the controlled cyclic shift gate can be implemented as follows. A controlled-SWAP gate (also known as the Fredkin gate) can be constructed using Toffoli and CNOT gates (see \cite{cruz2024shallow} for a detailed decomposition). Furthermore, any full permutation of $N$ elements can be expressed as a sequence of adjacent transpositions. Denote the controlled-SWAP between the $i$-th and $(i+1)$-th qubits by $|0\rangle\langle0|\otimes I + |1\rangle\langle1|\otimes \bbW_{(i,i+1)}$, then a controlled cyclic shift gate acting on $N$ qubits is given by the product
	\begin{equation}
		\prod_{i=1}^{N-1}\left[|0\>\<0|\otimes I + |1\>\<1|\otimes \bbW_{(i,i+1)}\right].
	\end{equation}
	
	Incidentally, a relevant work about the quantum advantage enabled by permutation tests can be found in \cite{liu2024separation}.

	\section{\label{sec:conclusion}Conclusion}
	
	By combining the $k$-reduction map and the moment method, we proposed a simple yet powerful approach for certifying 
	the  Schmidt number via $k$-reduction moments. These moments can be efficiently estimated by virtue of randomized measurements based on 3-designs, such as the Clifford group, which are significantly more practical than previous approaches based on 4-designs. Our approach does not require any prior assumption on the target state, is applicable to both $k$-faithful and $k$-unfaithful states, and thus has a much wider scope of applications than fidelity-based methods. The efficacy of the proposed method is substantiated by rigorous theoretical guarantees and comprehensive numerical benchmarks across a range of practically relevant settings.
	
	When $k$-reduction moments with sufficiently high orders are accessible, our criteria become equivalent to the $k$-reduction criterion itself, enabling accurate certification of the entanglement dimensionality of all pure states. Moreover,  only  the first three $k$-reduction moments are required for certifying the entanglement dimensionality of isotropic states. In the course of study, we  clarified the fundamental properties of the spectrum of the $k$-reduced operator and showed that  the $k$-reduction negativity is monotonic under LOCC for pure states.

	Our findings open several avenues for further explorations. First, is the $k$-reduction negativity monotonic under LOCC for general mixed states? Second, are all $k$-faithful states certifiable by the $k$-reduction map? Intuitively, the $k$-reduction criterion is more powerful than  fidelity-based methods, but a strict proof is still absent. Third, how robust are the $k$-reduction moment criteria to statistical fluctuation? To clarify the third question would require knowledge about the perturbation theory of Hankel matrices, which we  leave for future study. Finally, can we construct better criteria for certifying the entanglement dimensionality by virtue of  other $k$-positive maps in conjunction with the moment method? We anticipate that these questions can stimulate further progress on the study of high-dimensional entanglement.
	
	\emph{Note added.} After posting this paper, a recent paper working on a similar topic \cite{mallick2025detecting} was posted, which considered other $k$-positive maps and the application of Schmidt number certification on the task of quantum channel discrimination.
	
	The data used in this study are included in \cite{code2025}.

	\bigskip
	
	\section*{Acknowledgement}
	We thank Xiaodong Yu, Zihao Li, Datong Chen, Zhenhuan Liu, and Shuheng Liu for helpful discussions. This work is supported by Shanghai Science and Technology Innovation Action Plan (Grant No.~24LZ1400200), the
	National Key Research and Development Program
	of China (Grant No.~2022YFA1404204),  Shanghai
	Municipal Science and Technology Major Project
	(Grant No.~2019SHZDZX01), and National Natural
	Science Foundation of China (Grant No.~92165109).

	\bibliography{main}

@article{zyczkowski2001induced,
  title={Induced measures in the space of mixed quantum states},
  author={Zyczkowski, Karol and Sommers, Hans-J{\"u}rgen},
  journal={J. Phys. A: Math. Gen.},
  volume={34},
  number={35},
  pages={7111},
  year={2001},
  publisher={IOP Publishing},
  doi={10.1088/0305-4470/34/35/335},
  url={https://doi.org/10.1088/0305-4470/34/35/335}
}

@article{liu2024separation,
  title = {Separation between entanglement criteria and entanglement detection protocols},
  author = {Liu, Zhenhuan and Wei, Fuchuan},
  journal = {Phys. Rev. Res.},
  volume = {7},
  issue = {3},
  pages = {033121},
  numpages = {18},
  year = {2025},
  month = {Aug},
  publisher = {American Physical Society},
  doi = {10.1103/hm8j-wgqm},
  url = {https://link.aps.org/doi/10.1103/hm8j-wgqm}
}

@article{tomiyama1985geometry,
  title={On the geometry of positive maps in matrix algebras. {II}},
  author={Tomiyama, Jun},
  journal={Linear Algebra Appl.},
  volume={69},
  pages={169--177},
  year={1985},
  publisher={Elsevier},
  doi={10.1016/0024-3795(85)90074-6},
  url={https://doi.org/10.1016/0024-3795(85)90074-6}
}

@article{morelli2023resource,
  title={Resource-efficient high-dimensional entanglement detection via symmetric projections},
  author={Morelli, Simon and Huber, Marcus and Tavakoli, Armin},
  journal={Phys. Rev. Lett.},
  volume={131},
  number={17},
  pages={170201},
  year={2023},
  publisher={APS},
  doi={10.1103/PhysRevLett.131.170201},
  url={https://doi.org/10.1103/PhysRevLett.131.170201}
}

@article{bavaresco2018measurements,
  title={Measurements in two bases are sufficient for certifying high-dimensional entanglement},
  author={Bavaresco, Jessica and Herrera Valencia, Natalia and Kl{\"o}ckl, Claude and Pivoluska, Matej and Erker, Paul and Friis, Nicolai and Malik, Mehul and Huber, Marcus},
  journal={Nat. Phys.},
  volume={14},
  number={10},
  pages={1032--1037},
  year={2018},
  publisher={Nature Publishing Group UK London},
  doi={10.1038/s41567-018-0203-z},
  url={https://doi.org/10.1038/s41567-018-0203-z}
}

@article{horodecki2009quantum,
  title = {Quantum entanglement},
  author = {Horodecki, Ryszard and Horodecki, Pawe\l{} and Horodecki, Micha\l{} and Horodecki, Karol},
  journal = {Rev. Mod. Phys.},
  volume = {81},
  issue = {2},
  pages = {865--942},
  numpages = {0},
  year = {2009},
  month = {Jun},
  publisher = {American Physical Society},
  doi = {10.1103/RevModPhys.81.865},
  url = {https://link.aps.org/doi/10.1103/RevModPhys.81.865}
}

@article{nielsen1999conditions,
  title={Conditions for a class of entanglement transformations},
  author={Nielsen, Michael A},
  journal={Phys. Rev. Lett.},
  volume={83},
  number={2},
  pages={436},
  year={1999},
  publisher={APS},
  doi={10.1103/PhysRevLett.83.436},
  url={https://doi.org/10.1103/PhysRevLett.83.436}
}

@article{gross2007evenly,
    author = {Gross, D. and Audenaert, K. and Eisert, J.},
    title = {Evenly distributed unitaries: {On} the structure of unitary designs},
    journal = {J. Math. Phys.},
    volume = {48},
    number = {5},
    pages = {052104},
    year = {2007},
    month = {05},
    abstract = {We clarify the mathematical structure underlying unitary t-designs. These are sets of unitary matrices, evenly distributed in the sense that the average of any tth order polynomial over the design equals the average over the entire unitary group. We present a simple necessary and sufficient criterion for deciding if a set of matrices constitutes a design. Lower bounds for the number of elements of 2-designs are derived. We show how to turn mutually unbiased bases into approximate 2-designs whose cardinality is optimal in leading order. Designs of higher order are discussed and an example of a unitary 5-design is presented. We comment on the relation between unitary and spherical designs and outline methods for finding designs numerically or by searching character tables of finite groups. Further, we sketch connections to problems in linear optics and questions regarding typical entanglement.},
    issn = {0022-2488},
    doi = {10.1063/1.2716992},
    url = {https://doi.org/10.1063/1.2716992}
}

@incollection{roberts1993convex,
  title={Convex functions},
  author={Roberts, Arthur Wayne},
  booktitle={Handbook of convex geometry},
  pages={1081--1104},
  year={1993},
  publisher={Elsevier, Amsterdam, Netherlands},
  doi={10.1016/C2009-0-15705-7},
  url={https://doi.org/10.1016/C2009-0-15705-7}
}

@article{zhu2017multiqubit,
  title={Multiqubit {C}lifford groups are unitary 3-designs},
  author={Zhu, Huangjun},
  journal={Phys. Rev. A},
  volume={96},
  number={6},
  pages={062336},
  year={2017},
  publisher={APS},
  doi={10.1103/PhysRevA.96.062336},
  url={https://doi.org/10.1103/PhysRevA.96.062336}
}

@article{webb2015clifford,
author = {Webb, Zak},
title = {The {Clifford} group forms a unitary 3-design},
year = {2016},
issue_date = {November 2016},
publisher = {Rinton Press, Incorporated},
address = {Paramus, NJ},
volume = {16},
number = {15–16},
issn = {1533-7146},
abstract = {Unitary k-designs are finite ensembles of unitary matrices that approximate the Haar distribution over unitary matrices. Several ensembles are known to be 2-designs, including the uniform distribution over the Clifford group, but no family of ensembles was previously known to form a 3-design. We prove that the Clifford group is a 3-design, showing that it is a better approximation to Haar-random unitaries than previously expected. Our proof strategy works for any distribution of unitaries satisfying a property we call Pauli 2-mixing and proceeds without the use of heavy mathematical machinery. We also show that the Clifford group does not form a 4-design, thus characterizing how well random Clifford elements approximate Haar-random unitaries. Additionally, we show that the generalized Clifford group for qudits is not a 3-design unless the dimension of the qudit is a power of 2.},
journal = {Quantum Info. Comput.},
month = nov,
pages = {1379–1400},
numpages = {22},
keywords = {clifford group, quantum derandomization, unitary design},
doi={10.5555/3179439.3179447},
url={https://doi.org/10.5555/3179439.3179447}
}

@article{kueng2017low,
  title={Low rank matrix recovery from rank one measurements},
  author={Kueng, Richard and Rauhut, Holger and Terstiege, Ulrich},
  journal={Appl. Comput. Harmon. Anal.},
  volume={42},
  number={1},
  pages={88--116},
  year={2017},
  publisher={Elsevier},
  doi={10.1016/j.acha.2015.07.007},
  url={https://doi.org/10.1016/j.acha.2015.07.007}
}

@article{grier2022sample,
  title={Sample-optimal classical shadows for pure states},
  author={Grier, Daniel and Pashayan, Hakop and Schaeffer, Luke},
  journal={Quantum},
  volume={8},
  pages={1373},
  year={2024},
  publisher={Verein zur F{\"o}rderung des Open Access Publizierens in den Quantenwissenschaften},
  doi={10.22331/q-2024-06-17-1373},
  url={https://doi.org/10.22331/q-2024-06-17-1373}
}

@article{choi1972positive,
  title={Positive linear maps on {C}*-algebras},
  author={Choi, Man-Duen},
  journal={Can. J. Math.},
  volume={24},
  number={3},
  pages={520--529},
  year={1972},
  publisher={Cambridge University Press},
  doi={10.4153/CJM-1972-044-5},
  url={https://doi.org/10.4153/CJM-1972-044-5}
}

@article{takasaki1983geometry,
  title={On the geometry of positive maps in matrix algebras},
  author={Takasaki, Toshiyuki and Tomiyama, Jun},
  journal={Math. Zeitschrift},
  volume={184},
  pages={101--108},
  year={1983},
  publisher={Springer},
  doi={10.1007/BF0116200},
  url={https://doi.org/10.1007/BF01162009}
}

@misc{schmudgen2020lectures,
      title={Ten {L}ectures on the {M}oment {P}roblem}, 
      author={Konrad Schmüdgen},
      year={2020},
      eprint={2008.12698},
      archivePrefix={arXiv},
      primaryClass={math.FA}
}

@book{schmudgen2017moment,
  title={The {Moment} {Problem}},
  author={Schm{\"u}dgen, Konrad},
  year={2017},
  publisher={Springer, Cham, Switzerland},
  doi={10.1007/978-3-319-64546-9},
  url={https://doi.org/10.1007/978-3-319-64546-9}
}

@article{YuIG21,
  title = {Optimal Entanglement Certification from Moments of the Partial Transpose},
  author = {Yu, Xiao-Dong and Imai, Satoya and G\"uhne, Otfried},
  journal = {Phys. Rev. Lett.},
  volume = {127},
  issue = {6},
  pages = {060504},
  numpages = {6},
  year = {2021},
  month = {Aug},
  publisher = {American Physical Society},
  doi={10.1103/PhysRevLett.127.060504},
  url={https://doi.org/10.1103/PhysRevLett.127.060504}
}

@article{VidaW02,
  title = {Computable measure of entanglement},
  author = {Vidal, G. and Werner, R. F.},
  journal = {Phys. Rev. A},
  volume = {65},
  issue = {3},
  pages = {032314},
  numpages = {11},
  year = {2002},
  month = {Feb},
  publisher = {American Physical Society},
  doi={10.1103/PhysRevA.65.032314},
  url={https://doi.org/10.1103/PhysRevA.65.032314}
}

@book{nielsen2010quantum,
  title={Quantum computation and quantum information: 2nd edition},
  author={Nielsen, Michael A and Chuang, Isaac L},
  year={2010},
  publisher={Cambridge University Press, Combriage, UK},
  doi={10.1017/CBO9780511976667},
  url={https://doi.org/10.1017/CBO9780511976667}
}

@article{liu2023characterizing,
  title={Characterizing entanglement dimensionality from randomized measurements},
  author={Liu, Shuheng and He, Qiongyi and Huber, Marcus and G{\"u}hne, Otfried and Vitagliano, Giuseppe},
  journal={PRX Quantum},
  volume={4},
  number={2},
  pages={020324},
  year={2023},
  publisher={APS},
  doi={10.1103/PRXQuantum.4.020324},
  url={https://doi.org/10.1103/PRXQuantum.4.020324}
}

@article{wyderka2023probing,
  title={Probing the geometry of correlation matrices with randomized measurements},
  author={Wyderka, Nikolai and Ketterer, Andreas},
  journal={PRX Quantum},
  volume={4},
  number={2},
  pages={020325},
  year={2023},
  publisher={APS},
  doi={10.1103/PRXQuantum.4.020325},
  url={https://doi.org/10.1103/PRXQuantum.4.020325}
}

@article{elben2023randomized,
  title={The randomized measurement toolbox},
  author={Elben, Andreas and Flammia, Steven T and Huang, Hsin-Yuan and Kueng, Richard and Preskill, John and Vermersch, Beno{\^\i}t and Zoller, Peter},
  journal={Nat. Rev. Phys.},
  volume={5},
  number={1},
  pages={9--24},
  year={2023},
  publisher={Nature Publishing Group UK London},
  doi={10.1038/s42254-022-00535-2},
  url={https://doi.org/10.1038/s42254-022-00535-2}
}

@article{brydges2019probing,
  title={Probing {R}{\'e}nyi entanglement entropy via randomized measurements},
  author={Brydges, Tiff and Elben, Andreas and Jurcevic, Petar and Vermersch, Beno{\^\i}t and Maier, Christine and Lanyon, Ben P and Zoller, Peter and Blatt, Rainer and Roos, Christian F},
  journal={Science},
  volume={364},
  number={6437},
  pages={260--263},
  year={2019},
  publisher={American Association for the Advancement of Science},
  doi={10.1126/science.aau4963},
  url={https://doi.org/10.1126/science.aau4963}
}

@article{neven2021symmetry,
  title={Symmetry-resolved entanglement detection using partial transpose moments},
  author={Neven, Antoine and Carrasco, Jose and Vitale, Vittorio and Kokail, Christian and Elben, Andreas and Dalmonte, Marcello and Calabrese, Pasquale and Zoller, Peter and Vermersch, Benoit and Kueng, Richard and others},
  journal={npj Quantum Inf.},
  volume={7},
  number={1},
  pages={152},
  year={2021},
  publisher={Nature Publishing Group UK London},
  doi={10.1038/s41534-021-00487-y},
  url={https://doi.org/10.1038/s41534-021-00487-y}
}

@article{liu2024bounding,
  title={Bounding entanglement dimensionality from the covariance matrix},
  author={Liu, Shuheng and Fadel, Matteo and He, Qiongyi and Huber, Marcus and Vitagliano, Giuseppe},
  journal={Quantum},
  volume={8},
  pages={1236},
  year={2024},
  publisher={Verein zur F{\"o}rderung des Open Access Publizierens in den Quantenwissenschaften},
  doi={10.22331/q-2024-01-30-1236},
  url={https://doi.org/10.22331/q-2024-01-30-1236}
}

@article{zhou2020single,
  title={Single-copies estimation of entanglement negativity},
  author={Zhou, You and Zeng, Pei and Liu, Zhenhuan},
  journal={Phys. Rev. Lett.},
  volume={125},
  number={20},
  pages={200502},
  year={2020},
  publisher={APS},
  doi={10.1103/PhysRevLett.125.200502},
  url={https://doi.org/10.1103/PhysRevLett.125.200502}
}

@article{elben2020mixed,
  title={Mixed-state entanglement from local randomized measurements},
  author={Elben, Andreas and Kueng, Richard and Huang, Hsin-Yuan Robert and van Bijnen, Rick and Kokail, Christian and Dalmonte, Marcello and Calabrese, Pasquale and Kraus, Barbara and Preskill, John and Zoller, Peter and others},
  journal={Phys. Rev. Lett.},
  volume={125},
  number={20},
  pages={200501},
  year={2020},
  publisher={APS},
  doi={10.1103/PhysRevLett.125.200501},
  url={https://doi.org/10.1103/PhysRevLett.125.200501}
}

@article{terhal2000schmidt,
  title={Schmidt number for density matrices},
  author={Terhal, Barbara M and Horodecki, Pawe{\l}},
  journal={Phys. Rev. A},
  volume={61},
  number={4},
  pages={040301},
  year={2000},
  publisher={APS},
  doi={10.1103/PhysRevA.61.040301},
  url={https://doi.org/10.1103/PhysRevA.61.040301}
}

@article{weilenmann2020entanglement,
  title={Entanglement detection beyond measuring fidelities},
  author={Weilenmann, Mirjam and Dive, Benjamin and Trillo, David and Aguilar, Edgar A and Navascu{\'e}s, Miguel},
  journal={Phys. Rev. Lett.},
  volume={124},
  number={20},
  pages={200502},
  year={2020},
  publisher={APS},
  doi={10.1103/PhysRevLett.125.159903},
  url={https://doi.org/10.1103/PhysRevLett.125.159903}
}

@article{chruscinski2014entanglement,
  title={Entanglement witnesses: construction, analysis and classification},
  author={Chru{\'s}ci{\'n}ski, Dariusz and Sarbicki, Gniewomir},
  journal={J. Phys. A: Math. Theor.},
  volume={47},
  number={48},
  pages={483001},
  year={2014},
  publisher={IOP Publishing},
  doi={10.1088/1751-8113/47/48/483001},
  url={https://doi.org/10.1088/1751-8113/47/48/483001}
}

@article{guhne2009entanglement,
  title={Entanglement detection},
  author={G{\"u}hne, Otfried and T{\'o}th, G{\'e}za},
  journal={Phys. Rep.},
  volume={474},
  number={1-6},
  pages={1--75},
  year={2009},
  publisher={Elsevier},
  doi={10.1016/j.physrep.2009.02.004},
  url={https://doi.org/10.1016/j.physrep.2009.02.004}
}

@article{huan2020,
  title={Predicting many properties of a quantum system from very few measurements},
  author={Huang, Hsin-Yuan and Kueng, Richard and Preskill, John},
  journal={Nat. Phys.},
  volume={16},
  number={10},
  pages={1050--1057},
  year={2020},
  publisher={Nature Publishing Group UK London},
  doi={10.1038/s41567-020-0932-7},
  url={https://doi.org/10.1038/s41567-020-0932-7}
}

@article{friis2019entanglement,
  title={Entanglement certification from theory to experiment},
  author={Friis, Nicolai and Vitagliano, Giuseppe and Malik, Mehul and Huber, Marcus},
  journal={Nat. Rev. Phys.},
  volume={1},
  number={1},
  pages={72--87},
  year={2019},
  publisher={Nature Publishing Group UK London},
  doi={10.1038/s42254-018-0003-5},
  url={https://doi.org/10.1038/s42254-018-0003-5}
}

@article{sanpera2001schmidt,
  title={Schmidt-number witnesses and bound entanglement},
  author={Sanpera, Anna and Bru{\ss}, Dagmar and Lewenstein, Maciej},
  journal={Phys. Rev. A},
  volume={63},
  number={5},
  pages={050301},
  year={2001},
  publisher={APS},
  doi={10.1103/PhysRevA.63.050301},
  url={https://doi.org/10.1103/PhysRevA.63.050301}
}

@article{imai2021bound,
  title={Bound entanglement from randomized measurements},
  author={Imai, Satoya and Wyderka, Nikolai and Ketterer, Andreas and G{\"u}hne, Otfried},
  journal={Phys. Rev. Lett.},
  volume={126},
  number={15},
  pages={150501},
  year={2021},
  publisher={APS},
  doi={10.1103/PhysRevLett.126.150501},
  url={https://doi.org/10.1103/PhysRevLett.126.150501}
}

@inproceedings{aaronson2018shadow,
  title={Shadow tomography of quantum states},
  author={Aaronson, Scott},
  booktitle={Proc. Annu. ACM Symp. Theory Comput.},
  pages={325--338},
  year={2018},
  doi={10.1145/3188745.318880},
  url={https://doi.org/10.1145/3188745.318880}
}

@inproceedings{aaronson2019gentle, 
    author = {Aaronson, Scott and Rothblum, Guy N.}, 
    title = {Gentle measurement of quantum states and differential privacy}, 
    year = {2019}, isbn = {9781450367059}, 
    url = {https://doi.org/10.1145/3313276.3316378}, 
    doi = {10.1145/3313276.3316378}, 
    booktitle = {Proc. Annu. ACM Symp. Theory Comput.}, 
    pages = {322--333}
}

@inproceedings{o2016efficient,
  title={Efficient quantum tomography},
  author={O'Donnell, Ryan and Wright, John},
  booktitle={Proc. Annu. ACM Symp. Theory Comput.},
  pages={899--912},
  year={2016},
  doi={10.1145/2897518.2897544},
  url={https://doi.org/10.1145/2897518.2897544}
}

@article{gross2010quantum,
  title={Quantum state tomography via compressed sensing},
  author={Gross, David and Liu, Yi-Kai and Flammia, Steven T and Becker, Stephen and Eisert, Jens},
  journal={Phys. Rev. Lett.},
  volume={105},
  number={15},
  pages={150401},
  year={2010},
  publisher={APS},
  doi={10.1103/PhysRevLett.105.150401},
  url={https://doi.org/10.1103/PhysRevLett.105.150401}
}

@inproceedings{haah2016sample,
  title={Sample-optimal tomography of quantum states},
  author={Haah, Jeongwan and Harrow, Aram W and Ji, Zhengfeng and Wu, Xiaodi and Yu, Nengkun},
  booktitle={Proc. Annu. ACM Symp. Theory Comput.},
  pages={913--925},
  year={2016},
  doi={10.1109/TIT.2017.2719044},
  url={https://doi.org/10.1109/TIT.2017.2719044}
}

@article{aubrun2012phase,
  title={Phase transitions for random states and a semicircle law for the partial transpose},
  author={Aubrun, Guillaume and Szarek, Stanis{\l}aw J and Ye, Deping},
  journal={Phys. Rev. A},
  volume={85},
  number={3},
  pages={030302},
  year={2012},
  publisher={APS},
  doi={10.1103/PhysRevA.85.030302},
  url={https://doi.org/10.1103/PhysRevA.85.030302}
}

@article{Pere96,
  title={Separability criterion for density matrices},
  author={Peres, Asher},
  journal={Phys. Rev. Lett.},
  volume={77},
  number={8},
  pages={1413},
  year={1996},
  publisher={APS},
  doi={10.1103/PhysRevLett.77.1413},
  url={https://doi.org/10.1103/PhysRevLett.77.1413}
}

@article{Horo01,
  title={Separability of mixed states: {Necessary} and sufficient conditions in terms of linear maps},
  author={Horodecki, Micha{\l} and Horodecki, Pawe{\l} and Horodecki, Ryszard},
  journal={Phys. Lett. A},
  volume={283},
  number={1-2},
  pages={1--7},
  year={2001},
  publisher={Elsevier},
  doi={10.1016/S0375-9601(01)00142-6},
  url={https://doi.org/10.1016/S0375-9601(01)00142-6},
}

@misc{ZhuKGG16,
      title={The {Clifford} group fails gracefully to be a unitary 4-design}, 
      author={Zhu, Huangjun and Kueng, Richard and Grassl, Markus and Gross, David},
      year={2016},
      eprint={1609.08172},
      archivePrefix={arXiv},
      primaryClass={quant-ph}
}

@article{huang2016high,
  title={High-dimensional entanglement certification},
  author={Huang, Zixin and Maccone, Lorenzo and Karim, Akib and Macchiavello, Chiara and Chapman, Robert J and Peruzzo, Alberto},
  journal={Sci. Rep.},
  volume={6},
  number={1},
  pages={27637},
  year={2016},
  publisher={Nature Publishing Group UK London},
  doi={10.1038/srep27637},
  url={https://doi.org/10.1038/srep27637}
}

@article{bannai2020unitary,
  title={Unitary $t$-groups},
  author={Bannai, Eiichi and Navarro, Gabriel and Rizo, Noelia and Tiep, Pham Huu},
  journal={J. Math. Soc. Japan},
  volume={72},
  number={3},
  pages={909--921},
  year={2020},
  publisher={The Mathematical Society of Japan},
  doi={10.2969/jmsj/82228222},
  url={https://doi.org/10.2969/jmsj/82228222}
}

@article{ciesidg2024analysing,
  title={Analysing quantum systems with randomised measurements},
  author={Cie{\'s}li{\'n}ski, Pawe{\l} and Imai, Satoya and Dziewior, Jan and G{\"u}hne, Otfried and Knips, Lukas and Laskowski, Wies{\l}aw and Meinecke, Jasmin and Paterek, Tomasz and V{\'e}rtesi, Tam{\'a}s},
  journal={Phys. Rep.},
  volume={1095},
  pages={1--48},
  year={2024},
  publisher={Elsevier},
  doi={10.1016/j.physrep.2024.09.009},
  url={https://doi.org/10.1016/j.physrep.2024.09.009}
}

@article{guhne2021geometry,
  title={Geometry of faithful entanglement},
  author={G{\"u}hne, Otfried and Mao, Yuanyuan and Yu, Xiao-Dong},
  journal={Phys. Rev. Lett.},
  volume={126},
  number={14},
  pages={140503},
  year={2021},
  publisher={APS},
  doi={10.1103/PhysRevLett.126.140503},
  url={https://doi.org/10.1103/PhysRevLett.126.140503}
}

@article{li2025high,
  title={High-dimensional entanglement witnessed by correlations in arbitrary bases},
  author={Li, Nicky Kai Hong and Huber, Marcus and Friis, Nicolai},
  journal={npj Quantum Inf.},
  volume={11},
  number={1},
  pages={50},
  year={2025},
  publisher={Nature Publishing Group UK London},
  doi={10.1038/s41534-025-00990-6},
  url={https://doi.org/10.1038/s41534-025-00990-6}
}

@article{lib2024experimental,
  title = {Experimental Certification of High-Dimensional Entanglement with Randomized Measurements},
  author = {Lib, Ohad and Liu, Shuheng and Shekel, Ronen and He, Qiongyi and Huber, Marcus and Bromberg, Yaron and Vitagliano, Giuseppe},
  journal = {Phys. Rev. Lett.},
  volume = {134},
  issue = {21},
  pages = {210202},
  numpages = {6},
  year = {2025},
  month = {May},
  publisher = {American Physical Society},
  doi = {10.1103/PhysRevLett.134.210202},
  url = {https://link.aps.org/doi/10.1103/PhysRevLett.134.210202}
}

@article{mallick2025detecting,
  title = {Higher-dimensional-entanglement detection and quantum-channel characterization using moments of generalized positive maps},
  author = {Mallick, Bivas and Maity, Ananda G. and Ganguly, Nirman and Majumdar, A. S.},
  journal = {Phys. Rev. A},
  volume = {112},
  issue = {1},
  pages = {012416},
  numpages = {12},
  year = {2025},
  month = {Jul},
  publisher = {American Physical Society},
  doi = {10.1103/nzrc-8yrt},
  url = {https://link.aps.org/doi/10.1103/nzrc-8yrt}
}

@article{cruz2024shallow,
    author = {M. Q. Cruz, Pedro and Murta, Bruno},
    title = {Shallow unitary decompositions of quantum {Fredkin} and {Toffoli} gates for connectivity-aware equivalent circuit averaging},
    journal = {APL Quantum},
    volume = {1},
    number = {1},
    pages = {016105},
    year = {2024},
    month = {03},
    abstract = {The controlled-swap and controlled-controlled-not gates are at the heart of the original proposal of reversible classical computation by Fredkin and Toffoli. Their widespread use in quantum computation, both in the implementation of classical logic subroutines of quantum algorithms and in quantum schemes with no direct classical counterparts, has made it imperative early on to pursue their efficient decomposition in terms of the lower-level gate sets native to different physical platforms. Here, we add to this body of literature by providing several logically equivalent circuits for the Toffoli and Fredkin gates under all-to-all and linear qubit connectivity, the latter with two different routings for control and target qubits. Besides achieving the lowest cnot counts in the literature for all these configurations, we also demonstrate the remarkable effectiveness of the obtained decompositions at mitigating coherent errors on near-term quantum computers via equivalent circuit averaging. We first quantify the performance of the method in silico with a coherent-noise model before validating it experimentally on a superconducting quantum processor. In addition, we consider the case where the three qubits on which the Toffoli or Fredkin gates act nontrivially are not adjacent, proposing a novel scheme to reorder them that saves one cnot for every swap. This scheme also finds use in the shallow implementation of long-range cnots. Our results highlight the importance of considering different entangling gate structures and connectivity constraints when designing efficient quantum circuits.},
    issn = {2835-0103},
    doi = {10.1063/5.0187026},
    url = {https://doi.org/10.1063/5.0187026},
}

@article{bohnet2012,
  title = {Entanglement and the truncated moment problem},
  author = {Bohnet-Waldraff, F. and Braun, D. and Giraud, O.},
  journal = {Phys. Rev. A},
  volume = {96},
  issue = {3},
  pages = {032312},
  numpages = {12},
  year = {2017},
  month = {Sep},
  publisher = {American Physical Society},
  doi = {10.1103/PhysRevA.96.032312},
  url = {https://link.aps.org/doi/10.1103/PhysRevA.96.032312}
}

@misc{code2025,
  publisher = {GitHub},
  journal = {GitHub repository},
  howpublished = {\url{https://github.com/CYI1995/PRXQ_SN}},
}
	
	\clearpage
	\onecolumngrid
	
	\appendix

	
	\section{\label{app:kRNegativitySN}Proofs of \thsref{thm:pure_state}, \ref{thm:k_negativity} and \pref{prop:rank_r_MEDP}}
	
	In this appendix and the following appendices we prove
	\thsref{thm:pure_state}, \ref{thm:k_negativity},  \ref{thm:maximal_order}-\ref{thm:sample_complexity} and \psref{prop:rank_r_MEDP}-\ref{prop:cmc_isotropic} presented in the main text. 
	
	%

	\subsection{Auxiliary lemmas}
	Here we introduce two auxiliary lemmas to clarify the basic properties of the operator $\Omega_k(\blambda)$ defined in \eref{eq:Omegaklambda} and the function $\theta_k(\blambda)$ defined \eref{eq:thetaklambda}. Recall that $\Delta_d$ is the $(d-1)$-dimensional probability simplex defined in \eref{eq:simplex}.
	\begin{lemma}\label{lem:thetaOmegalambda}
		Suppose $k$ is a positive integer and $\blambda\in\Delta_d$. 
		Then $\Omega_k(\blambda)$ has at most one negative eigenvalue, and the function
		$\theta_k(\blambda)$ is Schur concave in $\blambda$ and nonincreasing in $k$. Let $r$ be the number of positive entries in $\blambda$; then 
		\begin{align}\label{eq:OmegaRank}
			\rank\left(\Omega_k(\blambda)\right)=\begin{cases}
				r-1, & k=r,\\
				r, & k\neq r.
			\end{cases}
		\end{align}
		If $k\geq r$, then $\Omega_k(\blambda)\geq 0$ and $\theta_k(\blambda)=0$. If $k< r$, then  $-\theta_k(\blambda)$
		is the unique negative eigenvalue of $\Omega_k(\blambda)$, and $\theta_k(\blambda)$ satisfies the following equation:
		\begin{equation}\label{eq:OmegaMinEig}
			\sum_{i=0}^{d-1} \frac{\lambda_i}{k\lambda_i + \theta_k(\blambda)} = 1.
		\end{equation}
	\end{lemma}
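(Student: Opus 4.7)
The plan is to identify $\Omega_k(\blambda)$ as a rank-one perturbation of a diagonal matrix and exploit the resulting structure. Writing $D=\diag(\lambda_0,\ldots,\lambda_{d-1})$ and $v=(\sqrt{\lambda_0},\ldots,\sqrt{\lambda_{d-1}})^\top$, we have $\Omega_k(\blambda)=kD-vv^\top$. Reordering the indices, we may assume $\lambda_0,\ldots,\lambda_{r-1}>0$ and $\lambda_r=\cdots=\lambda_{d-1}=0$, which exposes $d-r$ zero rows and columns. The nonzero principal block factors as $D_r^{1/2}(kI_r-ee^\top)D_r^{1/2}$, where $e=(1,\ldots,1)^\top\in\bbR^r$ and $D_r$ is the (positive-definite) restriction of $D$ to the support. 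Since $ee^\top$ has eigenvalues $r$ (simple) and $0$ (multiplicity $r-1$), the matrix $kI_r-ee^\top$ has eigenvalues $k-r$ (simple) and $k$ (multiplicity $r-1$). Because congruence by the invertible matrix $D_r^{1/2}$ preserves inertia (Sylvester's law), we obtain \eqref{eq:OmegaRank} immediately, together with $\Omega_k(\blambda)\geq 0$ and $\theta_k(\blambda)=0$ whenever $k\geq r$. For $1\leq k<r$ the same argument shows that $\Omega_k(\blambda)$ has exactly one negative eigenvalue, which by definition equals $-\theta_k(\blambda)$; this simultaneously settles the ``at most one negative eigenvalue'' claim.

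To derive the secular equation \eqref{eq:OmegaMinEig}, I would restrict to the support of $\blambda$ and apply the determinant identity $\det(A-vv^\top)=\det(A)(1-v^\top A^{-1}v)$ to $A=kD_r-\mu I_r$. This shows that every eigenvalue $\mu$ of $\Omega_k(\blambda)|_r$ distinct from the diagonal entries $k\lambda_i$ satisfies $\sum_{i:\lambda_i>0}\lambda_i/(k\lambda_i-\mu)=1$; setting $\mu=-\theta_k(\blambda)$ and appending the vacuous zero contributions from $\lambda_i=0$ yields \eqref{eq:OmegaMinEig}. The monotonicity of $\theta_k$ in $k$ then follows either from the L\"owner inequality $\Omega_{k+1}(\blambda)\geq\Omega_k(\blambda)$ (so the minimum eigenvalue of $\Omega_k$ is nondecreasing in $k$), or directly from the observation that raising $k$ with $\theta$ held fixed strictly decreases the left-hand side of \eqref{eq:OmegaMinEig}, so $\theta$ must decrease to restore the equality.

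The technical heart of the argument, and the main obstacle, is Schur concavity in $\blambda$. The plan is to verify the Schur-Ostrowski criterion: $\theta_k$ is manifestly symmetric under permutations, and its gradient on the region $\{k<r\}$ can be obtained by implicit differentiation of \eqref{eq:OmegaMinEig}, which gives
\begin{equation*}
\frac{\partial\theta_k}{\partial\lambda_i}=\frac{\theta_k(\blambda)(k\lambda_i+\theta_k(\blambda))^{-2}}{\sum_\ell\lambda_\ell(k\lambda_\ell+\theta_k(\blambda))^{-2}}.
\end{equation*}
Because $\theta_k(\blambda)>0$ there and the map $\lambda\mapsto(k\lambda+\theta_k(\blambda))^{-2}$ is strictly decreasing in $\lambda\geq 0$, the difference $\partial_{\lambda_i}\theta_k-\partial_{\lambda_j}\theta_k$ has sign opposite to $\lambda_i-\lambda_j$, so $(\lambda_i-\lambda_j)(\partial_{\lambda_i}\theta_k-\partial_{\lambda_j}\theta_k)\leq 0$ as required. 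Since $\theta_k$ is continuous on $\Delta_d$ and vanishes identically once $k\geq r$, Schur concavity on the open stratum where $\theta_k>0$ extends by continuity to the full simplex.
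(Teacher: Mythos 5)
Your proof is correct. For the inertia and rank claims you use exactly the paper's device: writing $\Omega_k(\blambda)$ as a congruence $D_r^{1/2}(kI_r-ee^\top)D_r^{1/2}$ on the support and invoking Sylvester's law, which settles \eqref{eq:OmegaRank}, positivity for $k\geq r$, and the "exactly one negative eigenvalue" claim for $k<r$ in one stroke. Your derivation of \eqref{eq:OmegaMinEig} via the rank-one determinant identity $\det(A-vv^\top)=\det(A)(1-v^\top A^{-1}v)$ is a legitimate alternative to the paper's route, which instead writes out the eigenvector equation for the eigenvalue $-\theta_k$ and sums; both hinge on the harmless observation that $-\theta_k<0$ cannot coincide with any diagonal entry $k\lambda_i>0$. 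The genuine divergence is in the Schur concavity argument. The paper introduces $G(y,k,\blambda)=\sum_i\lambda_i/(k\lambda_i+y)$, notes that it is a symmetric sum of a concave function of each coordinate (hence Schur concave in $\blambda$) and decreasing in $y$, and transfers Schur concavity to the implicit root $\theta_k$ by a purely order-theoretic comparison: if $\blambda'\prec\blambda$ then $G(\theta_k(\blambda),k,\blambda')\geq 1$, forcing $\theta_k(\blambda')\geq\theta_k(\blambda)$. You instead differentiate the secular equation implicitly and verify the Schur--Ostrowski criterion; your gradient formula and sign analysis are correct. The paper's route buys freedom from differentiability and boundary issues, whereas yours requires the two standard (and correctly flagged) patches: the criterion must be applied along the directions $e_i-e_j$ tangent to the simplex, and the conclusion must be pushed from the open stratum to the boundary by approximating a majorization pair $\bfx\succ\bfy$ by interior pairs such as $(1-t)\bfx+t\mathbf{u}\succ(1-t)\bfy+t\mathbf{u}$ and using continuity of $\theta_k$. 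Neither patch is a gap, but spelling out the approximation step would make the continuity extension airtight.
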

	\begin{proof}
		Let $\tblambda$ be the vector obtained from $\blambda$ by deleting the zero entries. Then $\Omega_k(\blambda)$ and $\Omega_k(\tblambda)$ have the same nonzero eigenvalues, including the multiplicities, which means $\theta_k(\blambda)=\theta_k(\tblambda)$.
		In addition, $\Omega_k(\blambda)$ and $\theta_k(\blambda)$  are continuous in $\blambda$. Therefore, it suffices to prove \lref{lem:thetaOmegalambda} under the assumption  $\blambda>0$, which means
		$\lambda_0,\lambda_1, \ldots, \lambda_{d-1}>0$ and $r=d$. 
		
		Let $\Lambda=\diag(\lambda_0,\lambda_1,\ldots, \lambda_{d-1})$; then $\Omega_k(\blambda)=\Lambda^{1/2}\Omega'\Lambda^{1/2}$, where $\Omega'$ is the $d\times d$ matrix with entries $\Omega'_{ij}=k\delta_{ij} -1$. Since $\Lambda$ is an invertible real diagonal matrix, $\Omega$ and $\Omega'$ have the same rank and the same number of positive (negative) eigenvalues. It is straightforward to verify that $\Omega'$ has $r-1$ eigenvalues equal to $k$ and one eigenvalue equal to $k-r$, which implies \eref{eq:OmegaRank}. 
		If $k\geq r=d$, then $\Omega_k(\blambda)\geq 0$ and $\theta_k(\blambda)=0$. If $k<r$, then $\Omega_k(\blambda)$ has exactly one negative eigenvalue, which is equal to the smallest eigenvalue $-\theta_k$.
		
		Suppose $k<r$ and $\boldsymbol{y} = (y_0, y_1,\dots, y_{d-1})^\top$ is  an eigenvector of $\Omega_k(\blambda)$ associated with the eigenvalue  $-\theta_k$. Then 
		\begin{align}\label{eq:A2}
			k\lambda_i y_i -\sum_j\sqrt{\lambda_i\lambda_j}\, y_j=-\theta_k y_i \quad \forall i,
		\end{align}
		which implies that
		\begin{align}
			y_i = \frac{\sqrt{\lambda_i}}{k\lambda_i + \theta_k} \left(\sum_{j=0}^{d-1} \sqrt{\lambda_j}\, y_j\right)\quad \forall i.
		\end{align}
		Note that $\sum_{j=0}^{d-1}\sqrt{\lambda_j}\, y_j \neq 0$, otherwise $\boldsymbol{y}$ is a zero vector. Now, we can deduce \eref{eq:OmegaMinEig} by 
		multiplying both sides with $\sqrt{\lambda_i}$ and summing over $i$.
		
		Next, we prove that $\theta_k(\blambda)$ is nonincreasing in $k$ and Schur concave in $\blambda$. If $k'\geq k$, then $\Omega_{k'}(\blambda)\geq \Omega_{k}(\blambda)$, which implies that $\theta_{k'}(\blambda)\leq \theta_{k}(\blambda)$, so $\theta_k(\blambda)$ is nonincreasing in $k$.

		If $k\geq d$, then $\theta_k(\blambda)=0$, so $\theta_k(\blambda)$ is Schur concave in $\blambda$.
		To address the case with $k<d$, we need to introduce an auxiliary function,
		\begin{equation}\label{eq:Gyk}
			G(y,k,\blambda) \equiv \sum_{i=0}^{d-1}\frac{\lambda_i}{k\lambda_i + y},\quad k,y >0,\quad \blambda>0.
		\end{equation}
		Then $\theta_k(\blambda)$ is the unique solution of $y>0$ to the equation $G(y,k,\blambda)=1$. Note that $G(y,k,\blambda)$ is nonincreasing in $y$, convex in each entry of $\blambda$, and invariant under any permutation of these entries. 
		Therefore, $G(y,k,\blambda)$ is Schur concave in $\blambda$. Suppose $\blambda'\in \Delta_d$ and $\blambda'\prec \blambda$;  let $y=\theta_k(\blambda)$, then $G(y,k,\blambda')\geq G(y,k,\blambda)=1$,
		which implies that $\theta_k(\blambda') \geq y=  \theta_k(\blambda)$. Therefore,  $\theta_k(\blambda)$ is Schur concave in $\blambda$, which completes the proof of \lref{lem:thetaOmegalambda}.	
	\end{proof}

	\begin{lemma}\label{lem:Tomega}
		Suppose $k$ is a positive integer, $\blambda\in\Delta_d$ has $r$ nonzero entries with $\tilde{r}$ distinct nonzero values labeled as $\{\ell_j\}_{j=0}^{\tilde{r}-1}$, where $\ell_j$ has multiplicity $m_j$. Let $\Tomega_k(\blambda)$ be the $\tilde{r}\times \tilde{r}$ matrix with entries $\Tomega_k(\blambda)_{ij}=k\ell_i \delta_{ij}-\sqrt{m_im_j \ell_i\ell_j}$. 
		Then the eigenvalues $\{\omega_j\}_{j=0}^{\tilde{r}-1}$ of $\Tomega_k(\blambda)$ are nondegenerate and satisfy 
		\begin{align}\label{eq:TomegaInterlacing}
			-1<\omega_{\tilde{r}-1}<k\ell_{\tilde{r}-1}<\omega_{\tilde{r}-2}<k\ell_{\tilde{r}-2}<\cdots 	<\omega_0<k\ell_0.
		\end{align}
		In addition,
		\begin{align}
			\rank\left(\Tomega_k(\blambda)\right)&=\begin{cases}
				\tilde{r}-1, & k=r,\\
				\tilde{r}, & k\neq r,
			\end{cases} \label{eq:TomegaRank}  \\
			\sigma(\Omega_k(\blambda)) &= \bigcup_{j=0}^{\tilde{r}-1}\{k\ell_j\}^{\times (m_j-1)}\cup \sigma\bigl(\Tomega_k(\blambda)\bigr) \cup\{0\}^{\times \left(d - r\right)}.\label{eq:spectrum_omegak}
		\end{align} 
	\end{lemma}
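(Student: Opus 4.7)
The plan is to analyze $\Tomega_k(\blambda)$ as a rank-one perturbation of a diagonal matrix, use Cauchy interlacing together with a direct check of the secular equation to pin down the spectrum, and then assemble the spectrum decomposition of $\Omega_k(\blambda)$ by an explicit orthogonal change of basis that groups equal Schmidt coefficients. Concretely, I would first write
\begin{equation}
\Tomega_k(\blambda)=k\diag(\ell_0,\ldots,\ell_{\tilde{r}-1})-vv^\top,\quad v_i=\sqrt{m_i\ell_i},
\end{equation}
and apply Cauchy's interlacing theorem for a rank-one negative perturbation of a diagonal matrix. Because the diagonal entries $k\ell_0>k\ell_1>\cdots>k\ell_{\tilde{r}-1}>0$ are distinct and every $v_i\neq 0$, a coincidence $\omega=k\ell_i$ would force $v_i=0$ via the secular equation, so all interlacings are strict: $\omega_{\tilde{r}-1}<k\ell_{\tilde{r}-1}<\omega_{\tilde{r}-2}<k\ell_{\tilde{r}-2}<\cdots<\omega_0<k\ell_0$, and in particular every $\omega_j$ is simple. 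This covers all inequalities in \eqref{eq:TomegaInterlacing} except the lower bound $\omega_{\tilde{r}-1}>-1$.

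To secure that bound I would turn to the secular equation. Any eigenvalue $\omega$ of $\Tomega_k(\blambda)$ distinct from a diagonal entry satisfies
\begin{equation}
f(\omega)\equiv\sum_{i=0}^{\tilde{r}-1}\frac{m_i\ell_i}{k\ell_i-\omega}=1,
\end{equation}
and $f$ is strictly increasing on $(-\infty,k\ell_{\tilde{r}-1})$ because $f'(\omega)=\sum_i m_i\ell_i/(k\ell_i-\omega)^2>0$. Evaluating at $\omega=-1$,
\begin{equation}
f(-1)=\sum_{i=0}^{\tilde{r}-1}\frac{m_i\ell_i}{k\ell_i+1}<\sum_{i=0}^{\tilde{r}-1}m_i\ell_i=1,
\end{equation}
where the strict inequality uses $k\geq 1$ and $\ell_i>0$ so every denominator exceeds $1$, and the final equality uses the Schmidt normalization $\sum_j\lambda_j=\sum_i m_i\ell_i=1$. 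Since $\omega_{\tilde{r}-1}$ is the unique solution of $f(\omega)=1$ in $(-\infty,k\ell_{\tilde{r}-1})$, monotonicity of $f$ then forces $\omega_{\tilde{r}-1}>-1$. For the rank statement \eqref{eq:TomegaRank}, the matrix determinant lemma gives
\begin{equation}
\det(\Tomega_k(\blambda))=k^{\tilde{r}}\left(\prod_i\ell_i\right)\left(1-\frac{1}{k}\sum_i m_i\right)=k^{\tilde{r}-1}\left(\prod_i\ell_i\right)(k-r),
\end{equation}
which vanishes iff $k=r$; combined with simplicity of the eigenvalues this yields \eqref{eq:TomegaRank}.

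Finally, for the spectrum decomposition \eqref{eq:spectrum_omegak}, I would build an orthonormal basis of $\bbR^d$ adapted to the degeneracy pattern of $\blambda$. The $d-r$ indices with $\lambda_i=0$ contribute $\{0\}^{\times(d-r)}$ because the corresponding rows and columns of $\Omega_k(\blambda)$ vanish identically. For each group $I_j$ of $m_j$ indices carrying the common value $\ell_j$, any vector supported on $I_j$ whose entries sum to zero is annihilated by the rank-one coupling $\sqrt{\lambda_i\lambda_j}$ and is therefore an eigenvector of $\Omega_k(\blambda)$ with eigenvalue $k\ell_j$; this supplies the factor $\{k\ell_j\}^{\times(m_j-1)}$. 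On the remaining $\tilde{r}$-dimensional complement spanned by $u_j\equiv m_j^{-1/2}\sum_{i\in I_j}e_i$, a direct computation of $\langle u_{j'},\Omega_k(\blambda)u_j\rangle$ reproduces precisely the matrix elements $k\ell_j\delta_{jj'}-\sqrt{m_jm_{j'}\ell_j\ell_{j'}}$ of $\Tomega_k(\blambda)$. Collecting these three blocks gives \eqref{eq:spectrum_omegak}. The main technical subtlety is the lower bound $\omega_{\tilde{r}-1}>-1$: it is the unique conclusion requiring both the Schmidt normalization and the hypothesis $k\geq 1$, and Cauchy interlacing alone does not deliver it; the one-line secular evaluation at $\omega=-1$ supplies it cleanly. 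Everything else---strict interlacing via genericity of $v$, the determinant formula, and the explicit change of basis for $\Omega_k(\blambda)$---is routine linear algebra.
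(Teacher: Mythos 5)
Your proof is correct and follows essentially the same route as the paper's: interlacing for the rank-one perturbation $k\diag(\ell_0,\ldots,\ell_{\tilde{r}-1})-vv^\top$ with $v_i=\sqrt{m_i\ell_i}$, a strictness argument ruling out $\omega=k\ell_i$ (the paper phrases this as a proof by contradiction on the eigenvector, which is what your secular-equation remark amounts to once one handles the case $\omega$ equal to a diagonal entry directly), and then the identical explicit block decomposition of $\Omega_k(\blambda)$ into the zero rows, the $(m_j-1)$-dimensional mean-zero subspaces with eigenvalue $k\ell_j$, and the $\tilde{r}$-dimensional complement spanned by the normalized group indicators, on which $\Omega_k(\blambda)$ acts as $\Tomega_k(\blambda)$. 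Two sub-steps genuinely differ. For the bound $\omega_{\tilde{r}-1}>-1$ the paper only asserts that it follows from $\ell_j>0$ and $\sum_j m_j\ell_j=1$ (implicitly $\Tomega_k(\blambda)>-vv^\top\geq-\norm{v}^2 I=-I$), whereas you evaluate the secular function at $\omega=-1$ and use its monotonicity; both are one-line arguments and both are valid. For the rank formula \eqref{eq:TomegaRank} the paper deduces it from the rank of $\Omega_k(\blambda)$ already established in \lref{lem:thetaOmegalambda} combined with \eqref{eq:spectrum_omegak}, whereas you compute $\det\Tomega_k(\blambda)=k^{\tilde{r}-1}\bigl(\prod_i\ell_i\bigr)(k-r)$ via the matrix determinant lemma; your route is self-contained and does not lean on \lref{lem:thetaOmegalambda}, at the cost of one extra (routine) computation. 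No gaps.
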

	
	\begin{proof}[Proof of \lref{lem:Tomega}]
		According to Cauchy Interlacing theorem, 
		the eigenvalues $\{\omega_j\}_{j=0}^{\tilde{r}-1}$ of $\Tomega_k(\blambda)$  satisfy 
		\begin{align}
			-1<	\omega_{\tilde{r}-1}\leq k\ell_{\tilde{r}-1}\leq \omega_{\tilde{r}-2}\leq k\ell_{\tilde{r}-2}\leq \cdots 	\leq \omega_0\leq k\ell_0.
		\end{align}
		Here the first inequality holds because $\ell_j>0$ and $\sum_{j=0}^{\tilde{r}-1} m_j\ell_j=1$. 
		In addition, using proof by contradiction, it is straightforward to verify that $k\ell_j$ is not an eigenvalue of $\Tomega_k(\blambda)$ for $j=0, 1,\ldots, \tilde{r}-1$. So all the inequalities in the above equation are necessarily strict, which confirms \eref{eq:TomegaInterlacing} and shows that the eigenvalues $\{\omega_j\}_{j=0}^{\tilde{r}-1}$ of $\Tomega_k(\blambda)$ are nondegenerate.
		
		Next,  we prove \eref{eq:spectrum_omegak} before \eref{eq:TomegaRank}. To this end, we need to introduce the following index sets:
		\begin{align}
			\caI_j \equiv \{i : \lambda_i = \ell_j\},\quad j=0,1,\ldots, \tilde{r}-1; \quad \caI_{\tilde{r}} \equiv \{i : \lambda_i = 0\}.
		\end{align}
		On this basis we can define the following subspaces of 
		$\bbR^d$:
		\begin{align}
			\caV_j=\biggl\{v\in \bbR^d : v_i=0\; \forall i\notin \caI_j,\; \sum_i v_i=0\biggr\}, \quad  j=0,1,\ldots, \tilde{r}-1;
			\quad \caV_{\tilde{r}}	=\bigl\{v\in \bbR^d : v_i=0\; \forall i\notin \caI_{\tilde{r}}\bigr\}.
		\end{align}
		Then every nonzero vector in $\caV_j$ is an eigenvector of $\Omega_k(\blambda)$ with eigenvalue $k\ell_j$, where $\ell_{\tilde{r}}=0$. In addition,
		\begin{gather}
			\dim \caV_j=|\caI_j|=m_j-1, \quad j=0,1,\ldots, \tilde{r}-1;  \quad \dim \caV_{\tilde{r}}=|\caI_{\tilde{r}}|=d-r,\\
			\caV_{j_1}\perp \caV_{j_2},\quad 0\leq j_1<j_2\leq \tilde{r}. 
		\end{gather}
		Define $\caW\equiv(\caV_1+\caV_2 +\cdots +\caV_{\tilde{r}})^\perp$. 
		Then $\caW$ is an invariant subspace of $\Omega_k(\blambda)$ that has dimension $\tilde{r}$ and is spanned by the following $\tilde{r}$ vectors in $\bbR^d$:
		\begin{align}
			(v_j)_i=\begin{cases}
				\frac{1}{\sqrt{m_j}}, &i\in \caI_j\\
				0, & i\not\in \caI_j
			\end{cases} \quad j=0, 1,\ldots, \tilde{r}-1. 
		\end{align}
		Now, consider the following map from  $\bbR^{\tilde{r}}$ to  $\bbR^d$:
		\begin{align}
			x\mapsto v_x \equiv\sum_{j=0}^{\tilde{r}} x_j v_j. 
		\end{align}
		Then $v_x$ is an eigenvector of $\Omega_k(\blambda)$ iff $x$ is an eigenvector of $\Tomega_k(\blambda)$ with the same eigenvalue. This observation implies \eref{eq:spectrum_omegak}. 
		
		\Eref{eq:TomegaRank} is a simple corollary of 
		\eqsref{eq:OmegaRank}{eq:spectrum_omegak} given that $\sum_{j=0}^{\tilde{r}-1}(m_j-1) =r-\tilde{r}$, which completes the proof of \lref{lem:Tomega}. 
	\end{proof}

	\subsection{\label{app:proof_theo_2}Proof of \thref{thm:pure_state}}

	\begin{proof}
		Without loss of generality we can assume that $|\psi\>$ has the following Schmidt decomposition: $|\psi\>=\sum_{i=0}^{d-1}\sqrt{\lambda_i} |ii\>$,
		where $d=\min\{d_\rmA,d_\rmB\}=d_\rmA$. Define $\caW_1 \equiv \mathrm{span}\left(\{ |ii\>\}_{i=0}^{d-1}\right)$ and $\caW_2=\caW_1^\perp$.
		Then $\caW_1$ and $\caW_2$ have dimensions $d$ and  $d(d_\rmB-1)$, respectively.
		The $k$-reduced operator $\caR_k(\rho)$ can be expressed as follows:
		\begin{align}
			\caR_k(\rho) =k \sum_{i=0}^{d-1} \lambda_i \ket{i}\bra{i}\otimes I_{\rmB} - \sum_{i,j=0}^{d-1} \sqrt{\lambda_i\lambda_j} \ket{ii}\bra{jj}=R_1+R_2, 
		\end{align}
		where
		\begin{align}
			R_1 \equiv k \sum_{i=0}^{d-1} \lambda_i \ket{ii}\bra{ii}- \sum_{i,j=0}^{d-1} \sqrt{\lambda_i\lambda_j} \ket{ii} \bra{jj},\quad 
			R_2 \equiv k\sum_{i=0}^{d-1}\sum_{j=0, j\ne i}^{d_\rmB-1} \lambda_i \ket{ij}\bra{ij}
		\end{align}
		are supported in $\caW_1, \caW_2$, respectively. 
		Note $\sigma(R_1)=\sigma(\Omega_k(\blambda))$ and $\sigma(R_2) =[k\sigma(\psi)]^{\times (d_\rmB-1)}$. Therefore,
		\begin{align}
			\sigma(\caR_k(\rho))=\sigma(R_1)\cup \sigma(R_2)
			=\sigma(\Omega_k(\blambda)) \cup \{k\sigma(\psi)\}^{\times (d_\rmB-1)},
		\end{align}
		which confirms \eref{eq:RkrhoSpectrum}. Since all eigenvalues of $R_2$ are non-negative, it follows that 
		$\caR_k(\rho)$ and $\Omega_k(\blambda)$ have the same negative spectrum, including multiplicities. According to \lref{lem:thetaOmegalambda}, if $k \ge r$, then $\Omega_k(\blambda)\geq 0$ and  $\caR_k(\rho)\geq 0$.  If instead $1 \le k<r$, then both $\Omega_k(\blambda)$ and  $\caR_k(\rho)$ have exactly one negative eigenvalue. 
		
		Let $\{\ell_j\}_{j=0}^{\tilde{r}-1}$ be the set of distinct nonzero Schmidt coefficients of $|\psi\>$, where $\ell_j$ has multiplicity $m_j$. In conjunction with \lref{lem:Tomega} we can further deduce that
		\begin{equation}
			\sigma(\caR_k(\rho)) = \bigcup_{j=0}^{\tilde{r}-1}\{k\ell_j\}^{\times (d_\rmB m_j-1)}\cup\sigma\bigl(\Tomega_k(\blambda)\bigr)\cup\{0\}^{\times d_\rmB(d- r)},
		\end{equation}
		where $\Tomega_k(\blambda)$ is the $\tilde{r}\times \tilde{r}$ matrix defined in \lref{lem:Tomega}, and its eigenvalues satisfy \eref{eq:TomegaInterlacing}.  Moreover, $\Tomega_k(\blambda)$ is nonsingular when $k\neq r$ and has exactly one eigenvalue equal to 0 when $k=r$. Therefore, $\caR_k(\rho)$ has  $2\tilde{r}$ distinct nonzero eigenvalues when $k\neq r$ and $2\tilde{r}-1$ distinct nonzero eigenvalues when $k=r$. In any case,  $\caR_k(\rho)$ has at most $2\tilde{r}$ distinct nonzero eigenvalues. This observation completes the proof of \thref{thm:pure_state}. 
	\end{proof}

	\subsection{Proof of \thref{thm:k_negativity}}
	\begin{proof}
		According to \thref{thm:pure_state} and 	\lref{lem:thetaOmegalambda}, $\caN_k(\psi)=\theta_k(\blambda)$ is Schur concave in  $\blambda$ and nonincreasing in $k$. Next, suppose $\sn(|\psi\>)=r$ and let $\blambda' \equiv \left(r^{-1},\ldots, r^{-1}, 0,\ldots, 0\right)\in \Delta_d$, where the value  $r^{-1}$ is repeated $r$ times. 
		Then $\blambda \succ\blambda'$, which implies that 
		$\caN_k(\blambda)=\theta_k(\blambda)\leq \theta_k(\blambda') = 1 - k/r$. 
	\end{proof}
	
	\subsection{\label{app:proof_theo_5}Proof of \pref{prop:rank_r_MEDP}}

	\begin{proof}
		Define $\caW_0 \equiv \mathrm{span}\left(\{|i\rangle_\rmA\otimes |j\rangle_\rmB\}_{i,j=0}^{r-1}\right)$. Let $I_{\caW_0}$ be the projector onto this subspace and  $I_{\caW_0^\perp} = I - I_{\caW_0}$, then
		\begin{align}
			\rho_{\varepsilon,r} = (1 - \varepsilon)|+_r\>\<+_r| + \varepsilon \frac{I_{\caW_0} + I_{\caW_0^\perp}}{D} = (1-\varepsilon')\rho^{\mathrm{ef}}_{\varepsilon,r}+\varepsilon'\frac{I_{\caW_0^\perp}}{D - r^2},
		\end{align}
		where $D \equiv d_\rmA d_\rmB$ and
		\begin{equation}
			\rho^{\mathrm{ef}}_{\varepsilon,r} \equiv \frac{1-\varepsilon}{1-\varepsilon'}|+_r\rangle\langle +_r| + \frac{\varepsilon}{1 - \varepsilon'}\frac{I_{\caW_0}}{D}, \quad\varepsilon' \equiv \left(1 -\frac{r^2}{D}\right) \varepsilon.
		\end{equation}
		Every pure state decomposition of $\rho^{\mathrm{ef}}_{\varepsilon,r}$ corresponds to a pure state decomposition of $\rho_{\varepsilon,r}$ with the same maximal Schmidt number. Thus, $\SN(\rho_{\varepsilon,r}) \le \SN\left(\rho^{\mathrm{ef}}_{\varepsilon,r}\right)$. Note that $\rho^{\mathrm{ef}}_{\varepsilon,r}$ is an isotropic state of a bipartite system with local dimensions both equal to $r$. It is known that the Schmidt number of
		\begin{equation}
			\frac{r^2 F-1}{r^2-1}|+_r\rangle\langle +_r|+  \frac{1-F}{r^2 - 1}I_{\caW_0} 
		\end{equation}
		equals $\lceil rF\rceil$ \cite{terhal2000schmidt}. Hence, if $F$ satisfies
		\begin{equation}
			\frac{r^2F - 1}{r^2 - 1} = \frac{1 - \varepsilon}{1 - \varepsilon'},
		\end{equation}
		then we have 
		\begin{equation}
			\SN(\rho_{\varepsilon,r}) \le \SN\left(\rho^{\mathrm{ef}}_{\varepsilon,r}\right)=\lceil rF\rceil= \left\lceil \frac{r^2-1}{r}\cdot \frac{D(1-\varepsilon)}{D(1-\varepsilon) + \varepsilon r^2} + \frac{1}{r}\right\rceil = \left\lceil \frac{(1 + u)r}{1 + r^2 u}\right\rceil,
		\end{equation}
		which confirms the upper bound in \eref{eq:rank_r_MEDPSN}.

		On the other hand, if 
		\begin{equation}
			k\leq\left \lceil \frac{(1 + u)r}{1 + d_\rmB ru}\right\rceil-1, 
		\end{equation}
		then $\caN_k(\rho_{\varepsilon,r})>0$ by
		\pref{prop:depolarizing}, which implies that
		\begin{equation}
			\SN(\rho_{\varepsilon,r})\geq  \max\left\{\frac{(1 + u)r}{1 + d_\rmB ru},\frac{(1 + u)r}{1 + d_\rmA ru}\right\} = \frac{(1 + u)r}{1 + d ru}.
		\end{equation}
		This confirms the lower bound in \eref{eq:rank_r_MEDPSN}. 
		
		Finally, if  $r \le \sqrt{d}$ and  $\varepsilon < 1/2$, then we have $u < (d_\rmA d_\rmB)^{-1}$ and
		\begin{equation}
			\frac{(1 + u)r}{1+dru} > r-1,
		\end{equation}
		which implies that $\SN(\rho_{\varepsilon,r}) = r$.
		This observation completes the proof of \pref{prop:rank_r_MEDP}. 
	\end{proof}

	\section{Proof of \pref{prop:unfaithful}}
	\label{app:proof_of_prop_unfaithful}

	\begin{proof}
		
		To prove that $\rho$ is $k$-unfaithful, we need to find one solution to \eref{eq:sdp}. Let $\mu_{\rmA} = \mu_{\rmB} = 1/2$, then $E_{\rmA},E_{\rmB}$ need to satisfy
		\begin{equation}
			E_{\rmA}\otimes I_{\rmB} + I_{\rmA} \otimes E_{\rmB} \ge \rho,\quad E_{\rmA}\le \frac{1}{2},\quad E_{\rmB} \le \frac{1}{2},\quad \Tr(E_{\rmA}) = \Tr(E_{\rmB}) = \frac{k-1}{2}.\label{eq: conditions}
		\end{equation}
		Because $\pi(j)\neq \tau(j)$ for all $j=0,\ldots,k-1$, we have $\<+_k|\Phi\> = 0$. Then the condition $E_{\rmA}\otimes I_{\rmB} + I_{\rmA} \otimes E_{\rmB} \ge \rho$ leads to
		\begin{gather}
			\frac{1}{k}\sum_{i=0}^{k-1}\<i|E_{\rmA}|i\> + \frac{1}{k}\sum_{i=0}^{k-1}\<i|E_{\rmB}|i\> \ge \<+_k|\rho|+_k\> = \frac{1}{2}, \label{eq:B1}\\
			\frac{1}{\chi}\sum_{j=0}^{\chi-1}\phi_j\<\pi(j)|E_{\rmA}|\pi(j)\> + \frac{1}{\chi}\sum_{j=0}^{\chi-1}\phi_j\<\tau(j)|E_{\rmB}|\tau(j)\> \ge \<\Phi|\rho|\Phi\> = \frac{1}{2}. \label{eq:B2}
		\end{gather}
		We assume $E_{\rmA}, E_{\rmB}$ are diagonal in the computational basis with diagonal entries equal to either $1/4$ or $0$. Then both operators have $2k-2$ non-zero entries according to the fact that $\Tr(E_{\rmA}) = \Tr(E_{\rmB}) = (k-1)/2$. The conditions in \eqsref{eq:B1}{eq:B2} can be fulfilled by setting
		\begin{gather}
			\<i|E_{\rmA}|i\> = \<i|E_{\rmB}|i\> = 1/4,\quad i = 0,1,\ldots,k-1;\\
			\<\pi(j)|E_{\rmA}|\pi(j)\> = \<\tau(j)|E_{\rmB}|\tau(j)\> = 1/4, \quad j = 0,1,\ldots,\chi-1,
		\end{gather}
		which is equivalent to
		\begin{equation}
			\<i|E_{\rmA}|i\> = \frac{1}{4}\quad  \forall i \in \caI_L;\quad \<i|E_{\rmB}|i\> = \frac{1}{4}\quad \forall i \in \caI_R.
		\end{equation}
		Then as long as $|\caI_L|,|\caI_R| \le 2k-2$, the operators $E_{\rmA}, E_{\rmB}$ can be constructed as
		\begin{equation}
			E_{\rmA} = \frac{1}{4}\sum_{i\in\caI_L}|i\>\<i| + \frac{1}{4}\sum_{i\in \caI'}|i\>\<i|,\quad E_{\rmB} = \frac{1}{4}\sum_{i\in\caI_R}|i\>\<i| + \frac{1}{4}\sum_{i\in \caI''}|i\>\<i|,
		\end{equation}
		where $\caI'$ can be any index set that has no overlap with $\caI_L$ and satisfies $|\caI_L| + |\caI'| = 2k-2$. The other set $\caI''$ can be construct in the same way. Clearly $E_{\rmA} ,E_{\rmB} \le 1/2$, hence all requirements in \eref{eq: conditions} are satisfied. This proves that $\rho$ is $k$-unfaithful.

		Now we prove that $\caR_{k-1}(\rho) \not\ge 0$ with the extra condition in \eref{eq: extra_condition}. The reduced density matrix reads
		\begin{equation}
			\rho_{\rmA} = \frac{1}{2k}\sum_{i=0}^{k-1}|i\>\<i| + \frac{1}{2}\sum_{j=0}^{\chi-1}\phi_j|\pi(j)\>\<\pi(j)| = \sum_{i=0}^{k-1}\frac{1 + k\phi_{\pi^{-1}(i)}}{2k}|i\>\<i| + \frac{1}{2}\sum_{j:\pi(j)\ge k}\phi_j |\pi(j)\>\<\pi(j)|.
		\end{equation}
		Let $\Pi = \sum_{i=0}^{k-1}|ii\>\<ii|$. Then
		\begin{equation}
			\Pi \caR_{k-1}(\rho)\Pi = (k-1)\sum_{i=0}^{k-1}\<i|\rho_{\rmA}|i\>|ii\>\<ii| - \frac{1}{2}|+_k\>\<+_k| - \frac{1}{2}\Pi|\Phi\>\<\Phi|\Pi,
		\end{equation}
		where $\Pi|\Phi\>\<\Phi|\Pi = 0$ as $\pi(j)\neq \tau(j)$ for all $j=0,\ldots,k-1$. Clearly, if $\Pi \caR_{k-1}(\rho)\Pi\not\ge 0$, then $\caR_{k-1}(\rho) \not\ge 0$. Suppose $\boldsymbol{y} = (y_j)_{j=0}^{k-1}$ is an eigenvector of $\Pi \caR_{k-1}(\rho)\Pi$ with eigenvalue $-\theta$. Then
		\begin{equation}
			(k-1)\<i|\rho_{\rmA}|i\>y_i - \frac{1}{2k}\left(\sum_{i=0}^{k-1}y_i\right) = -\theta y_i,\quad i = 0,\ldots,k-1.
		\end{equation}Using the same technique in the proof of \lref{lem:thetaOmegalambda}, we get
		\begin{equation}
			\sum_{i=0}^{k-1}\frac{1}{(k-1)\<i|\rho_{\rmA}|i\> + \theta} = 2k.
		\end{equation}
		Hence, $\Pi \caR_k(\rho)\Pi$ has a negative eigenvalue iff
		\begin{equation}
			\sum_{i=0}^{k-1}\frac{1}{(k-1)\<i|\rho_{\rmA}|i\>} > 2k,
		\end{equation}
		which is equivalent to \eref{eq: extra_condition}. This observation completes the proof of Proposition \ref{prop:unfaithful}.
		
	\end{proof}
	
	\section{Proofs of \thsref{thm:main} and \ref{thm:maximal_order}}
	
	\subsection{An auxiliary lemma}
	
	Given $x \in \bbR$ and  $l\in \bbN_0$, define
	\begin{gather}
		e(x,l) \equiv
		\left(1,x,x^2,\ldots,x^{l}\right)^\top.
	\end{gather}
	Consider the moment sequence $S = (s_n)_{n\in\bbN_0}$ with $s_n = x^n$ and let $S_{k,2l+k} \equiv (s_k,s_{k+1},\ldots,s_{2l+k})$ be a truncated subsequence. The Hankel matrix $H(S_{k,2l+k})$ has entries
	\begin{equation}
		[H(S_{k,2l+k})]_{m,n} = s_{m+n+k}=x^k e(x,l)_m e(x,l)_n, \quad m,n = 0,1,\ldots,l.
	\end{equation}
	So  the Hankel matrix itself can be expressed as follows:
	\begin{equation}\label{eq:Hankel_decomposition}
		H(S_{k,2l+k}) = x^k e(x,l)e(x,l)^\top.
	\end{equation} 
	Next, as a generalization, consider the moment sequence $S' = (s'_n)_{n\in\bbN_0}$ with $s'_n = \sum_{i=0}^{\chi-1}m_i x_i^n$, where $x_i \in \bbR$ and $m_i \in \bbN$. By linearity, we have
	\begin{equation}\label{eq:Hankeldecomposition}
		H(S'_{k,2l+k}) = \sum_{i=0}^{\chi-1}m_i x_i^k e(x_i,l)e(x_i,l)^\top.
	\end{equation}
	The following lemma is instructive for understanding the properties of the Hankel matrix $H(S'_{k,2l+k})$. 
	
	\begin{lemma}\label{lem:vandermonde}
		Suppose $\{x_i\}_{i=0}^{\chi-1}$ is a set of distinct real numbers and $l\in \bbN_0$. Then 
		\begin{equation}\label{eq:vandermonde}
			\rank\left(\left[e(x_0,l), e(x_1,l),\ldots, e(x_{\chi-1},l)\right]\right) = \min\left\{l+1,\chi\right\}.
		\end{equation}
		If $l+1 \ge \chi$, then the set $\{e(x_i,l)\}_{i=0}^{\chi-1}$ is  linearly independent.
	\end{lemma}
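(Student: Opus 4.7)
The plan is to recognize that the matrix
\[
M \equiv [e(x_0,l), e(x_1,l),\ldots, e(x_{\chi-1},l)]
\]
is an $(l+1)\times \chi$ matrix with entries $M_{m,i}=x_i^{m}$, i.e., (a submatrix of) a classical Vandermonde matrix. The rank claim then reduces to the well-known fact that a Vandermonde matrix built from distinct nodes is of full rank.

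I would split the argument into two cases according to the value of $\min\{l+1,\chi\}$. First, suppose $\chi\le l+1$. Then consider the top $\chi\times \chi$ submatrix of $M$ obtained by keeping rows $m=0,1,\ldots,\chi-1$. This is the standard square Vandermonde matrix with determinant
\[
\prod_{0\le i<j\le \chi-1}(x_j-x_i),
\]
which is nonzero since the nodes $\{x_i\}$ are pairwise distinct. Hence $\rank(M)\ge \chi$, and together with the trivial bound $\rank(M)\le \chi$ this gives $\rank(M)=\chi$. Second, suppose $\chi>l+1$. Then any $l+1$ columns of $M$, say columns $i=0,1,\ldots,l$, form a square $(l+1)\times(l+1)$ Vandermonde matrix with nonzero determinant by the same argument, so $\rank(M)\ge l+1$; the reverse inequality is automatic from the row count. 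Combining the two cases yields $\rank(M)=\min\{l+1,\chi\}$, which is \eqref{eq:vandermonde}.

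Finally, the linear independence assertion is an immediate corollary: if $l+1\ge\chi$, then $\rank(M)=\chi$ equals the number of columns, so the columns $\{e(x_i,l)\}_{i=0}^{\chi-1}$ are linearly independent. There is no real obstacle here; the only thing to be careful about is keeping the two cases straight and justifying the Vandermonde determinant formula once, rather than reproving it. One could alternatively give a unified polynomial-interpolation proof (any nontrivial linear relation among the columns would produce a nonzero polynomial of degree $\le l$ with $\chi$ distinct roots, forcing $\chi\le l$, contradicting $\chi\le l+1$ together with nontriviality), which is slightly slicker and also handles both cases, but the Vandermonde-determinant route is the most direct.
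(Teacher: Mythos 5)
Your proposal is correct and uses the same key idea as the paper's proof: exhibit a square Vandermonde submatrix of size $\min\{l+1,\chi\}$ with nonzero determinant to get the lower bound on the rank, the upper bound being trivial from the matrix dimensions. The only cosmetic difference is that you split into the two cases $\chi\le l+1$ and $\chi>l+1$, whereas the paper handles both at once by taking the square submatrix indexed by $i,j=0,\ldots,\min\{l,\chi-1\}$.
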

	\begin{proof}
		Let $V$ be the Vandermonde matrix  associated with the set $\{x_i\}_{i=0}^{\chi-1}$, that is,
		\begin{equation}
			V_{ij} \equiv x_i^j\quad i,j = 0,1,\ldots,\min\left\{l, \chi-1\right\}.
		\end{equation}
		Thanks to the Vandermonde determinant formula, we have
		\begin{equation}
			\det(V) = \prod_{0 \le i < j \le \dim(V)-1}(x_i - x_j) \neq 0.
		\end{equation}
		Therefore,
		\begin{equation}
			\min\{l+1, \chi\}=   \mathrm{rank}(V)\leq  \rank\left(\left[e(x_0,l), e(x_1,l),\ldots, e(x_{\chi-1},l)\right]\right)\leq \min\left\{l+1,\chi\right\},
		\end{equation}
		which implies \eref{eq:vandermonde}. 
		
		If $l +1 \ge \chi$, then $\rank\left(\left[e(x_0,l), e(x_1,l),\ldots, e(x_{\chi-1},l)\right]\right) = \chi$ by \eref{eq:vandermonde}, so the set $\{e(x_i,l)\}_{i=0}^{\chi-1}$ is  linearly independent, which completes the proof of \lref{lem:vandermonde}. 
	\end{proof}
	
	\subsection{\label{app:proofmain}Proofs of \thsref{thm:main} and \ref{thm:maximal_order}}

	In preparation  for proving \thsref{thm:main} and \ref{thm:maximal_order}, suppose $\rho\in \caS(\caH_{\rmA\rmB})$ and  $\caR_k(\rho)$ has $\chi$ distinct nonzero eigenvalues, namely,  $\{x_i\}_{i=0}^{\chi-1}$, where $x_i$ has multiplicity $m_i$. According to the definition in \eref{eq:BN} and the formula in \eref{eq:Hankeldecomposition}, $B_N[\rho,k]$ can be expressed as follows:
	\begin{align}
		B_N[\rho,k] &= \sum_{i=0}^{\chi-1}m_i p_{N,k}(x_i)b_N(x_i)b_N(x_i)^\top,\label{eq:bn_decomposition}
	\end{align}
	where
	\begin{align}
		b_N(x) &\equiv e(x,\lfloor (N-1)/2\rfloor) = \left(1,x,x^2,\ldots,x^{\lfloor\frac{N-1}{2}\rfloor}\right)^\top ,\label{eq:bn_vector}\\
		p_{N,k}(x) &\equiv \begin{cases}
			x, & N \textrm{ odd};\\
			x(k-x), & N \textrm{ even}.
		\end{cases}\label{eq:pNkx}
	\end{align}

	\begin{proof}[Proof of \thref{thm:main}]
		Suppose $\caR_k(\rho)$ has $\chi$ different nonzero eigenvalues $\{x_i\}_{i=0}^{\chi-1}$, where $x_i$ has multiplicity $m_i \ge 1$. Then $B_N$ has the decomposition in \eref{eq:bn_decomposition} with $x_i\in \bbR$ and $m_i\in \bbN$. Since $\sn(\rho) \le k$ by assumption, we have $0 \le x_i \le k$ for all $i$ thanks to \thref{thm:pure_state} and \coref{coro:spectrum}. Therefore, $m_ip_{N,k}(x_i) \ge 0$ for all $i$, which means $B_N \ge 0$ and completes the proof of \thref{thm:main}.
	\end{proof}
	
	\begin{proof}[Proof of \thref{thm:maximal_order}]
		Let $\{x_i\}_{i=0}^{\chi-1}$  be the set of distinct nonzero eigenvalues of $\caR_k(\rho)$  and denote by $m_i$ the  multiplicity of $x_i$. Then $B_N$ has the decomposition in  \eref{eq:bn_decomposition}  with  $m_i\in \bbN$ and $x_i\in [-1+k/d, k]$ by \coref{coro:spectrum}. In addition,  $\{b_N(x_i)\}_{i=0}^{\chi-1}$ are linearly independent according to \lref{lem:vandermonde}, given that $N \ge 2\chi -1$ by assumption. Hence, $B_N\ge 0$ iff $m_i p_{N,k}(x_i) \ge 0$ for all $i$. Meanwhile, $m_i p_{N,k}(x_i) \ge 0$ iff $x_i\ge 0$. Therefore, $B_N \ge 0$ iff $x_i \ge 0$ 
		for all $i$, that is, $\caR_k(\rho)\geq 0$, which completes the proof of \thref{thm:maximal_order}.
	\end{proof}

	\section{\label{app:k_reduction_detect}Proofs of  \thref{thm:pure_state_detect} and \pref{prop:isotropic}           }
	
	\subsection{\label{app:proof_of_theorem5}Proof of \thref{thm:pure_state_detect}}
	
	\begin{proof}[Proof of \thref{thm:pure_state_detect}]
		Let $k=1,2,\ldots, r-1$; then $\caR_{k}(\psi) \not\geq 0$ and  $\caR_k(\psi)$ has at most $2\tilde{r}$ distinct nonzero eigenvalues according to \thref{thm:pure_state}. Suppose $N \ge 4\tilde{r}-1$; then $\caR_{k}(\psi) \not\geq 0$ iff $B_N[\psi,k] \not\geq 0$ thanks to \thref{thm:maximal_order}. Therefore, $B_N[\psi,k] \not\geq 0$, which  confirms \eref{eq: theorem6statement1}.
		
		Let $\{\ell_j\}_{j=0}^{\tilde{r}-1}$	be the set of distinct nonzero Schmidt coefficients of $|\psi\>$, where $\ell_j$ has multiplicity $m_j$. 
		According to \lref{lem:Tomega}, the spectrum of $\caR_k(\psi)$ can be expressed as follows:
		\begin{equation}
			\sigma(\caR_{k}(\psi)) = \bigcup_{j=0}^{\tilde{r}-1}\{k\ell_j\}^{\times (d_\rmB m_j-1)}\cup \{\omega_j\}_{j=0}^{\tilde{r}-1} \cup\{0\}^{\times d_\rmB(d- r)},
		\end{equation}
		where $\omega_j$ for $j=0,1,\ldots, \tilde{r}-1$ are the eigenvalues of the matrix $\Tomega(\blambda)$ defined in \lref{lem:Tomega}. Note that $\sum_j m_j = r$ and
		\begin{align}
			0<k\ell_j\leq k, \quad -1< \omega_j< k \quad \forall j=0, 1,\ldots, \tilde{r}-1; \quad 0< \omega_j<k\quad \forall j=0, 1,\ldots, \tilde{r}-2. 
		\end{align}
		In conjunction with Eqs. (\ref{eq:BN}) and (\ref{eq:bn_decomposition}),  the Hankel matrix $B_N[\psi,k]$ can be expressed as follows:
		\begin{equation}
			\begin{gathered}
				B_N=	B_N[\psi,k] =A_1+A_2,\\
				A_1\equiv \sum_{j=0}^{\tilde{r}-1}p_{N,k}(\omega_j) b_N(\omega_j)b_N(\omega_j)^\top,\quad
				A_2\equiv\sum_{j=0}^{\tilde{r}-1} (d_\rmB m_j - 1) p_{N,k}(k\ell_j) b_N(k\ell_j)b_N(k\ell_j)^\top,
			\end{gathered}
		\end{equation}
		where $b_N$ and $p_{N,k}(x)$ are defined in  \eqsref{eq:bn_vector}{eq:pNkx}, respectively. Note 
		that $p_{N,k}(k\ell_j)\geq 0$, while  $p_{N,k}(\omega_j)$ has the same sign as $\omega_j$ for $j=0,1,\ldots, \tilde{r}-1$.

		Now suppose  $N \le 2\tilde{r}$ and $k=r-1$. Let $\ell_{\min} \equiv \min_j \{\ell_j\}$ and $\omega_{\min} \equiv \min_j\{\omega_j\} = -\caN_{r-1}(\psi)$; then $\omega_{\min}\in  (-1,0)$. In addition, $A_1$, $A_2$, and $B_N$ are square matrices with at most $\tilde{r}$ rows (columns), and $1 \le \|b_N(\omega_{\min})\|_2^2 \le \tilde{r}$. In conjunction with \lref{lem:vandermonde}, we can deduce that $A_2$ has full rank, which means $\Im(A_1) \subset \Im(A_2)$. 
		
		By definition in \eref{eq:A3} we have
		\begin{equation}
			A_0 = \sum_{j=0}^{\tilde{r}-1} p_{N,r-1}[(r-1)\ell_j] b_N[(r-1)\ell_j]b_N[(r-1)\ell_j]^\top,
		\end{equation}
		which means $\sigma_{\min}(A_2) \ge (d_\rmB-1)\sigma_{\min}(A_0)$ given that $m_j \ge 1$ and $p_{N,r-1}[(r-1)\ell_j] \ge 0$ for all $j$.
		If in addition $N$ is odd, then 
		\begin{align}
			\sigma_{\min}(A_1) & \ge p_{N,r-1}(\omega_{\min})\|b_N(\omega_{\min})\|_2^2 \geq   \tilde{r} \omega_{\min},\\
			\sigma_{\min}\left(B_N\right) &\ge \sigma_{\min}(A_1) + \sigma_{\min}(A_2) \ge \tilde{r}\omega_{\min} + (d_\rmB -1)\sigma_{\min}(A_0).
		\end{align}
		If instead $N$ is even, then 
		\begin{align}
			\sigma_{\min}(A_1) & \ge p_{N,r-1}(\omega_{\min})\|b_N(\omega_{\min})\|_2^2 \geq   r\tilde{r} \omega_{\min},\\
			\sigma_{\min}(B_N) &\ge r\tilde{r}\omega_{\min} + (d_\rmB-1)\sigma_{\min}(A_0).  \label{eq:BNeigMinEven}
		\end{align}
		Note that $\omega_{\min} = -\caN_{r-1}(\psi)$ and that \eref{eq:BNeigMinEven} holds in both cases. If the condition in \eref{eq:dBLBcondition} is satisfied, that is,
		$d_\rmB- 1 \ge r\tilde{r}\caN_{r-1}(\psi)/\sigma_{\min}(A_0)$, then $\sigma_{\min}(B_N) > 0$, which  confirms \eref{eq: theorem6statement2} and completes the proof of \thref{thm:pure_state_detect}.
	\end{proof}

	\subsection{Proof of \pref{prop:isotropic}}
	
	\begin{proof}
		The $k$-reduced operator $\caR_k(\rho_F)$ has spectrum
		\begin{equation}
			\sigma(\caR_k(\rho_F)) = \left\{\lambda_1\right\}^{\times (d^2 - 1)}\sqcup\left\{\lambda_0\right\},\quad \lambda_1 = \frac{F - 1}{d^2 - 1} + \frac{k}{d} > 0,\quad \lambda_0 = \frac{k}{d} - F.
		\end{equation}
		Note that $\caR_k(\rho_F)$ has at most two distinct nonzero eigenvalues. 
		Given $N = 3$, then  $B_N[\rho_F,k] \ge 0$ iff $\caR_k(\rho_F) \ge 0$ according to \thref{thm:maximal_order}. If in addition $k \le \lceil dF\rceil - 1$, then $\lambda_0 < 0$ and $\caR_k(+_r) \not\ge 0$, which further implies that $B_N[+_r,k] \not\ge 0$.
	\end{proof}

	\section{\label{app:cmc_detect}Proof of \pref{prop:cmc_isotropic}}
	
	\begin{proof}
		By assumption the local Hilbert spaces $\caH_\rmA$ and $\caH_\rmB$ have the same dimension $d$ and thus can be identified as two copies of a $d$-dimensional Hilbert space $\caH$. 
		Let $\{P_j\}_{j=1}^{d^2-1}$ be an orthogonal basis
		on $\caL_0^\rmH(\caH)$
		that satisfies the normalization condition $\tr(P_jP_k)=d\delta_{j,k}$. Then $\{P_j^*\}_{j=1}^{d^2-1}$ is also an orthogonal basis on this space. In conjunction with the definition of $\rho_F$ in \eref{eq:isotropic_state}, the entries of its CM can be calculated as follows:
		\begin{align}
			T_{jk} &= \frac{d^2F-1}{d(d^2-1)}\langle +_d|P_j\otimes P^*_k |+_d\rangle = \frac{d^2F-1}{d^2(d^2-1)}\sum_{m,n=0}^{d-1}\langle m m|P_j\otimes  P_k^*|n n\rangle \nonumber\\
			&= \frac{d^2 F-1}{d^2(d^2-1)}\sum_{m,n=0}^{d-1}\langle m|P_j|n\rangle\langle n|P_k^\dag|m\rangle = \frac{d^2 F - 1}{d^2(d^2 - 1)}\Tr\bigl(P_j  P_k^\dagger\bigr) = \frac{d^2 F-1}{d(d^2-1)}\delta_{j,k},
		\end{align}
		where $j,k = 1,2,\ldots,d^2-1$. 
		Therefore, all singular values of $T$ are equal to $\frac{d^2 F - 1}{d(d^2 - 1)}$, which has multiplicity $d^2-1$. As a simple corollary, we have 
		
		\begin{align}
			\|T\|_1 &= (d^2-1)\times\frac{d^2 F - 1}{d(d^2 - 1)} = \frac{d^2F - 1}{d},\\
			\|T\|_2^2 &= (d^2-1)\times \left[\frac{d^2 F - 1}{d(d^2 - 1)}\right]^2 = \frac{(d^2F-1)^2}{d^2(d^2-1)},\\
			\|T\|_4^4 &= (d^2-1)\times \left[\frac{d^2 F - 1}{d(d^2 - 1)}\right]^4 = \frac{(d^2F-1)^4}{d^4(d^2-1)^3},
		\end{align}
		
		which completes the proof of \pref{prop:cmc_isotropic}. 
	\end{proof}

	\section{\label{app:haar_random}Proof of \thref{thm:sample_complexity}}
	
	In this appendix, we will show that, given access to a unitary 3-design, the sample complexity of estimating the functions $p_2 \equiv \Tr(\rho^2)$,  $p_3 \equiv \Tr(\rho^3)$, and $t_2 \equiv \Tr[\rho_{\rmA} \Tr_{\rmB}(\rho^2)]$ using the classical shadow method is $\caO(D)$, where $D$ is the dimension of the underlying Hilbert space $\caH_{\rmA\rmB}$. Note that the protocol for estimating $a_2,a_3$ is identical to the counterpart for estimating $p_2,p_3$, but the underlying Hilbert space $\caH_\rmA$ has dimension $\sqrt{D}$ instead. Hence, the sample complexity of estimating $a_2,a_3$ is  smaller than that of $p_2,p_3$, and the sample complexity of the third order moment-based $k$-reduction criterion is $\caO(D)$.

	\subsection{An auxiliary lemma}
	
	We first recall the necessary background on unitary designs. Denote by $\Phi^{(k)}$ the $k$-th twirling superoperator defined as follows:
		\begin{equation}
		\Phi^{(k)}(O) \equiv \int d\mu(U) \, U^{\otimes k} O (U^\dagger)^{\otimes k}, \quad O\in \caL\left(\caH_{\rmA\rmB}^{\otimes k}\right).
	\end{equation}
	Here $\mu(U)$ denotes the Haar measure on the  group $\rmU(\caH_{\rmA\rmB})$ of unitary operators on $\caH_{\rm A\rm B}$. A unitary ensemble $\caU$ on $\rmU(\caH_{\rmA\rmB})$ 
	is called a unitary $k$-design if
	\begin{equation}
		\mathbb{E}_{U \sim \caU} [U^{\otimes k} \cdot (U^\dagger)^{\otimes k}] = \Phi^{(k)}(\cdot).
	\end{equation}
	 When $O = (|\psi\>\<\psi|)^{\otimes k}$ is the $k$-th tensor power of a pure state, by virtue of Schur's lemma we can deduce that \begin{equation} \label{eq:Schur}
		\Phi^{(k)}\left[(|\psi\>\<\psi|)^{\otimes k}\right] = \int_{\mathrm{Haar}} dv(|v\>\<v|)^{\otimes k} = \frac{(D-1)!}{(D+k-1)!}\sum_{\pi\in \mathrm{Sym}_k}\bbW_{\pi},
	\end{equation}
	where  $D$ is the dimension of the Hilbert space $\caH_{\rm A\rm B}$, $\mathrm{Sym}_k$ is the symmetric group on $k$ elements, and $\bbW_{\pi}$ denotes the permutation operator on $\caH^{\otimes k}_{\rmA\rmB}$ corresponding to $\pi\in \mathrm{Sym}_k$. The above equation will be useful in the following derivations.
	
	In the classical shadow protocol, we first sample a random unitary $U$ from a given unitary ensemble $\caU$, then measure the rotated state $U \rho U^\dagger$ in the computational basis and record the measurement outcome $b$.  When $\caU$ forms a unitary 2-design, an unbiased estimator for the state $\rho$ can be constructed as follows:
	\begin{equation}\label{eq:ShadowEstimator}
		\hat{\rho} \equiv (D+1) U^\dagger |b\rangle\langle b| U - I.
	\end{equation}
	On this basis, we can construct unbiased estimators for $p_2$, $p_3$, and $t_2$.  To analyze the variances of these estimators, we need to introduce some  notation and an auxiliary lemma.  Define
	\begin{equation}
		R_\rho \equiv I \otimes I + I \otimes \rho + \rho \otimes I, \qquad \mathbb{W} \equiv \mathbb{W}_{(12)},
	\end{equation}
	where $(12)$ denotes the transposition swapping the first and second elements.
The following lemma is quite helpful for computing the variances of the estimators $p_2$, $p_3$, and $t_2$. A similar result can be found in Lemma 14 of \cite{grier2022sample}.
	\begin{lemma}\label{lem:classicalshadow}
		Suppose $\caU$ is a unitary 3-design, and $\hr$, $\hr'$ are two independent estimators for $\rho$. Then $\bbE[\hr] = \rho$ and
		\begin{align}
			\label{eq:F7} \bbE[\hr\otimes\hr] &= \frac{D+1}{D+2}R_\rho\bbW - \frac{1}{D+2}R_\rho,\\ 
			\label{eq:F8}    \bbE[\hr\otimes\hr]^2 &= \frac{(D+1)^2 + 1}{(D+2)^2}R_\rho^2 - \frac{2(D+1)}{(D+2)}R_\rho^2\bbW,\\ 
			\bbE[(\hr\otimes\hr')(\hr'\otimes\hr)] &= \frac{(D+1)^3}{(D+2)^2}R_\rho \bbW - \frac{2(D+1)}{(D+2)^2}R_\rho^2\bbW + \frac{1}{(D+2)^2}R_\rho^2  \nonumber\\
			\label{eq:F9} &\quad + \frac{D(D+1)^2}{(D+2)^2}(\rho \otimes \rho)\bbW + \frac{(D+1)^2}{(D+2)^2}(1 + p_2)\bbW.
		\end{align}
	\end{lemma}
	\begin{proof}
The equality $\bbE[\hr] = \rho$ follows from \eref{eq:ShadowEstimator} and the equation below:
		\begin{align}
			\bbE[U^\dag |b\>\<b|U] &= \bbE_{U\sim\caU}\left[\sum_b\langle b|U \rho U^\dag|b\rangle U^\dag |b\>\<b|U\right] = \sum_b\Tr_2\left\{\bbE_{U\sim\caU}\left[(U^\dag)^{\otimes 2}(|b\rangle\langle b|)^{\otimes 2} U^{\otimes 2}\right](I\otimes \rho)\right\} \nonumber\\
			&= \sum_b\Tr_2\left\{\Phi^{(2)}\left[(|b\rangle\langle b|)^{\otimes 2}\right](I\otimes \rho)\right\}\quad \text{ since } \caU \text{ is a unitary 2-design}\nonumber \\
			&= \sum_b\Tr_2\left[\frac{1}{D(D+1)}(I + \bbW)(I\otimes \rho)\right] = \frac{\rho + I}{D+1},
		\end{align}
	where the fourth equality follows from 	\eref{eq:Schur} with $k=2$ and in deriving the last equality we have taken into account  a factor of $D$ from the summation over $b$.

To prove \eref{eq:F7}, we first evaluate the following expectation value:		
		\begin{align}
			\bbE[(U^\dag |b\>\<b|U)^{\otimes 2}] &= \bbE_{U\sim\caU}\left[\sum_b\langle b|U \rho U^\dagger|b\rangle (U^\dag |b\>\<b|U)^{\otimes 2}\right]\nonumber\\
			&= \sum_b\Tr_3\left\{\bbE_{U\sim\caU}\left[(U^\dag)^{\otimes 3}(|b\rangle\langle b|)^{\otimes 3} U^{\otimes 3}\right](I\otimes I\otimes \rho)\right\}\nonumber \\
			&= \sum_b\Tr_3\left\{\Phi^{(3)}\left[(|b\rangle\langle b|)^{\otimes 3}\right](I\otimes I\otimes \rho)\right\}\quad \text{ since } \caU \text{ is a unitary 3-design}\nonumber \\
			&= \frac{1}{(D+1)(D+2)}(I\otimes I + \rho\otimes I + I\otimes\rho)(I + \bbW),
		\end{align}
	where the last equality follows from 	\eref{eq:Schur} with $k=3$. Now, \eref{eq:F7} can be proved as follows:
		\begin{align}
			\bbE[\hr\otimes\hr] &= (D+1)^2\bbE[(U^\dag |b\>\<b|U)^{\otimes 2}] - (D+1)\left\{\bbE[U^\dag |b\>\<b|U]\otimes I + I\otimes \bbE[U^\dag |b\>\<b|U]\right\} + I\otimes I\nonumber\\
			&= \frac{D+1}{D+2}(I\otimes I + I\otimes \rho + \rho\otimes I)\bbW - \frac{1}{D+2}(I\otimes I + \rho\otimes I + I\otimes\rho).
		\end{align}
		
	\Eref{eq:F8} can be proved using a similar reasoning employed above.
		
		To prove \eref{eq:F9}, we first rewrite the LHS as follows:
		\begin{align} \label{eq:F14}
			\bbE[(\hr\otimes\hr')(\hr'\otimes\hr)] &=\bbE\left\{\Tr_{34}\left[(\hat{\rho} \otimes \hat{\rho'}\otimes \hat{\rho'} \otimes \hat{\rho}) \cdot \bbW_{(13)(24)}\right]\right\}\nonumber\\
			&= \bbE\left\{\Tr_{34}\left[\bbW_{(234)}(\hat{\rho} \otimes \hat{\rho}\otimes \hat{\rho'} \otimes \hat{\rho'})\bbW^{-1}_{(234)} \cdot \bbW_{(13)(24)}\right]\right\}\nonumber\\
			&= \Tr_{34}\left\{\bbW_{(234)}\left(\bbE[\hr\otimes\hr]\otimes\bbE[\hr\otimes\hr]\right)\bbW_{(134)}\right\}\nonumber\\
			&= \frac{(D+1)^2}{(D+2)^2}\Tr_{34}\left[\bbW_{(234)}\left(R_\rho \bbW \otimes R_\rho \bbW\right)\bbW_{(134)}\right]\nonumber\\
			&\equad - \frac{(D+1)}{(D+2)^2}\Tr_{34}\left[\bbW_{(234)}\left(R_\rho \bbW \otimes R_\rho\right)\bbW_{(134)}\right]\nonumber\\
			&\equad - \frac{(D+1)}{(D+2)^2}\Tr_{34}\left[\bbW_{(234)}\left(R_\rho \otimes R_\rho \bbW\right)\bbW_{(134)}\right]\nonumber\\
			&\equad + \frac{1}{(D+2)^2}\Tr_{34}\left[\bbW_{(234)}\left(R_\rho \otimes R_\rho\right)\bbW_{(134)}\right].
		\end{align}
In addition, direct calculation yields
\begin{gather}
			\Tr_{34}\left[\bbW_{(234)}\left(R_\rho \bbW \otimes R_\rho \bbW\right)\bbW_{(134)}\right] = (D+1)R_\rho \bbW + (1+p_2)\bbW + D(\rho \otimes \rho)\bbW, \\
			\Tr_{34}\left[\bbW_{(234)}\left(R_\rho \bbW \otimes R_\rho\right)\bbW_{(134)}\right] = (1 + 2 \rho \otimes I)R_\rho\bbW,\\
			\Tr_{34}\left[\bbW_{(234)}\left(R_\rho \otimes R_\rho\bbW\right)\bbW_{(134)}\right] = (1 + 2I\otimes \rho)R_\rho\bbW,\\
			\Tr_{34}\left[\bbW_{(234)}\left(R_\rho \otimes R_\rho\right)\bbW_{(134)}\right] = R_\rho^2.
		\end{gather}
		In conjunction with \eref{eq:F14}, we can derive \eref{eq:F9} and complete the proof of \lref{lem:classicalshadow}.
	\end{proof}
	

	\subsection{Variance of \texorpdfstring{$\Tr(\rho^2)$}{empty} by classical shadow}
	
	The same computation was done in \cite{elben2020mixed}. Here we write down the computation process with our notation for completeness. Label the state estimator of each run as $\hr_1, \hr_2, \ldots, \hr_M$ separately, where $M$ is the total number of samples. Then the estimator for the purity is (recall that $p_n \equiv \Tr(\rho^n)$):
	\begin{equation}
		\hat{p}_2 = \binom{M}{2}^{-1}\sum_{j<k}\Tr(\hr_j \hr_k).
	\end{equation}
	Its variance $\text{Var}\left[\hat{p}_2\right]$ can be computed by
	\begin{align}
		\text{Var}\left[\hat{p}_2\right] &= \bbE\left[\hat{p}_2^2\right] - p_2^2 = \binom{M}{2}^{-2}\sum_{j_1<k_1,j_2<k_2}\bbE\left[\Tr(\hr_{j_1}\hr_{k_1})\Tr( \hr_{j_2}\hr_{k_2}) - p_2^2\right] \nonumber\\
		&= \binom{M}{2}^{-1}\left\{2(M-2)\left[\Tr\{(\rho\otimes\rho) \bbE[\hr\otimes\hr]\} - p_2^2\right] + \left[\Tr(\bbE[\hr\otimes\hr]^2) - p_2^2\right]\right\} \nonumber\\
		&\le \frac{4(M-2)}{M(M-1)}\Tr\{(\rho\otimes\rho) \bbE[\hr\otimes\hr]\} + \frac{2}{M(M-1)}\Tr(\bbE[\hr\otimes\hr]^2).
	\end{align}
	Note that
	\begin{align}
		\Tr(R_\rho\bbW R_\rho\bbW) = \Tr(R^2_\rho) = D^2 + 4D + 2D p_2 + 2,\quad 
		\Tr(R^2_\rho\bbW) = D + 4 + 4 p_2.
	\end{align}
	In conjunction with \eqsref{eq:F7}{eq:F8} in \lref{lem:classicalshadow}, we obtain
	\begin{align}
		\Tr\{(\rho\otimes\rho)\bbE[\hr\otimes\hr]\} &= \frac{D+1}{D+2}(p_2 + 2p_3) - \frac{1}{D+2}(1 + 2p_2)\le p_2 + 2p_3,\\
		\Tr(\bbE[\hr\otimes\hr]^2) &=  (D+1)^2 - \frac{D^2 + 2(D+1)(D+4)}{(D+2)^2} + 2\frac{D^2-2}{D+2}p_2\le (D+1)^2 + 2D p_2.
	\end{align}
	Hence,
	\begin{equation}
		\text{Var}\left[\hat{p}_2\right] \le \frac{4(M-2)}{M(M-1)}(p_2 + 2p_3) + \frac{2}{M(M-1)}[(D+1)^2 + 2D p_2].
	\end{equation}
	To guarantee that $\text{Var}\left[\hat{p}_2\right] = \caO(1)$, we need $M = \caO(D)$. 
	
	\subsection{Variance of \texorpdfstring{$\Tr(\rho^3)$}{empty} by classical shadow} The third moment $p_3$ has estimator
	\begin{equation}
		\hat{p}_3 = \frac{1}{2}\binom{M}{3}^{-1}\sum_{i<j<k}\Tr\left(\hr_i\hr_j \hr_k\right) + \frac{1}{2}\binom{M}{3}^{-1}\sum_{i<j<k}\Tr\left(\hr_j\hr_i \hr_k\right).
	\end{equation}
	Its variance is
	\begin{align}\label{eq:varp3}
		\text{Var}\left[\hat{p}_3\right] &= \frac{1}{4}\binom{M}{3}^{-2}\sum_{i_1<j_1<k_1,i_2<j_2<k_2}\bbE[\Tr(\hr_{i_1}\hr_{j_1}\hr_{k_1})\Tr(\hr_{i_2}\hr_{j_2}\hr_{k_2})- p_3^2] \nonumber\\
		&\equad + \frac{1}{4}\binom{M}{3}^{-2}\sum_{i_1<j_1<k_1,i_2<j_2<k_2}\bbE[\Tr(\hr_{j_1}\hr_{i_1}\hr_{k_1})\Tr(\hr_{j_2}\hr_{i_2}\hr_{k_2})- p_3^2] \nonumber\\
		&\equad  + \frac{1}{2}\binom{M}{3}^{-2}\sum_{i_1<j_1<k_1,i_2<j_2<k_2}\bbE[\Tr(\hr_{i_1}\hr_{j_1}\hr_{k_1})\Tr(\hr_{j_2}\hr_{i_2}\hr_{k_2})- p_3^2]. 
	\end{align}
	Now we analyze the expectation value of each summand with different indices. Based on how many pairs of $\hr$ sharing the same index, we can classify the summand into four categories.
	\begin{enumerate}
		\item All indices are distinct, then the summand equals
		\begin{equation}\bbE[\Tr(\hr_{1}\hr_{2}\hr_{3})\Tr(\hr_{4}\hr_{5}\hr_{6})] - p_3^2 = 0.
		\end{equation}

		\item There exists exactly a pair of identical indices, then in conjunction with \eref{eq:F7}, the summand equals
		\begin{align}
			&\equad \bbE[\Tr(\hr_{1}\hr_{2}\hr_{3})\Tr(\hr_{1}\hr_{4}\hr_{5})] = \Tr\{\bbE[\hr\otimes\hr](\rho^2\otimes\rho^2)\}\nonumber\\ 
			&= \frac{D+1}{D+2}\Tr[R_\rho \bbW (\rho^2\otimes\rho^2)] - \frac{1}{D+2}\Tr[R_\rho (\rho^2\otimes\rho^2)]\nonumber\\
			&= \frac{D+1}{D+2}\left(p_4 + 2p_5\right) - \frac{1}{D+2}\left(p_3^2 + 2p_2p_3\right) = \caO(1).
		\end{align}
		
		\item There exist exactly two pairs of identical indices, then the summand has the following three types.
		
		Using \eref{eq:F8}, we have
		\begin{align}
			&\equad \bbE[\Tr(\hr_{1}\hr_{2}\hr_{3})\Tr(\hr_{1}\hr_{2}\hr_{4})] = \Tr\{\bbE[\hr\otimes\hr]^2(\rho\otimes\rho)\} \nonumber \\
			&= \frac{(D+1)^2}{(D+2)^2}\Tr[R_\rho \bbW R_\rho \bbW (\rho\otimes\rho)] + \frac{1}{(D+2)^2}\Tr[R_\rho^2 (\rho\otimes\rho)]\nonumber\\
			&\equad - \frac{D+1}{(D+2)^2}\Tr[R_\rho \bbW R_\rho (\rho\otimes\rho)] - \frac{D+1}{(D+2)^2}\Tr[R_\rho^2 \bbW (\rho\otimes\rho)] \nonumber \\
			&= \frac{(D+1)^2}{(D+2)^2}(1 + 4p_2 + 2p_2^2 + 2p_3) + \frac{(1 + 2p_2)^2}{(D+2)^2}  - \frac{2D+2}{(D+2)^2}(p_2 + 4p_3 + 4p_4) = \caO(1).
		\end{align}
		
		Using \eref{eq:F7}, we have
		\begin{align}
			&\equad \bbE[\Tr(\hr_{1}\hr_{2}\hr_{3})\Tr(\hr_{1}\hr_{4}\hr_{2})] = \Tr\{\bbE[\hr\otimes\hr](I\otimes\rho)\bbE[\hr\otimes\hr](\rho\otimes I)\} \nonumber \\
			&= \frac{(D+1)^2}{(D+2)^2}\Tr[R_\rho \bbW (I\otimes \rho)R_\rho \bbW (\rho\otimes I)] + \frac{1}{(D+2)^2}\Tr[R_\rho (I\otimes \rho) R_\rho (\rho\otimes I)]\nonumber \\
			&\equad -\frac{D+1}{(D+2)^2}\Tr[R_\rho \bbW (I\otimes \rho)R_\rho (\rho\otimes I)] -\frac{D+1}{(D+2)^2}\Tr[R_\rho (I\otimes \rho)R_\rho \bbW (\rho\otimes I)]\nonumber \\
			&= \frac{(D+1)^2}{(D+2)^2}[D(p_2 + 2p_3 + p_4) + 2p_3 + 2p_2 + p_2^2] + \frac{(1 + 2p_2)^2}{(D+2)^2} - \frac{2D+2}{(D+2)^2}(p_2 + 4p_3 + 4p_4)= \caO(D).
		\end{align}
		
		Using \eref{eq:F9}, we have
		\begin{align}
			&\equad \bbE[\Tr(\hr_{1}\hr_{2}\hr_{3})\Tr(\hr_{2}\hr_{1}\hr_{4})] = \Tr\{\bbE[(\hr\otimes\hr')(\hr'\otimes\hr)] (\rho\otimes \rho)\}\nonumber\\
			&= \frac{(D+1)^3}{(D+2)^2}\Tr[R_\rho \bbW (\rho\otimes\rho)] + \frac{1}{(D+2)^2}\Tr[R_\rho \bbW R_\rho \bbW(\rho\otimes\rho)] - \frac{2(D+1)}{(D+2)^2}\Tr[R_\rho^2\bbW(\rho\otimes\rho)]\nonumber\\
			& \equad \frac{D(D+1)^2}{(D+2)^2}\Tr[(\rho \otimes \rho)\bbW(\rho \otimes \rho)] + \frac{(D+1)^2}{(D+2)^2}(1+p_2)\Tr[\bbW(\rho \otimes \rho)]\nonumber\\
			&= \frac{(D+1)^3}{(D+2)^2}(p_2 + 4p_3 + 4p_4) + \frac{D(D+1)^2}{(D+2)^2}p_4 + \frac{(D+1)^2}{(D+2)^2}(1+p_2)p_2 \nonumber\\
			&\equad + \frac{1}{(D+2)^2}\left(1 + 4p_2 + 2p_2^2 + 2p_3\right)  - \frac{2(D+1)}{(D+2)^2}(p_2 + 4p_3 + 4p_4) = \caO(D).
		\end{align}
		
		\item There exist three pairs of identical indices, then the summand has the following two types.
		
		Using \eref{eq:F7}, we have
		\begin{align}
			&\equad \bbE[\Tr(\hr_{1}\hr_{2}\hr_{3})\Tr(\hr_{1}\hr_{2}\hr_{3})] = \Tr(\bbE[\hr\otimes\hr]^3)\nonumber\\
			&= \frac{(D+1)^3}{(D+2)^3}\Tr(R_\rho \bbW R_\rho \bbW R_\rho \bbW) - \frac{3(D+1)^2}{(D+2)^3}\Tr(R_\rho^2 \bbW R_\rho \bbW) + \frac{3(D+1)}{(D+2)^3}\Tr(R_\rho^3 \bbW) - \frac{1}{(D+2)^3}\Tr(R_\rho^3)\nonumber\\
			&= \frac{(D+1)^3 + 3(D+1)}{(D+2)^3}(D + 6 + 12p_2 + 8p_3) \nonumber\\
			&\quad- \frac{1 + 3(D+1)^2}{(D+2)^3}\left[D^2 + 2D(3 + 3p_2 + p_3) + 6(1 + p_2)\right] = \caO(D).
		\end{align}
		
		Using \eqsref{eq:F7}{eq:F9}, we have
		\begin{align}
			&\equad \bbE[\Tr(\hr_{1}\hr_{2}\hr_{3})\Tr(\hr_{2}\hr_{1}\hr_{3})]  = \Tr(\bbE[(\hr\otimes\hr')(\hr'\otimes\hr)]\bbE[\hr\otimes \hr])\nonumber\\
			&= \frac{(D+1)^4}{(D+2)^3}\Tr(R_\rho^2) + \frac{D(D+1)^3}{(D+2)^2}\Tr[(\rho\otimes\rho)R_\rho] + \frac{(D+1)^3}{(D+2)^2}(1+p_2)\Tr(R_\rho) \nonumber \\
			&\equad + \frac{3(D+1)}{(D+2)^3}\Tr(R_\rho^3 \bbW) - \frac{2(D+1)^2 + 1}{(D+2)^3}\Tr(R_\rho^3) - \frac{(D+1)^3}{(D+2)^3}\Tr(R_\rho^2 \bbW) \nonumber\\
			&\equad - \frac{D(D+1)^2}{(D+2)^3}\Tr[R_\rho(\rho\otimes \rho)\bbW] - \frac{(D+1)^2}{(D+2)^3}(1 + p_2)\Tr(R_\rho \bbW)\nonumber\\
			&= \frac{(D+1)^4}{(D+2)^3}\left(D^2 + 4D + 2Dp_2 + 2\right) + \frac{D(D+1)^3}{(D+2)^3}(1 + 2p_2) \nonumber\\
			&\equad + \frac{(D+1)^3}{(D+2)^2}(1 + p_2)(D^2 + 2Dp_2) + \frac{3(D+1)}{(D+2)^3}\left(D + 6 + 12p_2 + 8p_3\right)\nonumber\\
			&\equad  - \frac{2(D+1)^2 + 1}{(D+2)^3}\left[D^2 + 2D(3 + 3 p_2 + p_3) + 6(1 + p_2)\right]  - \frac{(D+1)^2}{(D+2)^3}(D + 4 + 4p_2) \nonumber \\
			&\equad - \frac{D(D+1)^2}{(D+2)^3}(p_2 + 2p_3) - \frac{(D+1)^2}{(D+2)^3}(1 + p_2)(D + 2) = \caO(D^3).
		\end{align}
		
	\end{enumerate}
	One can verify that in \eref{eq:varp3}, there are $\caO(M^6),\caO(M^5),\caO(M^4),\caO(M^3)$ summands of types 1,2,3,4 respectively. Hence,
	\begin{equation}
		\mathrm{Var}[\hat{p}_3] = \caO(M^{-1}) + \caO(M^{-2}D) + \caO(M^{-3}D^3).
	\end{equation}
	To guarantee $\mathrm{Var}[\hat{p}_3] = \caO(1)$, we need $M = \caO(D)$.
	
	\subsection{Variance of \texorpdfstring{$\Tr[\rho_{\rmA}\Tr_{\rmB}(\rho^2)]$}{empty} by classical shadow} 
	
	We estimate $t_2 \equiv \Tr[\rho_{\rmA}\Tr_{\rmB}(\rho^2)]$ using the classical shadow method in the following steps. Use $M$ samples to generate $M$ state estimator for $\rho$ labeled as $\hr_1, \hr_2, \ldots, \hr_M$. Then use $L$ samples to generate $L$ state estimators for $\rho_{\rmA}$ labeled as $\ha_1, \ha_2, \ldots, \ha_L$. The total sample cost is $L+M$. 
	
	Let $\caH$ denote a local Hilbert space with $\dim(\caH) = \sqrt{D}$ such that $\caH_\rmA, \caH_\rmB \cong \caH$. Introduce a permutation $\pi^* \equiv (135)(24)$, and let $\bbW_{\pi^*}$ be the corresponding permutation operator on $\caH^{\otimes 5}$. Then we have
	\begin{equation}
		\Tr[(\rho\otimes\rho\otimes\rho_{\rmA})\bbW_{\pi^*}] = \Tr[\rho_{\rmA}\Tr_{\rmB}(\rho^2)] = t_2.
	\end{equation}
	Therefore, an estimator of $t_2$ can be constructed as follows:
	\begin{equation}\label{eq:vart2}
		\hatt_2 = \frac{1}{LM(M-1)}\sum_{\ell=0}^{L-1}\sum_{i=0,j\neq i}^{M-1}\Tr\left[(\hr_i \otimes \hr_j \otimes \ha_\ell)\bbW_{\pi^*} \right] = \frac{1}{LM(M-1)}\sum_{\ell=0}^{L-1}\sum_{i=0,j\neq i}^{M-1}\Tr\left[(\hat{a}_\ell\otimes I_{\rmB}) \hr_i \hr_j\right].
	\end{equation}

	The variance of $\hatt_2$ is
	\begin{align}
		\text{Var}\left[\hatt_2\right] = \frac{1}{L^2M^2(M-1)^2}\sum_{\ell_1,\ell_2}^{L-1}\sum_{i_1 \neq j_1, i_2 \neq j_2}^{M-1}\bbE\left\{\Tr[(\hat{a}_{\ell_1}\otimes I) \hr_{i_1} \hr_{j_1}]\Tr[(\hat{a}_{\ell_2}\otimes I_{\rmB}) \hr_{i_2} \hr_{j_2}]- t_2^2\right\}.
	\end{align}
	Note that
	\begin{align}
		\sum_{\ell_1, \ell_2} \bbE\left[\hat{a}_{\ell_1}\otimes I_{\rmB}\otimes \hat{a}_{\ell_2}\otimes I_{\rmB}\right] &= L(L-1)\rho_{\rmA}\otimes I_{\rmB}\otimes\rho_{\rmA}\otimes I_{\rmB} + L\bbW_{(23)}\left(\bbE[\hat{a}\otimes\hat{a}]\otimes I_{\rmB}\otimes I_{\rmB}\right)\bbW_{(23)},\\
		\sum_{i_1\neq j_1,i_2\neq j_2}\bbE[\hr_{i_1}\hr_{j_1}\otimes \hr_{i_2}\hr_{j_2}] &= \caO(M^4)\rho^2\otimes\rho^2 + \caO(M^2)\left\{\bbE[\hr\otimes\hr]^2 + \bbE[(\hr\otimes\hr')(\hr'\otimes\hr)]\right\}\nonumber\\
		&\equad + \caO(M^3)\{\bbE[\hr\otimes\hr](\rho\otimes\rho) + (\rho\otimes\rho)\bbE[\hr\otimes\hr]\}\nonumber\\
		&\equad + \caO(M^3)\left\{(I\otimes\rho)\bbE[\hr\otimes\hr](\rho\otimes I) + (\rho\otimes I)\bbE[\hr\otimes\hr](I\otimes \rho)\right\}.
	\end{align}
	In the expansion of \eref{eq:vart2}, we have the following terms (recall that $a_n \equiv \Tr(\rho_{\rmA}^n)$).
	
	Using \eref{eq:F8}, we have
	\begin{align}
		&\equad \Tr\{(\rho_{\rmA}\otimes I_{\rmB}\otimes\rho_{\rmA}\otimes I_{\rmB})\bbE[\hr\otimes\hr]^2\} \nonumber \\
		&= \frac{(D+1)^2 + 1}{(D+2)^2}\Tr[(\rho_{\rmA}\otimes I_{\rmB}\otimes \rho_{\rmA}\otimes I_{\rmB})R_\rho^2] - \frac{2(D+1)}{(D+2)^2}\Tr[(\rho_{\rmA}\otimes I_{\rmB}\otimes \rho_{\rmA}\otimes I_{\rmB})R_\rho^2\bbW_{(13)(24)}] \nonumber \\
		&= \frac{(D+1)^2 + 1}{(D+2)^2}\left[\left(\sqrt{D} + 2a_2 + t_2\right)^2 + t_2^2 - 2(a_2 + t_2)^2\right] \nonumber \\
		&\equad - \frac{2(D+1)}{(D+2)^2}\left\{\sqrt{D}a_2 + 4a_3 + 2\Tr\left[\rho_{\rmA}^2\Tr_{\rmB}(\rho^2)\right] + 2\Tr\left[(\rho_{\rmA}\otimes I_{\rmB})\rho(\rho_{\rmA}\otimes I_{\rmB})\rho\right]\right\} = \caO(D).
	\end{align}
	
	Using \eref{eq:F7}, we have
	\begin{align}
		&\equad \Tr\{(\rho_{\rmA}\otimes I_{\rmB}\otimes\rho_{\rmA}\otimes I_{\rmB})(\rho\otimes\rho)\bbE[\hr\otimes\hr]\} \nonumber \\
		&= \frac{D+1}{D+2}\Tr[(\rho_{\rmA}\otimes I_{\rmB}\otimes\rho_{\rmA}\otimes I_{\rmB})(\rho\otimes\rho)R_\rho \bbW_{(13)(24)}]  - \frac{1}{D+2}\Tr[(\rho_{\rmA}\otimes I_{\rmB}\otimes\rho_{\rmA}\otimes I_{\rmB})(\rho\otimes\rho)R_\rho]\nonumber\\
		&= \frac{D+1}{D+2}\left\{\Tr[(\rho_{\rmA}\otimes I_{\rmB})\rho(\rho_{\rmA}\otimes I_{\rmB})\rho] + 2\Tr[(\rho_{\rmA}\otimes I_{\rmB})\rho(\rho_{\rmA}\otimes I_{\rmB})\rho^2]\right\} -\frac{1}{D+2}(a_2^2 + 2a_2t_2) = \caO(1).
	\end{align}
	
	Using \eref{eq:F9}, we have
	\begin{align}
		&\equad \Tr\{(\rho_{\rmA}\otimes I_{\rmB}\otimes\rho_{\rmA}\otimes I_{\rmB}) \bbE[(\hr\otimes\hr')(\hr'\otimes\hr)]\}\nonumber\\
		&= \frac{(D+1)^3}{(D+2)^2}\Tr\left[(\rho_{\rmA}\otimes I_{\rmB}\otimes\rho_{\rmA}\otimes I_{\rmB})R_\rho \bbW_{(13)(24)}\right] + \frac{D(D+1)^2}{(D+2)^2}\Tr[(\rho_\rmA \otimes I_\rmB \otimes \rho_\rmA \otimes I_\rmB)(\rho \otimes \rho)\bbW_{(13)(24)}] \nonumber \\
		&\equad + \frac{(D+1)^2}{(D+2)^2}(1+p_2)\Tr[(\rho_\rmA \otimes I_\rmB \otimes \rho_\rmA \otimes I_\rmB)\bbW_{(13)(24)}] \nonumber \\
		&\equad + \frac{1}{(D+2)^2}\Tr[(\rho_{\rmA}\otimes I_{\rmB}\otimes\rho_{\rmA}\otimes I_{\rmB})R_\rho^2]  - \frac{2(D+1)}{(D+2)^2}\Tr[(\rho_{\rmA}\otimes I_{\rmB}\otimes\rho_{\rmA}\otimes I_{\rmB})R_\rho^2\bbW_{(13)(24)}]\nonumber \\
		&= \frac{(D+1)^3}{(D+2)^2}\left(\sqrt{D}a_2 + 2a_3\right) + \frac{D(D+1)^2}{(D+2)^2}\Tr\{[(\rho_\rmA \otimes I_\rmB)\rho]^2\} + \frac{(D+1)^2}{(D+2)^2}(1+p_2)\sqrt{D}a_2\nonumber \\
		&\equad +\frac{1}{(D+2)^2}\left[\left(\sqrt{D} + 2a_2 + t_2\right)^2 + t_2^2 - 2(a_2 + t_2)^2\right]\nonumber\\
		&\equad - \frac{2(D+1)}{(D+2)^2}\left\{\sqrt{D}a_2 + 4a_3 + 2\Tr[\rho_{\rmA}^2\Tr_{\rmB}(\rho^2)] + 2\Tr[(\rho_{\rmA}\otimes I_{\rmB})\rho(\rho_{\rmA}\otimes I_{\rmB})\rho]\right\} = \caO(D^{\frac{3}{2}});
	\end{align}
	
	For notational simplicity, we introduce the operator
	$ R_\rmA^{(12)} \equiv R_{\rho_{\rmA}} \otimes I_{\rmB} \otimes I_{\rmB}. $
	Note that $\mathbb{E}[\hat{a} \otimes \hat{a}]$ takes a form very similar to that of $\mathbb{E}[\hat{\rho} \otimes \hat{\rho}]$, with the only difference being the Hilbert-space dimensions involved. 
	
	Using \eref{eq:F7}, we obtain
	\begin{align}
		&\equad \Tr\{\bbW_{(23)}(\bbE[\hat{a}\otimes\hat{a}]\otimes I_{\rmB}\otimes I_{\rmB})\bbW_{(23)}(\rho^2\otimes\rho^2)\} \nonumber\\
		&= \frac{\sqrt{D}+1}{\sqrt{D}+2}\Tr\left[\bbW_{(23)}(R_{\rho_{\rmA}}\bbW_{(12)}\otimes I\otimes I)\bbW_{(23)}(\rho^2\otimes \rho^2)\right]  - \frac{1}{\sqrt{D}+2}\Tr\left[\bbW_{(23)}R_{\rho_{\rmA}}^{(12)}\bbW_{(23)}(\rho^2\otimes \rho^2)\right]\nonumber\\
		&= \frac{\sqrt{D}+1}{\sqrt{D}+2}\left\{\Tr_{\rmA}[\Tr_{\rmB}(\rho^2)\Tr_{\rmB}(\rho^2)] + 2\Tr_{\rmA}\left[\rho_{\rmA}\Tr_{\rmB}(\rho^2)\Tr_{\rmB}(\rho^2)\right]\right\} -\frac{1}{\sqrt{D}+2}(p_2^2 + 2p_2t_2) = \caO(1).
	\end{align}
	
	Using \eqsref{eq:F7}{eq:F8}, we have
	\begin{align}
		&\equad \Tr\left\{\bbW_{(23)}(\bbE[\hat{a}\otimes\hat{a}]\otimes I_{\rmB}\otimes I_{\rmB})\bbW_{(23)}\bbE[\hr\otimes\hr]^2\right\}\nonumber\\
		&= \frac{(D+1)^2 + 1}{(D+2)^2}\Tr[\bbW_{(23)}(\bbE[\hat{a}\otimes\hat{a}]\otimes I_{\rmB}\otimes I_{\rmB})\bbW_{(23)} R_\rho^2] \nonumber\\
		&\equad - \frac{2(D+1)}{(D+2)^2}\Tr[\bbW_{(23)}(\bbE[\hat{a}\otimes\hat{a}]\otimes I_{\rmB}\otimes I_{\rmB})\bbW_{(23)} R_\rho^2\bbW_{(13)(24)}] \nonumber\\
		&= \frac{(D+1)^2+1}{(D+2)^2}\frac{\sqrt{D}+1}{\sqrt{D}+2}\Tr[\bbW_{(23)}R_{\rho_{\rmA}}^{(12)}\bbW_{(12)}\bbW_{(23)}R_{\rho}^2]\nonumber\\
		&\equad  + \frac{2(D+1)}{(D+2)^2}\frac{1}{\sqrt{D}+2}\Tr[\bbW_{(23)}R_{\rho_{\rmA}}^{(12)}\bbW_{(23)}R_\rho^2\bbW_{(13)(24)}]\nonumber\\
		&\equad - \frac{(D+1)^2 + 1}{(D+2)^2}\frac{1}{\sqrt{D}+1}\Tr[\bbW_{(23)}R_{\rho_{\rmA}}^{(12)}\bbW_{(23)}R_\rho^2]\nonumber\\
		&\equad - \frac{2(D+1)}{(D+2)^2}\frac{\sqrt{D}+1}{\sqrt{D}+2}\Tr[\bbW_{(23)}R_{\rho_{\rmA}}^{(12)}\bbW_{(12)}\bbW_{(23)}R^2_{\rho}\bbW_{(13)(24)}]\nonumber\\
		&=\frac{(D+1)^2+1}{(D+2)^2}\frac{\sqrt{D}+1}{\sqrt{D}+2}\left(D^{\frac{3}{2}} + 2D + 2\sqrt{D} + 4\sqrt{D}a_2\right)\nonumber\\
		&\equad + \frac{2(D+1)}{(D+2)^2}\frac{1}{\sqrt{D}+2}\left(D + 2\sqrt{D} + 4 + 4p_2 + 8a_2 + 8t_2\right)\nonumber\\
		&\equad - \frac{(D+1)^2 + 1}{(D+2)^2}\frac{1}{\sqrt{D}+1}\left(D^2 + 4D + 4\sqrt{D}a_2 + 2D t_2 + 4a_2 + 2\sqrt{D}p_2\right)\nonumber\\
		&\equad - \frac{2(D+1)}{(D+2)^2}\frac{\sqrt{D}+1}{\sqrt{D}+2}\left[D^{\frac{3}{2}} + 2D(2 + a_2 + t_2) + 2\sqrt{D}(2 + p_2) + 4a_2 + 2p_2\right] = \caO\left(D^{\frac{3}{2}}\right).
	\end{align}
	
	Using \eref{eq:F7}, we have
	\begin{align}
		&\equad \Tr\{\bbW_{(23)}(\bbE[\hat{a}\otimes\hat{a}]\otimes I_{\rmB}\otimes I_{\rmB})\bbW_{(23)}(\rho\otimes\rho)\bbE[\hr\otimes\hr]\} \nonumber\\
		&= \frac{D+1}{D+2}\frac{\sqrt{D}+1}{\sqrt{D}+2}\Tr[\bbW_{(23)}R_{\rho_{\rmA}}^{(12)}\bbW_{(12)}\bbW_{(23)}(\rho\otimes\rho)R_{\rho}\bbW_{(13)(24)}]\nonumber\\
		&\equad  + \frac{1}{D+2}\frac{1}{\sqrt{D}+2}\Tr[\bbW_{(23)}R_{\rho_{\rmA}}^{(12)}\bbW_{(23)}(\rho\otimes\rho)R_\rho]\nonumber\\
		&\equad - \frac{D+1}{D+2}\frac{1}{\sqrt{D}+2}\Tr[\bbW_{(23)}R_{\rho_{\rmA}}^{(12)}\bbW_{(23)}(\rho\otimes\rho)R_{\rho}\bbW_{(13)(24)}]\nonumber\\
		&\equad - \frac{1}{D+2}\frac{\sqrt{D}+1}{\sqrt{D}+2}\Tr[\bbW_{(23)}R_{\rho_{\rmA}}^{(12)}\bbW_{(12)}\bbW_{(23)}(\rho\otimes\rho)R_{\rho}]\nonumber\\
		&= \frac{D+1}{D+2}\frac{\sqrt{D+1}}{\sqrt{D}+2}\left\{a_2 + 2\Tr[(\rho_{\rmA}\otimes\rho_{\rmB})\rho] + 2t_2 + 2\Tr[(\rho_{\rmA}\otimes \Tr_{\rmB}(\rho^2))\rho] + 2\Tr[(\rho_{\rmA}\otimes\rho_{\rmB})\rho^2]\right\}\nonumber\\
		&\equad - \frac{D+1}{D+2}\frac{1}{\sqrt{D}+2}\left\{a_2 + 2\Tr[(\rho_{\rmA}\otimes \rho_{\rmB})\rho] + 2t_2 + 2\Tr[(\rho_{\rmA}\otimes \Tr_{\rmA}(\rho^2) + \rho_{\rmA}\otimes\rho_{\rmB})\rho]\right\}\nonumber\\
		&\equad + \frac{1}{D+2}\frac{1}{\sqrt{D}+2}\left(1 + 4p_2 + 2t_2 + 2p_2a_2\right) - \frac{1}{D+2}\frac{\sqrt{D}+1}{\sqrt{D}+2}\left\{a_2 + 2a_3 + 2t_2 + 4\Tr[\rho_{\rmA}^2\Tr_{\rmB}(\rho^2)]\right\} = \caO(1).
	\end{align}
	
	Using \eqsref{eq:F7}{eq:F9}, we have
	\begin{align}
		&\equad\: \Tr\{\bbW_{(23)}(\bbE[\hat{a}\otimes\hat{a}]\otimes I_{\rmB}\otimes I_{\rmB})\bbW_{(23)}\bbE[(\hr\otimes\hr')(\hr'\otimes\hr)]\}\nonumber\\
		&= \frac{(D+1)^3}{(D+2)^2}\frac{\sqrt{D}+1}{\sqrt{D}+2}\Tr\left[\bbW_{(23)}R_{\rho_{\rmA}}^{(12)}\bbW_{(12)}\bbW_{(23)}R_\rho \bbW_{(13)(24)}\right]\nonumber\\
		&\equad + \frac{D(D+1)^2}{(D+2)^2}\frac{\sqrt{D}+1}{\sqrt{D}+2}\Tr\left[\bbW_{(23)}R_{\rho_{\rmA}}^{(12)}\bbW_{(12)}\bbW_{(23)}(\rho \otimes \rho)\bbW_{(13)(24)}\right]\nonumber\\
		&\equad + \frac{(D+1)^2}{(D+2)^2}(1+p_2)\frac{\sqrt{D}+1}{\sqrt{D}+2}\Tr\left[\bbW_{(23)}R_{\rho_{\rmA}}^{(12)}\bbW_{(12)}\bbW_{(23)}\bbW_{(13)(24)}\right]\nonumber\\
		&\equad + \frac{1}{(D+2)^2}\frac{\sqrt{D}+1}{\sqrt{D}+2}\Tr\left[\bbW_{(23)}R_{\rho_{\rmA}}^{(12)}\bbW_{(12)}\bbW_{(23)}R_\rho^2\right]\nonumber\\
		&\equad + \frac{2(D+1)}{(D+2)^2}\frac{1}{\sqrt{D}+2}\Tr\left[\bbW_{(23)}R_{\rho_{\rmA}}^{(12)}\bbW_{(23)}R_\rho^2\bbW_{(13)(24)}\right]\nonumber\\
		&\equad - \frac{2(D+1)}{(D+2)^2}\frac{\sqrt{D}+1}{\sqrt{D}+2}\Tr\left[\bbW_{(23)}R_{\rho_{\rmA}}^{(12)}\bbW_{(12)}\bbW_{(23)}R_\rho^2\bbW_{(13)(24)}\right]\nonumber\\
		&\equad - \frac{(D+1)^3}{(D+2)^2}\frac{1}{\sqrt{D}+2}\Tr\left[\bbW_{(23)}R_{\rho_{\rmA}}^{(12)}\bbW_{(23)}R_\rho \bbW_{(13)(24)}\right]\nonumber\\
		&\equad - \frac{D(D+1)^2}{(D+2)^2}\frac{1}{\sqrt{D}+2}\Tr\left[\bbW_{(23)}R_{\rho_{\rmA}}^{(12)}\bbW_{(23)}(\rho \otimes \rho) \bbW_{(13)(24)}\right]\nonumber\\
		&\equad - \frac{(D+1)^2}{(D+2)^2}(1+p_2)\frac{1}{\sqrt{D}+2}\Tr\left[\bbW_{(23)}R_{\rho_{\rmA}}^{(12)}\bbW_{(23)} \bbW_{(13)(24)}\right]\nonumber\\
		&\equad - \frac{1}{(D+2)^2}\frac{1}{\sqrt{D}+2}\Tr\left[\bbW_{(23)}R_{\rho_{\rmA}}^{(12)}\bbW_{(23)}R_\rho^2\right].
	\end{align}
	After computation, we prove that it equals
	\begin{align}
		&\equad\: \frac{(D+1)^3}{(D+2)^2}\frac{\sqrt{D}+1}{\sqrt{D}+2}\left(D^{\frac{3}{2}} + 2D + 2\sqrt{D} + 2\sqrt{D}a_2 + 2\right) \nonumber \\
		&\equad + \frac{D(D+1)^2}{(D+2)^2}\frac{\sqrt{D}+1}{\sqrt{D}+2}\left\{a_2 + \Tr[(\rho_\rmA \otimes \rho_\rmB)\rho] + \Tr(\rho_\rmA \rho_\rmB^2)\right\}\nonumber\\
		&\equad + \frac{(D+1)^2}{(D+2)^2}(1+p_2)\frac{\sqrt{D}+1}{\sqrt{D}+2}\left(D^{\frac{3}{2}} + 2D\right)  + \frac{1}{(D+2)^2}\frac{\sqrt{D}+1}{\sqrt{D}+2}\left(D^{\frac{3}{2}} + 2D + 2\sqrt{D} + 4\sqrt{D}a_2\right)\nonumber\\
		&\equad + \frac{2(D+1)}{(D+2)^2}\frac{1}{\sqrt{D}+2}\left(D + 2\sqrt{D} + 4 + 4p_2 + 8a_2 + 8t_2\right)\nonumber\\
		&\equad -\frac{2(D+1)}{(D+2)^2}\frac{\sqrt{D}+1}{\sqrt{D}+2}\left[D^{\frac{3}{2}} + 2D(2 + a_2 + t_2) + 2\sqrt{D}(2 + p_2) + 4a_2 + 2p_2\right]\nonumber\\
		&\equad -\frac{(D+1)^3}{(D+2)^2}\frac{1}{\sqrt{D}+2}\left[D^2 + 2D^{\frac{3}{2}} + D(4 + 2p_2 + 4a_2 + 2t_2)\right]\nonumber\\
		&\equad - \frac{(D+1)^3}{(D+2)^2}\frac{1}{\sqrt{D}+2}\{\Tr[(\rho \otimes \rho)\bbW_{(1243)}] + \Tr[(\rho_\rmA \otimes I_\rmB \otimes I)(\rho\otimes \rho)(\bbW_{(1243)} + \bbW_{(1342)})]\}\nonumber \\
		&\equad - \frac{(D+1)^2}{(D+2)^2}(1+p_2)\frac{1}{\sqrt{D}+2}[D + 2\sqrt{D}] -\frac{(D+1)^3}{(D+2)^2}\frac{1}{\sqrt{D}+2}\left[\sqrt{D}(4 + 2p_2) + 2 + 4a_2\right]\nonumber\\
		&\equad - \frac{1}{(D+2)^2}\frac{1}{\sqrt{D}+2}\left(D^2 + 4D + 4\sqrt{D}a_2 + 2D t_2 + 4a_2 + 2\sqrt{D}p_2\right) = \caO\left(D^{\frac{5}{2}}\right).
	\end{align}
	Therefore,
	\begin{align}
		\text{Var}\left[\hatt_2\right] &= \frac{1}{L^2M^4}\left[\caO\left(L^2 M^2D\right) + \caO\left(L^2 M^3\right) + \caO\left(L^2 M^2 D^{\frac{3}{2}}\right) + \caO\left(LM^4\right) + \caO\left(LM^2 D^{\frac{5}{2}}\right)\right]\nonumber\\
		&= \caO\left(M^{-2}D\right) + \caO\left(M^{-1}\right) + \caO\left(M^{-2}D^{\frac{3}{2}}\right) + \caO\left(L^{-1}\right) + \caO\left(L^{-1}M^{-2}D^{\frac{5}{2}}\right).
	\end{align}
	To guarantee that $\text{Var}\left[\hatt_2\right] = \caO(1)$, we need
	\begin{equation}
		M = \caO\left(D^{\frac{3}{4}}\right),\quad LM^2 = \caO\left(D^{\frac{5}{2}}\right). 
	\end{equation}
	If we choose $L = \caO(D)$, then the total sample complexity is $L + M = \caO(D)$.

\end{document}